\newtheorem{lemma}{Lemma}
\newtheorem{theorem}{Theorem}
\newtheorem{problem}{Problem}
\newtheorem{proposition}{Proposition}
\newtheorem{fact}{Fact}
\theoremstyle{definition}
\newtheorem{remark}{Remark}
\algrenewcommand\algorithmiccomment[1]{\hfill{\color{gray}$\triangleright$~#1}}
\newcommand{\ones}{\mathds{1}}
\definecolor{c0}{HTML}{641a80}
\definecolor{c1}{HTML}{b73779}
\crefname{problem}{Problem}{Problems}
\crefname{fact}{Fact}{Facts}
\renewcommand{\vec}{\mathbf}
\newcommand{\EE}{\operatorname{\mathbb{E}}}
\newcommand{\PP}{\operatorname{\mathbb{P}}}
\newcommand{\VV}{\operatorname{\mathbb{V}}}
\newcommand{\R}{\mathbb{R}}
\newcommand{\T}{\mathsf{T}}
\newcommand{\F}{\mathsf{F}}
\newcommand{\comp}{\mathsf{c}}
\newcommand{\diag}{\operatorname{diag}}
\renewcommand{\d}{\mathrm{d}}
\newcommand{\tr}{\operatorname{tr}}
\newcommand{\circdiv}{\ooalign{\hss$\circ$\hss\cr$\div$}}
\title{Fixed-sparsity matrix approximation from \\matrix-vector products}
\author{
    Noah Amsel \\ New York University\\ \texttt{noah.amsel@nyu.edu}
	\and 
	Tyler Chen \\ New York University\\ \texttt{tyler.chen@nyu.edu}
	\and 
    Feyza Duman Keles \\ New York University\\ \texttt{fd2135@nyu.edu}
	\and 
    Diana Halikias \\ Cornell University\\ \texttt{dh736@cornell.edu}
	\and 
	Cameron Musco\\ UMass Amherst\\ \texttt{cmusco@cs.umass.edu}
	\and
	Christopher Musco\\ New York University\\ \texttt{cmusco@nyu.edu}
}
\date{}
\begin{document}

\maketitle

\begin{abstract}
    We study the problem of approximating a matrix $\vec{A}$ with a matrix that has a fixed sparsity pattern (e.g., diagonal, banded, etc.), when $\vec{A}$ is accessed only by matrix-vector products. 
    We describe a simple randomized algorithm that returns an approximation with the given sparsity pattern with Frobenius-norm error at most $(1+\varepsilon)$ times the best possible error.
    When each row of the desired sparsity pattern has at most $s$ nonzero entries, this algorithm requires  $O(s/\varepsilon)$ non-adaptive matrix-vector products with $\vec{A}$.
    We also prove a matching lower-bound, showing that, for any sparsity pattern with $\Theta(s)$ nonzeros per row and column, any algorithm achieving $(1+\epsilon)$ approximation requires $\Omega(s/\varepsilon)$ matrix-vector products in the worst case.
    We thus resolve the matrix-vector product query complexity of the problem up to constant factors, even for the well-studied case of diagonal approximation, for which no previous lower bounds were known.
\end{abstract}

\section{Introduction}

Learning about a matrix $\vec{A}$ from matrix-vector product (matvec) queries\footnote{In some settings, it may also be possible to do matrix-transpose-vector queries $\vec{y}\mapsto \vec{A}^\T\vec{y}$. It is an open question to understand when matrix-transpose-vector queries may be beneficial \cite{boulle_halikias_otto_townsend_24}.}  
$\vec{x}_1, \ldots, \vec{x}_m \mapsto \vec{A}\vec{x}_1, \ldots, \vec{A}\vec{x}_m$
is a widespread task in numerical linear algebra, theoretical computer science, and machine learning. Algorithms that only access $\vec A$ through matvec queries (also known as \emph{matrix-free} algorithms) are useful when $\vec{A}$ is not known explicitly, but admits efficient matvecs. Common settings include when $\vec A$ is a solution operator for a linear partial differential equation~\cite{karniadakis_kevrekidis_lu_perdikaris_wang_yang_21, schafer_owhadi_21, boulle_earls_townsend_22}, 
a function of another matrix that can be applied with Krylov subspace methods \cite{gallopoulos_saad_92,higham_08}, the Hessian of a neural network that can be applied via back-propagation \cite{pearlmutter_94}, or a data matrix in compressed sensing applications \cite{kalender_kalender_11}.
Matrix-free algorithms are often the methods of choice due to practical considerations such as memory usage and data movement. If the queries are chosen \emph{non-adaptively} --- i.e, the $i$-th query vector $\vec{x}_i$ does not depend on the results $\vec{Ax}_1,\ldots,\vec{Ax}_{i-1}$ from previous queries ---  queries can be  parallelized, potentially leading to substantial runtime improvements \cite{murray_demmel_mahoney_erichson_melnichenko_malik_grigori_luszczek_derezinski_lopes_liang_luo_dongarra_23}. Non-adaptive queries can also be computed in a single pass over $\vec A$, and thus, non-adaptive matvec query algorithms are an important tool in streaming and distributed computation \cite{clarkson_woodruff_09,martinsson_tropp_20}.

In many cases, the cost of matrix-free algorithms is dominated by the cost of the matvec queries.
As such, a key goal is to understand the minimum number of queries required to solve a given problem, also known as the query complexity. Any algorithm automatically yields an upper bound on the query complexity, whereas it can be more challenging to prove  lower bounds.
Problems for which the matvec query complexity have been extensively studied include 
low-rank approximation \cite{halko_martinsson_tropp_11,simchowitz_elalaoui_recht_18,tropp_webber_23,bakshi_clarkson_woodruff_22,bakshi_narayanan_23, halikias_townsend_23}, 
spectrum approximation \cite{skilling_89,weisse_wellein_alvermann_fehske_06,swartworth_woodruff_23},
trace and diagonal estimation \cite{girard_87,hutchinson_89,bekas_kokiopoulou_saad_07,meyer_musco_musco_woodruff_21}, the approximation of linear system solutions and the action of matrix functions \cite{gallopoulos_saad_92,greenbaum_97,braverman_hazan_simchowitz_woodworth_20}, and matrix property testing \cite{sun_woodruff_yang_zhang_21,needell_swartworth_woodruff_22}.

An important class of problems considers recovering $\vec A$ from matvec queries when $\vec A$ is  structured, or relatedly, finding the nearest approximation to $\vec A$ within a structured class of matrices. Example classes that have been studied extensively in this setting include low-rank matrices \cite{halko_martinsson_tropp_11, tropp_webber_23}, hierarchical low-rank matrices \cite{lin_lu_ying_11, martinsson_16, levitt_martinsson_22,levitt_martinsson_22a, schafer_owhadi_21, halikias_townsend_23}, diagonal matrices \cite{bekas_kokiopoulou_saad_07,tang_saad_11,baston_nakatsukasa_22,dharangutte_musco_23}, sparse matrices \cite{curtis_powell_reid_74,coleman_more_83,coleman_cai_86, wimalajeewa_eldar_varshney_13, dasarathy_shah_bhaskar_nowak_15}, and beyond \cite{waters_sankaranarayanan_baraniuk_11, schafer_katzfuss_owhadi_21}.

\subsection{Fixed-sparsity matrix approximation}

In this work, we focus on the task of approximating $\vec{A}$ with a matrix of a \emph{specified} sparsity pattern, with error competitive with the best approximation of the given sparsity pattern. 
This is a natural task; indeed, there are many existing linear algebra algorithms for matrices of a given sparsity pattern (e.g. diagonal, tridiagonal, banded, block diagonal, etc.), so it is a common goal to obtain an approximation compatible with such algorithms. 
As we discuss in \cref{sec:past}, several important special cases including diagonal approximation and exact recovery of matrices with known-sparsity have been studied extensively in prior work.

Formally, using ``$\,\circ\,$'' to indicate the Hadamard (entrywise) product and $\| \cdot \|_\F$ to denote the Frobenius norm, we consider the following problem: 
\begin{problem}[Best approximation by a matrix of fixed sparsity]\label{prob:recovery_full}
    Given a matrix $\vec{A} \in\mathbb{R}^{n\times d}$ and a binary matrix $\vec{S} \in \{0,1\}^{n\times d}$, find a matrix $\widetilde{\vec{A}}$ so that $\widetilde{\vec{A}} = \vec{S}\circ\widetilde{\vec{A}}$ and
    \[
    \| \vec{A} - \widetilde{\vec{A}} \|_\F
    \leq (1+\varepsilon) \, \| \vec{A} - \vec{S}\circ \vec{A} \|_\F.
    \]
\end{problem}
Observe that $\vec{S}\circ \vec{A}$ is the matrix of sparsity $\vec{S}$ nearest to $\vec{A}$ in the Frobenius norm:
\[
\vec{S}\circ\vec{A}
= \operatornamewithlimits{argmin}_{\vec{X} = \vec{S}\circ\vec{X}} \| \vec{A} - \vec{X} \|_\F.
\]
Hence, \cref{prob:recovery_full} is asking for a \emph{near-optimal} approximation to $\vec{A}$ of the given sparsity $\vec{S}$.\footnote{This is reminiscent of the low-rank approximation problem, in which we aim to find a rank-$k$ approximation to $\vec{A}$ competitive with the best rank-$k$ approximation \cite{halko_martinsson_tropp_11,tropp_webber_23}.}
In particular, if $\vec A$ already has sparsity pattern $\vec{S}$, then $\vec{A} = \vec{S}\circ \vec{A}$, so solving \cref{prob:recovery_full} will recover $\vec{A}$ exactly. 

In addressing \cref{prob:recovery_full}, it will also be beneficial to consider the closely related problem of recovering the ``sparse-part'' of a matrix:
\begin{problem}[Best approximation to on-sparsity-pattern entries]\label{prob:recovery_sparse}
Given a matrix $\vec{A} \in\mathbb{R}^{n\times d}$ and a binary matrix $\vec{S} \in \{0,1\}^{n\times d}$, find a matrix $\widetilde{\vec{A}}$ so that $\widetilde{\vec{A}} = \vec{S}\circ\widetilde{\vec{A}}$ and
    \[
    \| \vec{S}\circ \vec{A} - \widetilde{\vec{A}} \|_\F^2
    \leq \varepsilon \,\| \vec{A} - \vec{S}\circ \vec{A} \|_\F^2.
    \]
\end{problem}
Note the presence of the squared norms in \cref{prob:recovery_sparse}. 
Since $\widetilde{\vec{A}}$ has the same sparsity as $\vec{S}$; i.e. $\widetilde{\vec{A}} = \vec{S}\circ\widetilde{\vec{A}}$,
\begin{equation}
\label{eqn:AAtilde-decomp}
\| \vec{A} - \widetilde{\vec{A}} \|_\F^2
= \| \vec{A} - \vec{S}\circ \vec{A} \|_\F^2 + \|\vec{S}\circ\vec{A} - \widetilde{\vec{A}}\|_\F^2.
\end{equation}
Thus, using the fact that $\sqrt{1+2\varepsilon} < 1+\varepsilon$ for all $\varepsilon>0$, a solution to \cref{prob:recovery_sparse} with accuracy ${2\varepsilon}$ immediately yields a solution to \cref{prob:recovery_full} with accuracy $\varepsilon$. 
Conversely, if $\varepsilon\in(0,1)$ so that $\sqrt{1+3\varepsilon} \geq 1+\varepsilon$, then a solution to \cref{prob:recovery_full} with accuracy $\varepsilon$ yields a solution to \cref{prob:recovery_sparse} with accuracy ${3\varepsilon}$.
In this sense, the problems are equivalent.

\subsection{Our Contributions and Roadmap}

Our first contribution is to analyze a simple algorithm (\cref{alg:main}) that solves \cref{prob:recovery_full,prob:recovery_sparse}. When the sparsity pattern $\vec S$ has at most $s$ non-zero entries per row, this algorithm uses $m = O(s/\varepsilon)$ non-adaptive matrix-vector product queries.
Specifically, the algorithm computes $\vec{Z} = \vec{A}\vec{G}$, where $\vec{G}$ is a $d\times m$ matrix with independent standard normal entries, and then outputs the matrix 
\begin{equation}\label{eqn:algmain}
    \widetilde{\vec{A}} = \operatornamewithlimits{argmin}_{\vec{X} = \vec{S}\circ\vec{X}} \| \vec{Z} - \vec{X} \vec{G}\|_\F.
\end{equation}
We make no claims about the novelty of the algorithm, which follows immediately from ideas in compressed sensing.
In~\cref{sec:alg}, using standard tools from random matrix theory and high dimensional probability, we provide an analysis of \cref{alg:main} and prove the following:
\begin{restatable}{theorem}{ubmain}
\label{thm:ub_main}
Consider any $\vec{A} \in\mathbb{R}^{n\times d}$ and any $\vec{S}\in\{0,1\}^{n\times d}$ with at most $s$ nonzero entries per row. 
Then, for any $m\geq s+2$, using $m$ randomized matrix-vector queries, \cref{alg:main} returns a matrix $\widetilde{\vec{A}}$, equal to $\vec{S}\circ\vec{A}$ in expectation, satisfying
\[
\EE\Bigl[\|\vec{S}\circ\vec{A} - \widetilde{\vec{A}}\|_\F^2 \Bigr]
\leq {\frac{s}{m-s-1}} \| \vec{A} - \vec{S}\circ \vec{A} \|_\F^2.
\]
The above inequality is equality if each row of $\vec{S}$ has exactly $s$ non-zero entries.\end{restatable}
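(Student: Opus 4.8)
The plan is to exploit the fact that the optimization defining $\widetilde{\vec{A}}$ in~\eqref{eqn:algmain} decouples over the rows of $\vec{X}$. Write $\mathcal{S}_i = \{j : \vec{S}_{ij} = 1\}$ for the support of the $i$-th row, $s_i = |\mathcal{S}_i| \le s$, and let $\vec{a}_i \in \R^d$ and $\vec{z}_i \in \R^m$ be the $i$-th rows of $\vec{A}$ and $\vec{Z} = \vec{A}\vec{G}$ written as column vectors, so $\vec{z}_i = \vec{G}^\T\vec{a}_i$. Since $\|\vec{Z} - \vec{X}\vec{G}\|_\F^2 = \sum_i \|\vec{z}_i - \vec{G}^\T\vec{x}_i\|_2^2$, where $\vec{x}_i$ denotes the $i$-th row of $\vec{X}$ (subject to being supported on $\mathcal{S}_i$), the $i$-th row of $\widetilde{\vec{A}}$ is obtained by solving the least-squares problem $\min_{\vec{v}\in\R^{s_i}} \|\vec{z}_i - \vec{G}_{\mathcal{S}_i}^\T\vec{v}\|_2$, where $\vec{G}_{\mathcal{S}_i}$ is the $s_i\times m$ submatrix of $\vec{G}$ with rows indexed by $\mathcal{S}_i$. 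Because $m \ge s+2 > s_i$, the matrix $\vec{G}_{\mathcal{S}_i}$ almost surely has full row rank, so the unique minimizer is $\vec{v}_i = (\vec{G}_{\mathcal{S}_i}\vec{G}_{\mathcal{S}_i}^\T)^{-1}\vec{G}_{\mathcal{S}_i}\vec{z}_i$.

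Next I would split $\vec{a}_i = \vec{a}_i^{\mathrm{on}} + \vec{a}_i^{\mathrm{off}}$ into its restrictions to $\mathcal{S}_i$ and to its complement $\mathcal{S}_i^{\comp}$; here $\vec{a}_i^{\mathrm{on}}$ is precisely the nonzero part of the $i$-th row of $\vec{S}\circ\vec{A}$. Substituting $\vec{z}_i = \vec{G}_{\mathcal{S}_i}^\T\vec{a}_i^{\mathrm{on}} + \vec{G}_{\mathcal{S}_i^{\comp}}^\T\vec{a}_i^{\mathrm{off}}$ and simplifying, the error in the $i$-th row of $\widetilde{\vec{A}}$ is
\[
\vec{e}_i \;:=\; \vec{v}_i - \vec{a}_i^{\mathrm{on}} \;=\; (\vec{G}_{\mathcal{S}_i}\vec{G}_{\mathcal{S}_i}^\T)^{-1}\vec{G}_{\mathcal{S}_i}\vec{G}_{\mathcal{S}_i^{\comp}}^\T\,\vec{a}_i^{\mathrm{off}}.
\]
The crucial structural point is that $\vec{G}_{\mathcal{S}_i}$ and $\vec{G}_{\mathcal{S}_i^{\comp}}$ are built from disjoint, hence independent, entries of $\vec{G}$. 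Conditioning on $\vec{G}_{\mathcal{S}_i}$, the vector $\vec{w} := \vec{G}_{\mathcal{S}_i^{\comp}}^\T\vec{a}_i^{\mathrm{off}}$ is distributed as $\mathcal{N}(\vec{0},\, \|\vec{a}_i^{\mathrm{off}}\|_2^2\,\vec{I}_m)$. Hence $\EE[\vec{e}_i \mid \vec{G}_{\mathcal{S}_i}] = \vec{0}$, which after a routine integrability check (using the second-moment bound below) gives $\EE[\widetilde{\vec{A}}] = \vec{S}\circ\vec{A}$.

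For the second moment, I condition on $\vec{G}_{\mathcal{S}_i}$ and use $\EE[\vec{w}\vec{w}^\T\mid\vec{G}_{\mathcal{S}_i}] = \|\vec{a}_i^{\mathrm{off}}\|_2^2\,\vec{I}_m$ together with the cyclic property of the trace to get
\[
\EE\bigl[\|\vec{e}_i\|_2^2 \bigm| \vec{G}_{\mathcal{S}_i}\bigr] \;=\; \|\vec{a}_i^{\mathrm{off}}\|_2^2 \, \tr\!\bigl((\vec{G}_{\mathcal{S}_i}\vec{G}_{\mathcal{S}_i}^\T)^{-1}\bigr).
\]
Now $\vec{G}_{\mathcal{S}_i}\vec{G}_{\mathcal{S}_i}^\T = \sum_{k=1}^m \vec{g}_k\vec{g}_k^\T$ with $\vec{g}_k \sim \mathcal{N}(\vec{0},\vec{I}_{s_i})$ i.i.d.\ (the columns of $\vec{G}_{\mathcal{S}_i}$), i.e.\ a standard Wishart matrix with $m$ degrees of freedom in dimension $s_i$; its inverse has expectation $\frac{1}{m-s_i-1}\vec{I}_{s_i}$ whenever $m \ge s_i+2$. (Alternatively, a Schur-complement identity shows each diagonal entry of $(\vec{G}_{\mathcal{S}_i}\vec{G}_{\mathcal{S}_i}^\T)^{-1}$ is the reciprocal of a $\chi^2_{m-s_i+1}$ variable, whose mean is $\frac{1}{m-s_i-1}$, so $\EE[\tr((\vec{G}_{\mathcal{S}_i}\vec{G}_{\mathcal{S}_i}^\T)^{-1})] = \frac{s_i}{m-s_i-1}$ directly.) Taking the outer expectation yields $\EE[\|\vec{e}_i\|_2^2] = \frac{s_i}{m-s_i-1}\|\vec{a}_i^{\mathrm{off}}\|_2^2$.

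Finally, summing over $i$, and using $\|\vec{S}\circ\vec{A} - \widetilde{\vec{A}}\|_\F^2 = \sum_i \|\vec{e}_i\|_2^2$ and $\sum_i \|\vec{a}_i^{\mathrm{off}}\|_2^2 = \|\vec{A}-\vec{S}\circ\vec{A}\|_\F^2$, gives $\EE[\|\vec{S}\circ\vec{A} - \widetilde{\vec{A}}\|_\F^2] = \sum_i \frac{s_i}{m-s_i-1}\|\vec{a}_i^{\mathrm{off}}\|_2^2$. Since $t\mapsto \frac{t}{m-t-1}$ is increasing on $[0,m-1)$ and every $s_i \le s$, each coefficient is at most $\frac{s}{m-s-1}$, with equality exactly when $s_i = s$ --- giving both the inequality and its equality case. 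I expect the only non-routine step to be the evaluation of $\EE[\tr((\vec{G}_{\mathcal{S}_i}\vec{G}_{\mathcal{S}_i}^\T)^{-1})]$; this can be handled either by invoking the standard formula for the mean of an inverse-Wishart matrix or via the Schur-complement computation sketched above.
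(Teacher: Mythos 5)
Your proposal is correct and follows essentially the same route as the paper: decouple over rows, condition on the on-support Gaussian block $\vec{G}_{\mathcal{S}_i}$, observe that the off-support contribution is an independent Gaussian vector with covariance $\|\vec{a}_i^{\mathrm{off}}\|_2^2\vec{I}_m$, reduce the conditional second moment to $\tr\bigl((\vec{G}_{\mathcal{S}_i}\vec{G}_{\mathcal{S}_i}^\T)^{-1}\bigr) = \|\vec{G}_i^\dagger\|_\F^2$, and invoke the mean of the inverse Wishart. The only cosmetic differences are that you make the cyclic-trace calculation explicit rather than citing the Gaussian Frobenius-product identity as a black box, you flag the integrability check behind $\EE[\widetilde{\vec{A}}]=\vec{S}\circ\vec{A}$, and you sketch a Schur-complement/$\chi^2$ alternative to the inverse-Wishart citation --- none of which changes the substance of the argument.
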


Owing to \cref{eqn:AAtilde-decomp} and Jensen's inequality, \cref{thm:ub_main} also gives an expectation bound for \cref{prob:recovery_full}.
Setting $m = O(s/\varepsilon)$ implies that \cref{prob:recovery_full,prob:recovery_sparse} are solved in expectation. 
Using Markov's inequality, we derive a probability bound:
\begin{restatable}{corollary}{ubmainprob}
\label{thm:ub_main_prob}
In the setting of \cref{thm:ub_main}, for any $\varepsilon>0$ and $\delta \in (0,1)$, if $m\geq s+2$ then
\[
m\geq s \left(\frac{1}{2\delta \varepsilon} +1 \right) + 1
\quad\Longrightarrow\quad
\PP\Bigl[\|\vec{A} - \widetilde{\vec{A}}\|_\F \geq (1+\varepsilon) \, \| \vec{A} - \vec{S}\circ \vec{A} \|_\F \Bigr] 
\leq \delta.
\]
\end{restatable}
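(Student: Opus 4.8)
The plan is to deduce \cref{thm:ub_main_prob} from \cref{thm:ub_main} by a single application of Markov's inequality, using the Pythagorean identity \cref{eqn:AAtilde-decomp} to convert the $(1+\varepsilon)$-relative-error event of \cref{prob:recovery_full} into an event about the squared on-sparsity error $\|\vec{S}\circ\vec{A}-\widetilde{\vec{A}}\|_\F^2$, which is exactly the quantity \cref{thm:ub_main} controls in expectation. First I would dispose of the degenerate case $\vec{A}=\vec{S}\circ\vec{A}$: there the bound in \cref{thm:ub_main} is zero, so $\widetilde{\vec{A}}=\vec{A}$ almost surely; otherwise $\|\vec{A}-\vec{S}\circ\vec{A}\|_\F>0$ and dividing by it is legitimate.

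Next I would rewrite the failure event. Since $\widetilde{\vec{A}}=\vec{S}\circ\widetilde{\vec{A}}$, \cref{eqn:AAtilde-decomp} gives
\[
\|\vec{A}-\widetilde{\vec{A}}\|_\F^2 = \|\vec{A}-\vec{S}\circ\vec{A}\|_\F^2 + \|\vec{S}\circ\vec{A}-\widetilde{\vec{A}}\|_\F^2,
\]
so $\|\vec{A}-\widetilde{\vec{A}}\|_\F \geq (1+\varepsilon)\|\vec{A}-\vec{S}\circ\vec{A}\|_\F$ is equivalent to $\|\vec{S}\circ\vec{A}-\widetilde{\vec{A}}\|_\F^2 \geq \bigl((1+\varepsilon)^2-1\bigr)\|\vec{A}-\vec{S}\circ\vec{A}\|_\F^2$. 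Using $(1+\varepsilon)^2-1 = 2\varepsilon+\varepsilon^2 \geq 2\varepsilon$, this event is contained in $\bigl\{\|\vec{S}\circ\vec{A}-\widetilde{\vec{A}}\|_\F^2 \geq 2\varepsilon\,\|\vec{A}-\vec{S}\circ\vec{A}\|_\F^2\bigr\}$. Applying Markov's inequality and then \cref{thm:ub_main} (valid since $m\geq s+2$ makes the denominator $m-s-1\geq 1>0$),
\begin{gather*}
\PP\Bigl[\|\vec{A}-\widetilde{\vec{A}}\|_\F \geq (1+\varepsilon)\|\vec{A}-\vec{S}\circ\vec{A}\|_\F\Bigr] \\
\leq \frac{\EE\bigl[\|\vec{S}\circ\vec{A}-\widetilde{\vec{A}}\|_\F^2\bigr]}{2\varepsilon\,\|\vec{A}-\vec{S}\circ\vec{A}\|_\F^2} \leq \frac{s}{2\varepsilon\,(m-s-1)}.
\end{gather*}

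Finally I would require $\tfrac{s}{2\varepsilon(m-s-1)}\leq\delta$ and rearrange, obtaining $m\geq s\bigl(\tfrac{1}{2\delta\varepsilon}+1\bigr)+1$ — exactly the stated hypothesis — which completes the proof. I do not expect any genuine obstacle here: the corollary is a routine tail bound, and the only mildly delicate points are the degenerate case above and the observation that squaring the $(1+\varepsilon)$ factor produces a factor $\geq 2\varepsilon$ on the squared error (the same slack already used in the excerpt when relating \cref{prob:recovery_full} to \cref{prob:recovery_sparse}).
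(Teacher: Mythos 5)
Your proof is correct and follows essentially the same route as the paper: apply the Pythagorean identity \cref{eqn:AAtilde-decomp} to reduce the $(1+\varepsilon)$-relative-error event to the event $\|\vec{S}\circ\vec{A}-\widetilde{\vec{A}}\|_\F^2 \geq 2\varepsilon\|\vec{A}-\vec{S}\circ\vec{A}\|_\F^2$, then apply Markov's inequality with the expectation bound from \cref{thm:ub_main} and rearrange. The only (cosmetic) differences are that you write the slack as $(1+\varepsilon)^2-1\geq 2\varepsilon$ where the paper writes $\sqrt{1+2\varepsilon}\leq 1+\varepsilon$, and you explicitly dispose of the degenerate case $\vec{A}=\vec{S}\circ\vec{A}$ (where the Markov denominator vanishes), a small point the paper leaves implicit.
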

Hence, using $O(s/\varepsilon)$ \cref{alg:main} solves \cref{prob:recovery_full} except with small constant probability, say $\le 1/100$.

We proceed, in \cref{sec:lower_bounds}, to study lower bounds for the matvec query complexity of \cref{prob:recovery_full,prob:recovery_sparse}.
We show that up to constant factors, the upper bound in \cref{thm:ub_main_prob} is optimal, even for the stronger class of adaptive matvec query algorithms:
\begin{restatable}{theorem}{lbmain}
\label{thm:lb_main}
Fix $\gamma\in(0,1)$.
Then there exist constants $c,C>0$ (depending only on $\gamma$) such that the following holds:

For any $\varepsilon\in(0,c)$ and integer $s\geq 1$, there is a distribution on (symmetric) matrices $\vec{A} \in \R^{d \times d}$ such that, for any sparsity pattern $\vec{S}$ whose rows and columns each have between $\gamma s$ and $s$ nonzero entries, and for any (possibly randomized) algorithm that uses $m < Cs/\varepsilon$ (possibly adaptive) matrix-vector queries to $\vec{A}$ to output $\widetilde{\vec{A}}$ with $\widetilde{\vec{A}} \circ \vec S = \widetilde{\vec{A}}$,
\[
    \PP\Bigl[ \| \vec{A} - \widetilde{\vec{A}} \|_\F
    \leq (1+\varepsilon) \,\| \vec{A} - \vec{S}\circ \vec{A} \|_\F \Bigr] \leq \frac{1}{25}.
    \]
\end{restatable}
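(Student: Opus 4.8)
The plan is to use an information-theoretic / statistical argument in the style of lower bounds for adaptive matvec algorithms (e.g. the ``Wishart-type'' or Gaussian planted-model arguments in \cite{simchowitz_elalaoui_recht_18, braverman_hazan_simchowitz_woodworth_20}). First I would reduce to a clean hard instance. The key observation is that the near-optimal approximation $\widetilde{\vec A}$ in Problem \ref{prob:recovery_full}, by the decomposition \eqref{eqn:AAtilde-decomp}, must essentially recover the on-pattern part $\vec S \circ \vec A$ to additive error $O(\varepsilon)\|\vec A - \vec S\circ\vec A\|_\F^2$. So the adversary should choose $\vec A$ so that (i) the on-pattern entries $\vec S \circ \vec A$ carry a nontrivial ``signal'' that is hard to pin down, while (ii) the off-pattern entries $\vec A - \vec S\circ\vec A$ constitute a large ``noise floor'' of size $\approx \|\vec A - \vec S\circ \vec A\|_\F^2$ that cannot be reduced by the queries. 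A natural choice: let $\vec A = \vec N$ where $\vec N$ is a symmetric Gaussian-type random matrix (scaled so that the off-pattern part has Frobenius norm concentrated around some value $\sigma$), and note that since the queries $\vec A \vec x_i$ reveal only $m$ linear functionals of $\vec A$, after conditioning on the transcript the posterior on the unqueried part of $\vec A$ — in particular on a large fraction of the on-pattern entries — remains a nondegenerate Gaussian with variance bounded below.

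The core steps, in order, are: (1) Fix a sparsity pattern $\vec S$ with between $\gamma s$ and $s$ nonzeros per row and column; since $\vec S$ has $\Theta(sd)$ nonzero positions, and $m < Cs/\varepsilon$ queries reveal only $m$ linear measurements of $\vec A$ (or, with adaptivity, $m$ measurements whose directions may depend on earlier answers), a rotational-invariance argument shows that conditioned on the full transcript there remain $\Omega(sd)$ ``free'' on-pattern coordinates, each of which has conditional variance at least a constant fraction of its prior variance. (2) Quantify this: using that for an adaptive algorithm the transcript is a function of at most $m$ inner products $\langle \vec A, \vec x_i \vec x_i^\T\rangle$-type quantities — more precisely, using the standard fact that $m$ adaptive linear queries against a rotationally-invariant Gaussian ensemble can ``explain'' at most an $m$-dimensional subspace of directions — the expected squared error in reconstructing $\vec S\circ\vec A$ is at least $(\text{number of free coordinates}) \times (\text{residual variance}) \gtrsim (sd - O(md)) \cdot (\text{per-entry scale})$. (3) Compare this to $\varepsilon \|\vec A - \vec S\circ\vec A\|_\F^2 \approx \varepsilon \sigma^2 \approx \varepsilon \cdot \Theta(d^2)\cdot(\text{per-entry scale})$; choosing the relative scaling of on-pattern vs off-pattern entries appropriately (on-pattern entries a factor $\Theta(s/d)$ smaller in variance, say), the error lower bound becomes $\gtrsim (s - Cm)\cdot(\text{stuff})$ which exceeds $\varepsilon \cdot (\text{same stuff})$ precisely when $m < Cs/\varepsilon$ for a suitable constant $C$. (4) Turn the expectation bound into the stated probability bound via a Markov/Paley--Zygmund argument and concentration of $\|\vec A - \vec S\circ\vec A\|_\F^2$ around $\sigma^2$, absorbing the $1/25$ into the constants $c, C$.

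The main obstacle I anticipate is handling \emph{adaptivity} cleanly: for non-adaptive queries the posterior on $\vec A$ given the transcript is exactly Gaussian conditioned on a fixed $m$-dimensional linear constraint, and the ``free coordinates'' count is immediate; but with adaptive queries the query directions $\vec x_i$ depend on previous answers, so one must argue that even so, the information gained is captured by an (at most) $m$-dimensional random subspace. The standard fix is a ``deferred decisions'' or rotational-invariance coupling: since the prior on $\vec A$ is orthogonally invariant in a suitable sense, one can reveal the queried subspace one direction at a time and show the conditional law of the residual $\vec A$ restricted to the orthogonal complement is again of the same form with dimension reduced by one; this is exactly the mechanism used in \cite{simchowitz_elalaoui_recht_18}. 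A secondary technical point is ensuring the sparsity pattern $\vec S$, which is adversarial (worst-case over all patterns with the given degree bounds) rather than chosen by us, does not interact badly with the Gaussian ensemble — but because the Gaussian ensemble is exchangeable over coordinates and $\vec S$ has $\Theta(s)$ nonzeros per row/column, the counting of free on-pattern coordinates goes through for every such $\vec S$ with the same constants, which is what lets us state the theorem uniformly over $\vec S$.
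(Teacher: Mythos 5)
Your high-level plan — a rotationally invariant Gaussian-type hard instance, a deferred-decisions argument for adaptivity in the spirit of \cite{simchowitz_elalaoui_recht_18,braverman_hazan_simchowitz_woodworth_20}, and a comparison of the residual on-pattern error against the off-pattern Frobenius mass — is the right framework and matches the paper's strategy, which uses $\vec{A}=\vec{G}^\T\vec{G}$ with $\vec{G}\sim\operatorname{Gaussian}(d)$ and the Wishart conditional-distribution lemma (\cref{thm:wishart}). However, there are two concrete problems in your step (3) that would derail the argument.

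First, your proposal to scale the on-pattern entries to have variance a factor $\Theta(s/d)$ \emph{smaller} than the off-pattern entries is in direct tension with the adaptivity argument you invoke in step (1). The deferred-decisions mechanism requires that the prior on $\vec{A}$ be orthogonally invariant as a distribution (so that, after conditioning on a query direction, the residual is again of the same form in one lower dimension). A symmetric Gaussian whose entries are scaled differently on and off a fixed sparsity pattern is \emph{not} orthogonally invariant, so the ``explain-at-most-an-$m$-dimensional-subspace'' step breaks. The paper sidesteps this entirely: all entries of the Wishart $\vec{G}^\T\vec{G}$ have comparable variance $\Theta(r)$, the distribution is orthogonally invariant, and the needed ratio between signal and noise is achieved not by rescaling entries but by choosing the ambient dimension $d=\Theta(s/\varepsilon)$ so that the $\Theta(sd)$ on-pattern entries each contribute $\Omega(r-m)$ conditional error while the off-pattern mass is only $O(d^3)$.

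Second, even setting aside the invariance issue, the arithmetic does not close. With on-pattern variance $\Theta(s/d)$, the total on-pattern error is $\Theta(sd\cdot s/d)=\Theta(s^2)$ and the off-pattern mass is $\Theta(d^2)$, so the error exceeds $\varepsilon$ times the mass only when $d\lesssim s/\sqrt{\varepsilon}$, giving an $\Omega(s/\sqrt{\varepsilon})$ lower bound rather than $\Omega(s/\varepsilon)$. Likewise, your ``$(s-Cm)$ free coordinates'' count does not correspond to how the queries actually degrade the estimate: $m$ adaptive queries against the Wishart prior reduce the per-entry conditional variance from $\Theta(r)$ to $\Theta(r-m)$ but leave essentially all $\Theta(sd)$ on-pattern coordinates ``free'' (as long as $m\le c_1 d$, so that most rows of the residual orthonormal factor $\vec{V}$ retain $\Omega(1)$ norm — this is \cref{thm:high-overlap-sparsity} plus the $|I|\ge 2d/(2+\gamma)$ step). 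Finally, the step you call ``conditional variance bounded below'' is the technical heart and you leave it unargued; the paper proves it via an explicit Berry--Esseen anti-concentration bound (\cref{thm:sum-anticoncentration,thm:gausian-inner-prod}) applied to the bilinear forms $(\widehat{\vec{G}}\vec{v}_i)^\T(\widehat{\vec{G}}\vec{v}_j)$, since the conditional law of each on-pattern entry is not itself Gaussian.
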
 
Thus, $\Omega(s/\varepsilon)$ queries are required to solve \cref{prob:recovery_full} with any reasonable probability.
Note that since solving \cref{prob:recovery_full} gives a solution to \cref{prob:recovery_sparse}, the lower bound \cref{thm:lb_main} also implies that $\Omega(s/\varepsilon)$ queries are required to solve \cref{prob:recovery_sparse}. 
Our approach uses an invariance property of Wishart matrices after a sequence of adaptive queries \cite{braverman_hazan_simchowitz_woodworth_20}.
Note that since our hard instance is symmetric, \cref{thm:lb_main} also holds against algorithms which are allowed to use matvec queries with $\vec{A}^\T$.

Our lower bound applies even to the well-studied case of best approximation by a diagonal matrix, and more broadly, to approximation by a banded matrix. 
To the best of our knowledge, our lower bounds are the first (adaptive or non-adaptive) for \cref{prob:recovery_full,prob:recovery_sparse} even for these special cases.

In \cref{sec:coloring}, we compare our algorithm to widely-used coloring-based methods for fixed-sparsity pattern matrix approximation \cite[etc.]{curtis_powell_reid_74,coleman_more_83,coleman_cai_86}.
We show that there are situations where coloring methods perform worse than the algorithm described in this paper by a quadratic factor or more. Specifically, even for matrices with $\leq s$ non-zeros per row \emph{and} column, they can require $O(s^2)$ instead of $O(s)$ matvec queries.   
We also discuss a setting in which coloring methods can outperform the algorithm from this paper.

Finally, in \cref{sec:numerical} we present several numerical experiments on test problems to illustrate the sharpness of our upper bound, and in \cref{sec:outlook} we discuss the potential for future work, including the potential for algorithms which combine the algorithm in this paper with coloring-algorithms to obtain more robustness to noise.

\subsection{Past work}\label{sec:past}

To the best of our knowledge, \cref{prob:recovery_full,prob:recovery_sparse} have not been previously stated explicitly in the given generality. However, there is a range of past work that studies special cases of these problems. 
We categorize these works into the zero error case, where $\vec{A} = \vec{S}\circ\vec{A}$, and the nonzero error case, where $\vec{A}\neq \vec{S}\circ\vec{A}$.
In addition, we discuss how \cref{prob:recovery_full,prob:recovery_sparse} differ from the standard sparse-recovery problem in compressed sensing. 
\subsubsection{Zero error}

Some of the earliest work relating to \cref{prob:recovery_full,prob:recovery_sparse} seeks to recover Jacobian and Hessian matrices with a known sparsity pattern from matvecs \cite{curtis_powell_reid_74,coleman_more_83,coleman_cai_86}.
These methods make use of the fact that a \emph{graph coloring} of a particular graph, induced by the sparsity pattern of $\vec{A}$, can be used to obtain a set of query vectors which are sufficient to exactly recover $\vec{A}$. In many (but not all) cases, exact recovery is possible with $s$ queries.
Coloring methods have also influenced many algorithms for the nonzero error setting.
We compare our results to such coloring-based methods in \cref{sec:coloring}, arguing that \cref{alg:main} always performs better in the zero-error setting, since it always requires just $s$ queries.

More generally, \cite{halikias_townsend_23} studies the matvec query complexity of exact recovery of a wide range of linearly parameterized matrix families, proving matching upper and lower-bounds on the number of queries required. 
In particular, it is shown that recovering a diagonal matrix requires  one query, recovering a block-diagonal matrix with $s\times s$ blocks requires  $s$ queries, and recovering a tridiagonal matrix requires  $3$ queries. Recovering a general $n \times d$ matrix is shown to require $d$ queries.
With probability one, \cref{alg:main} matches these lower bounds (see \cref{thm:exact_sparse}). 

\subsubsection{Nonzero error}

The most theoretically well-studied instance of \cref{prob:recovery_sparse} is arguably the case $\vec{S} = \vec{I}$; i.e. the task of approximating the diagonal of a matrix.
For this task, it is common to use Hutchinson's diagonal estimator, defined as
\begin{equation}\label{eqn:hutch_diag}
\vec{d}_m = \Big[ \sum_{j=1}^{m} \vec{r}_j \circ (\vec{A}\vec{r}_j) \Big] \circdiv 
\Big[ \sum_{j=1}^{m} \vec{r}_j \circ \vec{r}_j \Big]
,
\end{equation}
where ``$\circdiv$'' indicates entrywise division and the entries of the vectors $\vec{r}_j$ are all independent random variables with mean zero and variance one.
A number of analyses of this estimator have been given for various distributions \cite{bekas_kokiopoulou_saad_07,tang_saad_11,baston_nakatsukasa_22,hallman_ipsen_saibaba_23,dharangutte_musco_23}.
In particular, \cite{baston_nakatsukasa_22,dharangutte_musco_23} give error bounds for \cref{prob:recovery_sparse}, showing it suffices to set $m = O(1/\varepsilon)$, matching \cref{thm:ub_main} up to constant factors. In fact, when $\vec S = \vec I$ our \cref{alg:main}  is equivalent to \cref{eqn:hutch_diag} if the query vectors $\vec{r}_i$ are Gaussian.
We detail this connection in \cref{rem:diag_equiv}.

Past work has also studied~\cref{prob:recovery_full} with the goal of approximating a potentially non-sparse matrix by a sparse matrix.
For instance, there is a long line of work on approximating matrix functions $\vec{A} = f(\vec{H})$ from matvecs.
If $\vec{H}$ is sparse, then the entries of $\vec{A} = f(\vec{H})$ decay exponentially away from the nonzero entries of $\vec{H}$ under mild assumptions on $f(x)$~\cite{demko_moss_smith_84,benzi_razouk_08,benzi_boito_razouk_13}. 
As such, it is reasonable to approximate $\vec{A}$ with a sparse matrix of a sparsity similar to $\vec{H}$.
This observation has been used in matrix approximation algorithms~\cite{tang_saad_11,stathopoulos_laeuchli_orginos_13,frommer_schimmel_schweitzer_21,park_nakatsukasa_23}.
Broadly speaking, these algorithms aim to combine the coloring methods described above with the estimator $\vec{d}_m$ described in \cref{eqn:hutch_diag}.
For banded matrices, one can use the existing analyses of $\vec{d}_m$ to analyze the performance of these methods, showing that they solve \cref{prob:recovery_sparse} to accuracy ${\varepsilon}$ using $O(s/\varepsilon)$ matrix-vector queries.
We include a note on this in \cref{sec:banded}, as we were unable to find such an analysis in the literature.
\Cref{thm:ub_main} shows that \Cref{alg:main} matches this bound for arbitrary sparsity patterns.

More recently, motivated by the field of partial differential equation (PDE) learning, there has been widespread interest in learning the solution operators of PDEs from input-output data of forcing terms and solutions, analogous to matrix-vector products~\cite{schafer_katzfuss_owhadi_21,schafer_owhadi_21,boulle_halikias_townsend_23,karniadakis_kevrekidis_lu_perdikaris_wang_yang_21}. 
The method in \cite{schafer_owhadi_21} obtains a fixed-sparsity approximation to the sparse Cholesky factorization of the solution operator by coloring, which is provably accurate for certain problems.
This makes use of the fact that in certain settings a fixed-sparsity Cholesky factorization is accurate and can be efficiently computed   \cite{schafer_katzfuss_owhadi_21}.
This is broadly related to the (factorized) sparse approximate inverse problem for obtaining preconditioners \cite{benzi_tuma_99}. The method in~\cite{boulle_halikias_townsend_23} also derives a continuous analogue of a generalized coloring algorithm for targeting low-rank subblocks of hierarchical matrices~\cite{levitt_martinsson_22a}, and then recovering these subblocks using the randomized SVD~\cite{halko_martinsson_tropp_11}. The final step of this algorithm reduces to the recovery of a block diagonal matrix.

\subsubsection{The sparse recovery problem in compressed-sensing}
\label{sec:sparse-recovery}

It is important to contrast the aims and methods of this paper with the rich literature on compressed sensing and  sparse recovery \cite[etc.]{eldar_kutyniok_12,foucart_13}.

Given access to a length $d$ vector $\vec{a}$ through linear measurements of $\vec{a}$: $\vec{M}\mapsto \vec{M}\vec{a}$, the goal of the $\ell_2/\ell_2$ sparse recovery problem is to obtain an $s$-sparse vector $\widetilde{\vec{a}}$ for which 
\[
\|\vec{a} - \widetilde{\vec{a}} \|_2 \leq (1+\varepsilon) \min_{\vec{a}'\text{ $s$-sparse}} \| \vec{a} - \vec{a}'\|_2.
\]
Critically,  the support of $\widetilde{\vec{a}}$ is not known ahead of time.
In fact, if the support were specified, this would be a trivial problem; simply take $\vec{M}$ to have $s$ rows, each a standard basis vector corresponding to an entry  of the support.

A number of past works \cite[etc.]{waters_sankaranarayanan_baraniuk_11,wimalajeewa_eldar_varshney_13,dasarathy_shah_bhaskar_nowak_15} have also studied a matrix version of this problem in which one aims to obtain an $sd$-sparse matrix $\widetilde{\vec{A}}$ for which 
\[
\|\vec{A} - \widetilde{\vec{A}} \|_\F \leq (1+\varepsilon) \min_{\vec{X}\text{ $sd$-sparse}} \| \vec{A} - \vec{X}\|_\F,
\]
using only bi-linear measurements of $\vec{A}$: $(\vec{U}, \vec{V}) \mapsto \vec{U}\vec{A}\vec{V}^\T$. 
Note that the matrix problem is actually equivalent to a restricted version of the vector problem. 
Indeed, if $\vec{a} = \operatorname{vec}(\vec{A})$ and $\vec{M} = \vec{U}\otimes \vec{V}$, then $\vec{M}\vec{a} = \operatorname{vec}(\vec{U}\vec{A}\vec{V}^\T)$. 
Here $\operatorname{vec}(\cdot)$ forms a vector by stacking the columns of a matrix on top of one another and $\otimes$ is the Kronecker product.
In the zero error setting it has been shown  that if $\vec{A}$ is an $n \times n$ matrix with sufficiently  distributed sparsity, one can use a convex program to stably recover $\vec A$ using $O (\sqrt{ \operatorname{nnz}(\vec A) \cdot \log n})$ queries on each side~\cite{dasarathy_shah_bhaskar_nowak_15}.

Even for the general vector recovery problem, such algorithms necessarily have worse dependencies on $\varepsilon$ and $d$ than the bounds we prove for the algorithm described in the next section.
In particular, algorithms solving the $\ell_2/\ell_2$  sparse recovery problem  necessarily require $\Omega(s / \varepsilon^2 + s\log(ds) / \varepsilon)$ linear measurements \cite{price_woodruff_11}.
In fact, even if the problem is relaxed, so that the output vector $\widetilde{\vec{a}}$ is allowed to be non-sparse, $\Omega(s \log(d/s) / \varepsilon )$ queries are required \cite{price_woodruff_11}.
In contrast, our upper-bound \cref{thm:ub_main}/\cref{thm:ub_main_prob} have no dependence on the dimension $d$.

\subsection{Notation}

For a set $S$, $S^\comp$ indicates the complement (determined from context).
For $d \geq1$, we define $[d] = \{1,2, \ldots, d\}$.
For $R\subset [n]$ and $C\subset[d]$, $[\vec{X}]_{R,C}$ indicates the $|R|\times |C|$ submatrix of a $n\times d$ matrix $\vec{X}$ corresponding to the rows in $R$ and columns in $C$. If $R$ or $C$ contain only one element, we will simply write this element. Likewise, when $R = [n]$ or $C = [d]$, we will use a colon; e.g. $[\vec{X}]_{1,:}$ is the first column of $\vec{X}$.

We denote the Frobenius norm of a matrix $\vec{X}$ by $\| \vec{X} \|_\F$, the transpose by $\vec{X}^\T$, and the pseudo-inverse by $\vec{X}^\dagger$.
We use ``\,$\circ$\,'' to denote the Hadamard (entrywise) product. 
Specifically, for matrices $\vec{X}$ and $\vec{Y}$, $\vec{X}\circ\vec{Y}$ is the matrix defined by $[\vec{X}\circ\vec{Y} ]_{i,j} = [\vec{X}]_{i,j} [\vec{Y}]_{i,j}$.
We use $\vec{0}$ and $\vec{1}$ to denote matrices of all zeros or ones, with size determined from context. 

Throughout $\PP$ will be used to indicate probabilities, $\EE$ the expectation of a random variable, and $\VV$ the variance.
We denote by $\mathcal{N}(\mu,\sigma^2)$ the Gaussian distribution with mean $\mu$ and variance $\sigma^2$. We use $\operatorname{Gaussian}(n,d)$ (or $\operatorname{Gaussian}(d)$ if $n=d$) to denote the distribution on $n\times d$ matrices, where each entry of the matrix is independent and identically distributed (iid) with distribution $\mathcal{N}(0,1)$.

\section{An algorithm and upper bound}\label{sec:alg}

We begin by writing down an explicit algorithm (\cref{alg:main}) for solving \cref{prob:recovery_full,prob:recovery_sparse}
This algorithm proceeds row-by-row, taking a advantage of the fact that different rows of the solution to \cref{eqn:algmain} do not depend on one another (except through the common use of $\vec{Z} = \vec{A}\vec{G}$).
For each row, we can solve for the entries of $\widetilde{\vec{A}}$ via an appropriate least squares problem.
\begin{algorithm}[ht]
\caption{Fixed-sparse-matrix recovery}\label{alg:main}
\fontsize{10}{14}\selectfont
\begin{algorithmic}[1]
\Procedure{fixed-sparse-matrix-recovery}{$\vec{A},\vec{S},m$}
\State Form $\vec{G} \sim \operatorname{Gaussian}(d,m)$ \Comment{$d\times m$ iid Gaussian matrix}
\State Compute $\vec{Z} = \vec{A}\vec{G}$ \Comment{$m$ non-adaptive matvec queries}
\For{$i=1,2,\ldots, n$}
\State Let $S_i = \{ j : [\vec{S}]_{i,j} = 1 \}$  \Comment{nonzero entries of $i$th row of $\vec{S}$}
\State Let $\vec{z}_i^\T = [\vec{Z}]_{i,:}$  \Comment{$i$-th row of $\vec{Z}$}
\State Let $\vec{G}_i^\T = [\vec{G}]_{S_i,:}$ \Comment{submatrix formed by taking the rows from $S_i$}
\State Compute $\widetilde{\vec{a}}_i = \vec{G}_i^\dagger \vec{z}_i$
\Comment{solve a $m\times|S_i|$ least squares problem}
\State Set $[\widetilde{\vec{A}}]_{i,S_i} = \widetilde{\vec{a}}_i^\T$ and $[\widetilde{\vec{A}}]_{i,S_i^\comp} = \vec{0}^\T$
\Comment{construct $i$th row of $\widetilde{\vec{A}}$}
\EndFor
\State \Return $\widetilde{\vec{A}}$
\EndProcedure
\end{algorithmic}
\end{algorithm}

Note that in the case that $\vec{A}$ has nonzeros only in positions where $\vec{S}$ is nonzero, \cref{alg:main} will exactly recover $\vec{A}$ as long as the least squares problem for each row is fully determined. 

\begin{proposition}\label{thm:exact_sparse}
    If $\vec{S}\circ \vec{A} = \vec{A}$ and $m\geq s$, then \cref{alg:main} returns a matrix $\widetilde{\vec{A}} = \vec{A}$ with probability one.
\end{proposition}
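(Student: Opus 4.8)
The plan is to show that Algorithm~\ref{alg:main} reconstructs $\vec{A}$ one row at a time, by arguing that for each $i$ the least-squares problem in line~8 is \emph{consistent} and has a \emph{unique} minimizer with probability one.

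First I would record the structure of $\vec{z}_i$. Since $\vec{A} = \vec{S}\circ\vec{A}$, the $i$-th row of $\vec{A}$ vanishes off the support $S_i$, so writing $\vec{a}_i := [\vec{A}]_{i,S_i}^\T \in\R^{|S_i|}$ for its restriction to $S_i$, and using the algorithm's notation $\vec{G}_i^\T = [\vec{G}]_{S_i,:}$, we have
$\vec{z}_i = \vec{G}^\T[\vec{A}]_{i,:}^\T = [\vec{G}]_{S_i,:}^\T \,[\vec{A}]_{i,S_i}^\T = \vec{G}_i\vec{a}_i$.
In particular $\widetilde{\vec{a}}_i = \vec{a}_i$ already solves the system $\vec{z}_i = \vec{G}_i\widetilde{\vec{a}}_i$ exactly, so the least-squares residual is zero and $\vec{a}_i$ is a minimizer.

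Next I would argue that the pseudoinverse returns exactly this solution whenever $\vec{G}_i\in\R^{m\times|S_i|}$ has full column rank: in that case $\vec{G}_i^\dagger\vec{G}_i = \vec{I}_{|S_i|}$, hence $\widetilde{\vec{a}}_i = \vec{G}_i^\dagger\vec{z}_i = \vec{G}_i^\dagger\vec{G}_i\vec{a}_i = \vec{a}_i$, which gives $[\widetilde{\vec{A}}]_{i,S_i} = [\vec{A}]_{i,S_i}$, while line~9 sets $[\widetilde{\vec{A}}]_{i,S_i^\comp} = \vec{0}^\T = [\vec{A}]_{i,S_i^\comp}$. It then remains to verify the full-rank hypothesis. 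Since $|S_i|\le s\le m$, the matrix $\vec{G}_i$ is tall (at least as many rows as columns) with iid $\mathcal{N}(0,1)$ entries; the failure event is contained in $\{\det(\vec{G}_i^\T\vec{G}_i) = 0\}$, which is the zero set of a polynomial in the entries of $\vec{G}_i$ that is not identically zero (it is nonzero, e.g., at $\vec{G}_i = [\vec{I}_{|S_i|};\,\vec{0}]$), hence a Lebesgue-null set, so this event has probability zero.

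Finally I would take a union bound over $i = 1,\dots,n$: the event that some $\vec{G}_i$ fails to have full column rank is a finite union of probability-zero events and so has probability zero, and on its complement every row of $\widetilde{\vec{A}}$ equals the corresponding row of $\vec{A}$. I do not anticipate any real obstacle here; the only points needing a word of care are the identity $\vec{G}_i^\dagger\vec{G}_i = \vec{I}$ for a full-column-rank $\vec{G}_i$ and the almost-sure nonsingularity of the Gram matrix of a tall Gaussian matrix, both of which are standard.
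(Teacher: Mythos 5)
Your proof is correct and follows essentially the same route as the paper's: observe that $\vec{z}_i = \vec{G}_i\vec{a}_i$ when $\vec{A}=\vec{S}\circ\vec{A}$, invoke almost-sure full column rank of the $m\times|S_i|$ Gaussian matrix $\vec{G}_i$ to conclude $\vec{G}_i^\dagger\vec{G}_i=\vec{I}$, and take a union bound over rows. You merely supply a brief justification of the full-rank claim that the paper asserts without proof.
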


\begin{proof}
    Consider the $i$-th row, and let $\vec x_i \in \R^{|S_i|}$ be the set of non-zero entries in that row. The corresponding $\vec{G}_i \in \mathbb{R}^{m\times |S|}$ is full rank $|S_i|$ with probability one.
    Observe that $\vec{z}_i = \vec{G}_i \vec{x}_i$.
    Thus, since $\vec{G}_i$ is full-rank, $\widetilde{\vec{a}}_i = \vec{G}_i^\dagger \vec z = \vec{G}_i^\dagger \vec{G}_i \vec x_i =\vec x_i$. 
    Thus, we recover the row exactly.
    By a union bound, with probability one, this simultaneously happens for all the rows.
\end{proof}

Our main focus will be the case where $\vec{A}$ may have nonzeros off of the specified sparsity pattern.
We first recall a standard result from high dimensional probability.
\begin{proposition}\label{thm:gaussian_expectations}
    Let $\vec{G} \sim \operatorname{Gaussian}(p,q)$. 
    Then, for compatible matrices $\vec{X}$ and $\vec{Y}$, 
    \begin{equation}\label{eqn:gaussian_frob_prod}
        \EE\bigl[ \| \vec{X} \vec{G} \vec{Y} \|_\F^2 \bigr]
        = \| \vec{X} \|_\F^2 \|\vec{Y}\|_\F^2.        
    \end{equation}
    Moreover, if $p-q \geq 2$, then 
    \begin{equation}\label{eqn:gaussian_inv_frob}
    \EE\bigl[ \| \vec{G}^\dagger \|_\F^2 \bigr]
    = \frac{q}{p-q-1}.
    \end{equation}
\end{proposition}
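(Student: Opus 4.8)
The plan is to establish \cref{eqn:gaussian_frob_prod} first, since it is the easier identity and its method of proof (reducing a Frobenius norm to a sum of entrywise second moments) will also be useful for orienting the second part. Writing $\|\vec X \vec G \vec Y\|_\F^2 = \sum_{i,j} ([\vec X \vec G \vec Y]_{i,j})^2$ and expanding $[\vec X \vec G \vec Y]_{i,j} = \sum_{k,\ell} [\vec X]_{i,k}[\vec G]_{k,\ell}[\vec Y]_{\ell,j}$, I would take expectations inside the sum and use that $\EE[[\vec G]_{k,\ell}[\vec G]_{k',\ell'}] = \delta_{k,k'}\delta_{\ell,\ell'}$ because the entries of $\vec G$ are iid $\mathcal N(0,1)$. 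The cross terms vanish, leaving $\EE[\|\vec X \vec G \vec Y\|_\F^2] = \sum_{i,j}\sum_{k,\ell}[\vec X]_{i,k}^2 [\vec Y]_{\ell,j}^2 = \bigl(\sum_{i,k}[\vec X]_{i,k}^2\bigr)\bigl(\sum_{\ell,j}[\vec Y]_{\ell,j}^2\bigr) = \|\vec X\|_\F^2\|\vec Y\|_\F^2$. This is a routine computation; alternatively one can invoke rotational invariance of the Gaussian, but the direct expansion is cleanest and fully self-contained.

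For \cref{eqn:gaussian_inv_frob}, the key observation is that since $p \geq q$, the $p\times q$ matrix $\vec G$ has full column rank $q$ almost surely, so $\vec G^\dagger = (\vec G^\T \vec G)^{-1}\vec G^\T$ and hence $\|\vec G^\dagger\|_\F^2 = \tr\bigl((\vec G^\dagger)(\vec G^\dagger)^\T\bigr) = \tr\bigl((\vec G^\T\vec G)^{-1}\bigr)$. The matrix $\vec W := \vec G^\T \vec G$ is a $q\times q$ Wishart matrix with $p$ degrees of freedom and identity scale, so the problem reduces to computing $\EE[\tr(\vec W^{-1})] = q \cdot \EE[[\vec W^{-1}]_{1,1}]$ (by symmetry/exchangeability of the coordinates). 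The standard fact that $\EE[\vec W^{-1}] = \frac{1}{p-q-1}\vec I_q$ for $p - q - 1 > 0$ — equivalently the mean of the inverse-Wishart distribution — then gives $\EE[\tr(\vec W^{-1})] = \frac{q}{p-q-1}$, which is exactly the claimed value. I would cite this as a standard property of the Wishart/inverse-Wishart distribution rather than re-derive it, noting that $p - q \geq 2$ ensures $p - q - 1 \geq 1 > 0$ so the expectation is finite.

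The main obstacle, if one wants a genuinely self-contained argument, is proving $\EE[[\vec W^{-1}]_{1,1}] = \frac{1}{p-q-1}$ from scratch. The cleanest route I know avoids the full Wishart density: condition on the last $q-1$ columns of $\vec G$, write the first column as $\vec g$, and use the block-matrix inverse / Schur complement formula to express $[\vec W^{-1}]_{1,1} = 1/\bigl(\vec g^\T (\vec I - \vec P)\vec g\bigr)$, where $\vec P$ is the orthogonal projector onto the span of the other $q-1$ columns. Conditionally on those columns, $\vec I - \vec P$ is a fixed rank-$(p-q+1)$ projector and $\vec g \sim \mathcal N(\vec 0, \vec I_p)$ is independent of it, so $\vec g^\T(\vec I - \vec P)\vec g \sim \chi^2_{p-q+1}$, whence $\EE[1/\chi^2_{p-q+1}] = \frac{1}{(p-q+1)-2} = \frac{1}{p-q-1}$ using the standard moment of a chi-squared random variable (valid precisely when $p - q + 1 > 2$, i.e. $p - q \geq 2$). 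Summing over the $q$ diagonal entries by symmetry then yields the result. I would present this chi-squared reduction, since it is short and makes the role of the hypothesis $p-q\geq 2$ transparent.
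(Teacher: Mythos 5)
Your proof is correct and follows essentially the same route as the paper: for \cref{eqn:gaussian_frob_prod} the paper cites an elementary entrywise-expansion argument (which you carry out explicitly), and for \cref{eqn:gaussian_inv_frob} both reduce $\|\vec G^\dagger\|_\F^2$ to $\tr\bigl((\vec G^\T\vec G)^{-1}\bigr)$ and then invoke the mean of the inverse-Wishart distribution. The one genuine addition you make is the self-contained Schur-complement reduction $[\vec W^{-1}]_{1,1} = 1/(\vec g^\T(\vec I - \vec P)\vec g)$ with $\vec g^\T(\vec I - \vec P)\vec g \sim \chi^2_{p-q+1}$, which replaces the paper's citation to Muirhead with a short elementary argument and makes it clear where the hypothesis $p-q\geq 2$ is used; this is correct as written (the conditional rank of $\vec I - \vec P$ is $p-q+1$ almost surely, and $\EE[1/\chi^2_k]=1/(k-2)$ for $k>2$).
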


\begin{proof}
The expression \cref{eqn:gaussian_frob_prod} is an elementary calculation; see for instance \cite[Proposition A.1]{halko_martinsson_tropp_11}.
The expression \cref{eqn:gaussian_inv_frob} follows from the fact that $\vec{G}^\T\vec{G}$ is invertible with probability one, and $(\vec{G}^\T \vec{G})^{-1}$ has a inverse Wishart distribution, which, for $p-q\geq 2$ has mean $\vec{I} / (p-q-1)$ \cite[\S3.2 (12)]{muirhead_82}. 
Since $\|\vec{G}^\dagger\|_\F^2 = \tr((\vec{G}^\dagger)^\T (\vec{G}^\dagger)) = \tr((\vec{G}^\dagger)^\T (\vec{G}^\dagger))$, the result follows from the linearity of the expectation; see for instance \cite[Proposition A.5]{halko_martinsson_tropp_11}.
\end{proof}

Using \cref{thm:gaussian_expectations}, we establish the following theorem.
\ubmain*
We also obtain a probability bound:
\ubmainprob* 

\begin{proof}[Proof of \cref{thm:ub_main}]
The algorithm processes $\vec{Z} = \vec{A}\vec{G} \in \mathbb{R}^{n\times m}$ sequentially to approximate the rows of $\vec{A}$. 
Fix $i$ and let $S_i$ be the indices of the nonzero entries of $[\vec{S}]_{i,:}$ and  $\vec{z}_i^\T = [\vec{Z}]_{i,:}$ be the $i$-th row of $\vec{Z}$ and $S_i^\comp = [d]\setminus S_i$.
Let $\vec{G}_i^\T = [\vec{G}]_{S_i,:}$ and $\widehat{\vec{G}}_i^\T = [\vec{G}]_{S_i^\comp,:}$ be submatrices of $\vec{G}$ formed by taking the rows of $\vec{G}$ in $S_i$ and $S_i^\comp$ respectively.
Define $\vec{x}_i^\T  = [\vec{A}]_{i,S_i}$ and $\vec{y}_i^\T = [\vec{A}]_{i,S_i^\comp}$ and observe that 
\[
\vec{z}_i^\T := [\vec{A}\vec{G}]_{i,:}
=
\vec{x}_i^\T \vec{G}_i^\T + \vec{y}_i^\T \widehat{\vec{G}}_i^\T.
\]

To enforce the sparsity pattern, \cref{alg:main} tries to recover $\vec{x}_i\in\mathbb{R}^s$ from $\vec{z}_i\in\mathbb{R}^{m}$ by solving the least squares problem:
\[
\widetilde{\vec{a}}_i
:= \vec{G}_i^\dagger \vec{z}_i
= \vec{G}_i^\dagger (\vec{G}_i \vec{x}_i + \widehat{\vec{G}}_i \vec{y}_i)
= \vec{x}_i + \vec{G}_i^\dagger \widehat{\vec{G}}_i \vec{y}_i.
\]
Here we have used that $\vec{G}_i$ is full-rank with probability one.

Since $\vec{G}_i$ and $\widehat{\vec{G}}_i$ are independent, clearly $\EE[\widetilde{\vec{a}}_i] = \vec{x}_i$ as $\EE[\widehat{\vec{G}}_i] = \vec{0}$. Thus, \cref{alg:main} outputs an unbiased estimator for $\vec{S}\circ\vec{A}$.

As long as $m\geq |S_i|+2$, it follows from standard results in random matrix theory that
\begin{align*}
\EE\bigl[ \| \vec{x}_i - \widetilde{\vec{a}}_i \|_2^2 \bigr]
&= \EE\bigl[ \| \vec{G}_i^\dagger \widehat{\vec{G}}_i \vec{y}_i\|_2^2 \bigr]
\\&= \EE\bigl[\EE\bigl[ \| \vec{G}_i^\dagger \widehat{\vec{G}}_i \vec{y}_i\|_2^2 \, \big|\, \vec{G}_i \bigr]\bigr] 
\\&= \EE\bigl[\| \vec{G}_i^\dagger\|_\F^2 \cdot  \| \vec{y}_i\|_2^2\bigr] \tag*{\cref{eqn:gaussian_frob_prod} in \cref{thm:gaussian_expectations}}
\\&= \frac{|S_i|}{m-|S_i|-1} \cdot \| \vec{y}_i \|_2^2 \tag*{\cref{eqn:gaussian_inv_frob} in \cref{thm:gaussian_expectations}}
\\&\leq \frac{s}{m-s-1} \cdot \| \vec{y}_i \|_2^2,
\end{align*}
where we have used that $|S_i|\leq s$ in the final line (and hence we have equality if $|S_i| = s$). 

Let $\widetilde{\vec{A}}$ be the output of \cref{alg:main}.
Then, by the linearity of expectation,
\begin{align*}
\EE\Bigl[\|\vec{S}\circ\vec{A} - \widetilde{\vec{A}}\|_\F^2\Bigr]
&= \sum_{i=1}^{n} \EE\Bigl[\| \vec{x}_i - \widetilde{\vec{a}}_i \|_2^2\Bigr] 
\\&\leq \frac{s}{m-s-1} \sum_{i=1}^{n} \| \vec{y}_i \|_2^2 
\\&= \frac{s}{m-s-1} \| \vec{A} - \vec{S}\circ \vec{A} \|_\F^2.
\end{align*}
Observe that we have equality if $|S_i|=s$ for each row $i\in[n]$.
\end{proof}

\begin{proof}[Proof of \cref{thm:ub_main_prob}]
Applying Markov's inequality to \cref{thm:ub_main}, we find
\begin{equation}\label{eqn:ub_main:SAA}
    \PP\bigl[\|\vec{S}\circ\vec{A} - \widetilde{\vec{A}}\|_\F^2 \geq \alpha  \bigr]
\leq \frac{s}{m-s-1} \frac{\| \vec{A} - \vec{S}\circ \vec{A} \|_\F^2}{\alpha}.
\end{equation}
Set $\alpha = 2\varepsilon \| \vec{A} - \vec{S}\circ \vec{A} \|_\F^2$.
Then, using that $\sqrt{1+2\varepsilon} \leq 1+\varepsilon$ for all $\varepsilon>0$ and recalling \cref{eqn:AAtilde-decomp} gives that 
\begin{align*}    
\PP\bigl[\| \vec{A}-\widetilde{\vec{A}}\|_\F \ge (1+\varepsilon) \|\vec{A}-\vec{S} \circ \vec{A} \|_\F \bigr] 
&\leq \PP\bigl[\| \vec{A}-\widetilde{\vec{A}}\|_\F^2 \ge (1+2\varepsilon) \|\vec{A}-\vec{S} \circ \vec{A} \|_\F^2 \bigr] 
\\&= \PP\bigl[\|\vec{S}\circ\vec{A} - \widetilde{\vec{A}}\|_\F^2 \geq 2\varepsilon \|\vec{A} - \vec{S}\circ \vec{A}\|_\F^2 \bigr]
\\&
\leq s/\big((m-s-1)(2\varepsilon)\big). 
\end{align*}
By assumption $m \geq s(1/(2\delta\varepsilon) +1)+1$, which gives the result.
\end{proof}

We now make several comments about \cref{alg:main} and our analysis.

\begin{remark}
Our bound in~\cref{thm:ub_main_prob} has an unfavorable $O(1/\delta)$ dependence on the failure probability $\delta$.
One could apply Markov's inequality to each row and Hoeffding's inequality to the sum to obtain a dependence $O(\log(n/\delta))$. 
However this has a dependence on the dimension $n$ which we would like to avoid.
In \cref{sec:high-prob} we show that one can apply a high-dimensional analog of the ``median trick'' to obtain an algorithm with a $O(\log(1/\delta))$ failure probability (without any dependence on the dimensions $n$ and $d$).
\end{remark}

\begin{remark}
    If $\vec{A}$ and $\vec{S}$ are symmetric, then it is better to return $(\widetilde{\vec{A}} + \widetilde{\vec{A}}^\T)/2$ than $\widetilde{\vec{A}}$ since, by the triangle inequality,
\[
\| \vec{A} - (\widetilde{\vec{A}} + \widetilde{\vec{A}}^\T)/2 \|_\F
= 
\|  (\vec{A} - \widetilde{\vec{A}})/2 + (\vec{A} - \widetilde{\vec{A}})^\T/2 \|_\F
\leq \| \vec{A} - \widetilde{\vec{A}} \|_\F.
\]
\end{remark}

\begin{remark}\label{rem:diag_equiv}
If the entries of the $\vec{r}_i$ are Gaussian, then the diagonal estimator $\vec{d}_m$ from \cref{eqn:hutch_diag} is equivalent to \cref{eqn:algmain} with $\vec{S} = \vec{I}$. 
Let $\vec{r}_j$ denote the $j$th column of $\vec G$. By definition, $\vec{z}_i = [ [\vec{A}\vec{r}_1]_i, \ldots, [\vec{A}\vec{r}_m]_i]^\T$ and in this case, \[\vec{G}_i^\T := [\vec{G}]_{S_i,:} = [\vec{G}]_{i,:} = [ [\vec{r}_1]_i, \ldots, [\vec{r}_m]_i]\]
is a vector.
The $i$-th row of $\vec{d}_m$ is 
\[
[\vec{d}_m]_i 
:= \frac{\sum_{j=1}^m [\vec{r}_j]_i \cdot [\vec{A}\vec{r}_j]_i}{\sum_{j=1}^m [\vec{r}_j]_i^2}
= \frac{\vec{G}_i^\T\vec{z}_i}{\vec{G}_i^\T\vec{G}_i}
= \vec{G}_i^\dagger \vec{z}.
\]
In this sense, \cref{alg:main} for computing \cref{eqn:algmain} is a generalization of \cref{eqn:hutch_diag} to non-diagonal sparsity patterns.
Interestingly, however, we have not seen \cref{eqn:hutch_diag} interpreted in terms of a least-squares problem or pseudoinverse in the literature.
This is perhaps because past work focused on diagonal estimation (\cref{prob:recovery_sparse}) rather than approximation by a diagonal (\cref{prob:recovery_full}).
\end{remark}

\begin{remark}
\Cref{alg:main} requires solving $n$ least squares problems with a coefficient matrix of size $m\times s$. So, in addition to the application dependent cost of computing $\vec{Z} = \vec{A}\vec{G}$, its runtime is just $O(nms^2)$.
There are a number of practical improvements which can be made upon implementation.
First, for many sparsity patterns, the matrices $\vec{G}_i$ and $\vec{G}_{i+1}$ differ only by a permutation and low-rank update. 
Thus, by downdating/updating appropriate quantities, the cost of solving all $n$ least-squares problems may be lower than $n$ times the cost of solving a single system. 
In addition, a posteriori variance estimates could also be obtained through Jack-knife type techniques \cite{epperly_tropp_23}.
\end{remark}

\section{A lower-bound for adaptive algorithms}\label{sec:lower_bounds}

\Cref{alg:main} solves \cref{prob:recovery_full} using $O(s/\varepsilon)$ matvec queries. 
In this section, we show that there are distributions of matrices and sparsity patterns for which no matvec query algorithm can reliably solve \cref{prob:recovery_full} using $\Omega(s/\varepsilon)$ matvecs.
In particular, we show the following:
\lbmain*

This implies that for certain hard instances of \cref{prob:recovery_full,prob:recovery_sparse}, our \Cref{thm:ub_main_prob}  and thus \Cref{thm:ub_main} are optimal up to constants. In particular, adaptivity can only improve constants; it will not lead to an improved dependence on $s$ or $\epsilon$.
If $\vec{A}$ is known to have a particular structure, it is possible that adaptive algorithms may perform better than non-adaptive algorithms for these problems. 

We note that the condition $\varepsilon < c$ is benign. 
In particular, if $C < c/2$, then $m \leq Cs / \varepsilon$ implies $m\leq s/2$, in which case one cannot solve \cref{prob:recovery_full}, even in the zero error case, due to a parameter counting argument.

\subsection{Key technical tools}

Before we prove \cref{thm:lb_main}, we introduce several key results.

Our hard distribution will be $\vec{A} = \vec{G}^\T\vec{G}$, where $\vec{G}\sim\operatorname{Gaussian}(d)$. 
This is a special case of a so-called Wishart matrix. 
Our lower bound will make use of the fact that the conditional distribution of a Wishart matrix after a sequence of adaptive matrix-vector queries still looks like a slightly smaller transformed Wishart matrix \cite[Lemma 3.4]{braverman_hazan_simchowitz_woodworth_20}. Similar hard input distributions have been used in a number of lower-bounds for matvec query tasks \cite{simchowitz_elalaoui_recht_18,braverman_hazan_simchowitz_woodworth_20,jiang_pham_woodruff_zhang_21,chewi_dediospont_li_lu_narayanan_23}.
We believe other simple distributions such as $\vec{A}=\vec{G}$ or $\vec{A} = \vec{G} + \vec{G}^\T$ would also suffice to prove something like \cref{thm:lb_main}. 
We have chosen to use $\vec{A}= \vec{G}^\T\vec{G}$ because it is symmetric, and  it allows us to use a conceptually intuitive anti-concentration result based on the Berry--Esseen theorem.

The following is essentially Lemma 3.4 from \cite{braverman_hazan_simchowitz_woodworth_20}, restated to suit our needs:
\begin{proposition}
\label{thm:wishart}
    Suppose $\vec{G} \sim \operatorname{Gaussian}(d,r)$.
    Let $\vec{x}_1, \ldots, \vec{x}_m$ and $\vec{y}_1 = \vec{G}^\T\vec{G}\vec{x}_1, \ldots, \vec{y}_m = \vec{G}^\T\vec{G}\vec{x}_m$ be such that, for each $j=1,\ldots, m$, $\vec{x}_j$ was chosen based only on the query vectors $\vec{x}_1, \ldots, \vec{x}_{j-1}$ and the outputs $\vec{y}_1, \ldots, \vec{y}_{j-1}$.

    Then, there is an $n\times n$ orthonormal matrix $\vec{V}_m$ and an $n\times n$ matrix $\vec{\Delta}_m$, each constructed solely as functions of $\vec{x}_1, \ldots, \vec{x}_m$ and $\vec{y}_1, \ldots, \vec{y}_m$, and a matrix $\vec{G}_m \sim \operatorname{Gaussian}(d-m,r-m)$ independent of $\vec{x}_1, \ldots, \vec{x}_m$ and $\vec{y}_1, \ldots, \vec{y}_m$ such that
    \[
    \vec{V}_m^\T \vec{G}^\T\vec{G} \vec{V}_m = 
    \vec{\Delta}_m +  \begin{bmatrix}
        \vec{0}_{m,m} & \vec{0}_{m,d-m} \\
        \vec{0}_{d-m,m} & \vec{G}_m^\T \vec{G}_m
    \end{bmatrix}.
    \]
\end{proposition}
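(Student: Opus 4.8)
The plan is to prove the statement by induction on the number of queries $m$, the case $m=0$ being immediate (take $\vec{V}_0=\vec{I}$, $\vec{\Delta}_0=\vec{0}$, $\vec{G}_0=\vec{G}$). The engine of the induction is a \emph{single-step lemma}: if $\vec{W}=\vec{H}^\T\vec{H}$ with $\vec{H}\sim\operatorname{Gaussian}(p,q)$ and $\vec{u}$ is any fixed unit vector, then, conditioned on the vector $\vec{W}\vec{u}$, there exist an orthogonal matrix $\vec{Q}$ depending only on $\vec{u}$ and a matrix $\vec{\Gamma}$ depending only on $\vec{u}$ and $\vec{W}\vec{u}$ such that
\[
\vec{Q}^\T\vec{W}\vec{Q}=\vec{\Gamma}+\begin{bmatrix}0 & \vec{0}\\ \vec{0} & (\vec{H}')^\T\vec{H}'\end{bmatrix}
\]
for some $\vec{H}'\sim\operatorname{Gaussian}(p-1,q-1)$ that is independent of $\vec{W}\vec{u}$. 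Composing $m$ applications of this lemma and accumulating the rotations and shifts will then yield the proposition (up to measure-zero degeneracies).

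To prove the single-step lemma, pick $\vec{Q}$ with first column $\vec{u}$; since $\vec{Q}^\T\vec{W}\vec{Q}=(\vec{H}\vec{Q})^\T(\vec{H}\vec{Q})$ and $\vec{H}\vec{Q}\sim\operatorname{Gaussian}(p,q)$ by rotational invariance, one may assume $\vec{u}=\vec{e}_1$, so that the conditioning is on the first column of $\vec{W}$. Writing the columns of $\vec{H}$ as $\vec{h}_1,\ldots,\vec{h}_q$, decompose each $\vec{h}_j$ ($j\geq2$) into its component along $\vec{h}_1$ and a residual $\vec{h}_j^\perp\in\vec{h}_1^\perp$. Conditioning first on $\vec{h}_1$ and the inner products $\langle\vec{h}_1,\vec{h}_j\rangle$ (which determine the first column of $\vec{W}$), the Gaussian conditioning formula shows the residuals are independent standard Gaussians on the $(p-1)$-dimensional space $\vec{h}_1^\perp$; their Gram matrix is therefore distributed as $(\vec{H}')^\T\vec{H}'$ with $\vec{H}'\sim\operatorname{Gaussian}(p-1,q-1)$ and is independent of the first column. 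A direct computation gives, for $j,k\geq2$, that $[\vec{W}]_{j,k}=[\vec{W}]_{1,j}[\vec{W}]_{1,k}/[\vec{W}]_{1,1}+\langle\vec{h}_j^\perp,\vec{h}_k^\perp\rangle$, so the trailing block of $\vec{W}$ equals a known rank-one matrix plus this fresh Wishart; defining $\vec{\Gamma}$ to agree with $\vec{W}$ on the first row and column and with the rank-one matrix on the trailing block completes the lemma (with $\vec{Q}=\vec{I}$ in the reduced case).

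For the inductive step, assume that after $m-1$ queries we have produced $\vec{V}_{m-1},\vec{\Delta}_{m-1}$ and $\vec{G}_{m-1}\sim\operatorname{Gaussian}(d-m+1,r-m+1)$ with the stated properties, and with $\vec{G}_{m-1}$ independent of the transcript $\mathcal{T}_{m-1}=(\vec{x}_1,\vec{y}_1,\ldots,\vec{x}_{m-1},\vec{y}_{m-1})$. The query $\vec{x}_m$ is a function of $\mathcal{T}_{m-1}$; passing to the coordinates given by $\vec{V}_{m-1}$ and subtracting the $\mathcal{T}_{m-1}$-measurable part coming from $\vec{\Delta}_{m-1}$, the response $\vec{y}_m$ reveals exactly $\vec{G}_{m-1}^\T\vec{G}_{m-1}\vec{u}$, where $\vec{u}$ (the normalized trailing block of $\vec{V}_{m-1}^\T\vec{x}_m$) is again a function of $\mathcal{T}_{m-1}$. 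Conditioning on $\mathcal{T}_{m-1}$, the matrix $\vec{G}_{m-1}$ is still a fresh Wishart and $\vec{u}$ is a fixed unit vector, so the single-step lemma applies and peels off one more coordinate, yielding $\vec{G}_m\sim\operatorname{Gaussian}(d-m,r-m)$ independent of $\vec{G}_{m-1}^\T\vec{G}_{m-1}\vec{u}$; since conditioning further on $\vec{y}_m$ is equivalent to conditioning on $\vec{G}_{m-1}^\T\vec{G}_{m-1}\vec{u}$, the block $\vec{G}_m$ is independent of the full transcript $\mathcal{T}_m$. Taking $\vec{V}_m=\vec{V}_{m-1}\diag(\vec{I}_{m-1},\vec{Q})$ and letting $\vec{\Delta}_m$ absorb $\vec{\Delta}_{m-1}$ and the embedded $\vec{\Gamma}$ term closes the induction.

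I expect the main difficulty to be the probabilistic bookkeeping in the inductive step — specifically verifying that the freshly extracted Wishart block is independent of the \emph{entire} transcript, not merely of the last query–response pair. This relies on the probe direction $\vec{u}$ being measurable with respect only to the \emph{past} transcript (so that it is independent of the current fresh block and the single-step lemma applies), and on the fact that the only new randomness $\vec{y}_m$ exposes is the vector $\vec{G}_{m-1}^\T\vec{G}_{m-1}\vec{u}$. The remaining points — discarding measure-zero events such as $\vec{h}_1=\vec{0}$ or rank deficiency, and checking the dimension count $d-m,r-m\geq0$ — are routine.
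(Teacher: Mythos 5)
Your proposal is correct and follows essentially the same strategy as the paper: prove a single-step lemma showing that, after one (non-adaptive) query to a Wishart matrix, the conditional distribution is a transcript-measurable shift plus a $(p-1,q-1)$ Wishart independent of the transcript, then compose $m$ applications by induction, feeding each adaptive query into the fresh Wishart block from the previous step. Your proof of the single-step lemma, which rotates so that $\vec{u}=\vec{e}_1$ and then decomposes each column $\vec{h}_j$ into its component along $\vec{h}_1$ and a residual in $\vec{h}_1^\perp$, is a column-by-column version of the paper's projection argument (which writes $\vec{X}^\T\vec{G}^\T\vec{G}\vec{X} = \vec{X}^\T\vec{G}^\T(\vec{r}\vec{r}^\T+\vec{R}\vec{R}^\T)\vec{G}\vec{X}$ with $\vec{r}=\vec{G}\vec{x}/\|\vec{G}\vec{x}\|$); the rank-one correction $[\vec{W}]_{1,j}[\vec{W}]_{1,k}/[\vec{W}]_{1,1}$ you compute matches the paper's $(\vec{x}^\T\vec{y})^{-1}\vec{X}^\T\vec{y}\vec{y}^\T\vec{X}$ term, and the independence of the residual Gram matrix from the first row/column is the same use of Gaussian conditioning both proofs rely on.
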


We have included a proof in \cref{sec:appendix:wishart} for completeness.

We will also use the following bound about the anti-concentration of independent random variables, which we prove in \cref{sec:anticoncentration-sums}. 
This is an immediate consequence of the Berry--Esseen Theorem and a basic  anti-concentration result for Gaussians.

\begin{proposition}\label{thm:sum-anticoncentration}
    There exists a constant $C>0$ such that, if $X_1, \ldots, X_k$ are independent random variables with $\VV[X_i] \geq \sigma^2$, and $\EE[|X_i-\EE[X_i]|^3] \leq \rho$, and if we define 
    \[ 
        X = X_1 + \cdots + X_k,
    \]
    then for any $t\in \R$ and $\alpha > 0$, if $k > C\rho^2 / (\alpha^2 \sigma^6)$, 
    \[
        \PP\Bigl[ |X - t| < \alpha \sigma \sqrt{k} \Bigr] < \alpha.
    \] 
\end{proposition}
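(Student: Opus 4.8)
The plan is to reduce to the classical Berry--Esseen theorem for independent (not necessarily identically distributed) summands, followed by the elementary fact that the standard Gaussian density is bounded by $1/\sqrt{2\pi}$. We may assume $\sigma>0$, since otherwise the hypothesis $k>C\rho^2/(\alpha^2\sigma^6)$ is vacuous; note $\sigma>0$ also forces each summand to be nondegenerate.

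First I would center and normalize. Set $Y_i := X_i - \EE[X_i]$, $S := Y_1 + \cdots + Y_k$, and $t' := t - \EE[X]$, so that $\PP[\,|X-t| < \alpha\sigma\sqrt k\,] = \PP[\,|S - t'| < \alpha\sigma\sqrt k\,]$. Write $s_k^2 := \VV[S] = \sum_{i=1}^k \VV[X_i] \geq k\sigma^2$ and let $F$ be the CDF of $S/s_k$. The Berry--Esseen theorem supplies a universal constant $C_0$ with
\[
\sup_{x\in\R}\bigl|\,F(x) - \Phi(x)\,\bigr| \;\leq\; C_0\,\frac{\sum_{i=1}^k \EE[\,|Y_i|^3\,]}{s_k^3} \;\leq\; C_0\,\frac{k\rho}{(k\sigma^2)^{3/2}} \;=\; \frac{C_0\,\rho}{\sigma^3\sqrt k}.
\]
Setting $a := t'/s_k$ and $\beta := \alpha\sigma\sqrt k / s_k$, this gives
\[
\PP\bigl[\,|S - t'| < \alpha\sigma\sqrt k\,\bigr] \;\leq\; F(a+\beta) - F(a-\beta) \;\leq\; \bigl(\Phi(a+\beta) - \Phi(a-\beta)\bigr) + \frac{2C_0\rho}{\sigma^3\sqrt k}.
\]

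Finally I would bound the two pieces. Since $s_k \geq \sqrt k\,\sigma$ we have $\beta \leq \alpha$, and since the interval $[a-\beta, a+\beta]$ has length $2\beta$ while the Gaussian density is at most $1/\sqrt{2\pi}$, we get $\Phi(a+\beta) - \Phi(a-\beta) \leq \tfrac{2\beta}{\sqrt{2\pi}} \leq \tfrac{2}{\sqrt{2\pi}}\,\alpha$. The key point is that $\tfrac{2}{\sqrt{2\pi}} = 0.7978\ldots < 1$, so $\eta := 1 - \tfrac{2}{\sqrt{2\pi}}$ is a fixed positive constant and $\PP[\,|X-t|<\alpha\sigma\sqrt k\,] \leq (1-\eta)\alpha + \tfrac{2C_0\rho}{\sigma^3\sqrt k}$. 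Taking $C := 4C_0^2/\eta^2$, the hypothesis $k > C\rho^2/(\alpha^2\sigma^6)$ is equivalent to $\tfrac{2C_0\rho}{\sigma^3\sqrt k} < \eta\alpha$, whence the right-hand side is strictly less than $\alpha$, as claimed. There is no genuine obstacle here; the only things to be careful about are invoking the non-i.i.d.\ form of Berry--Esseen (which does hold with an absolute constant) and observing that its error decays like $1/\sqrt k$, so it can be driven below the fixed slack $\eta\alpha$ once $k$ exceeds the stated threshold.
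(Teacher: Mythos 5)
Your proof is correct and takes essentially the same approach as the paper: center and normalize, apply the non-i.i.d.\ Berry--Esseen theorem, bound the Gaussian increment by $\sqrt{2/\pi}\,\alpha$ using the density bound, and absorb the Berry--Esseen error into the slack $1-\sqrt{2/\pi}>0$ by choosing $C$ large enough. The only cosmetic difference is that the paper first enlarges the event to $\{|X-t|<\alpha\hat\sigma\}$ so the normalized half-width is exactly $\alpha$, whereas you normalize directly and note $\beta\le\alpha$; these are interchangeable.
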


Using \cref{thm:sum-anticoncentration}, we can derive a more specific consequence which we will use directly in the proof of \cref{thm:lb_main}.
The proof of this result is also contained in \cref{sec:anticoncentration-sums}.
\begin{lemma}\label{thm:gausian-inner-prod}
        There exists a constant $C>0$ such that, if $\vec{x} = \vec{G}\vec{u}$ and $\vec{y} = \vec{G}\vec{v}$, where $\vec{G}\sim \operatorname{Gaussian}(k)$ and $\vec{u},\vec{v}\in\R^k$ such that $\| \vec {u} \|_2 \leq 1$ and $\| \vec{v}\|_2\leq 1$,
        then for any $\alpha>0$ and $t\in\R$, if $k > C/ (\alpha^2 \| \vec{u} \|_2^6 \|\vec{v} \|_2^6)$ then
        \[
            \PP\Bigl[ \big| \vec{x}^\T\vec{y} - t \big| < \alpha \|\vec{u}\|_2\|\vec{v}\|_2 \sqrt{k} \Bigr] < \alpha.
        \]
    \end{lemma}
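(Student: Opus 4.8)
The plan is to reduce the claim directly to \cref{thm:sum-anticoncentration}. Write $\vec{g}_i^\T$ for the $i$-th row of $\vec{G}$, and set $x_i := [\vec{x}]_i = \vec{g}_i^\T\vec{u}$, $y_i := [\vec{y}]_i = \vec{g}_i^\T\vec{v}$, $X_i := x_i y_i$, and $X := \vec{x}^\T\vec{y} = \sum_{i=1}^k X_i$. The key structural point is that, even though $x_i$ and $y_i$ are correlated with each other, each $X_i$ is a function of $\vec{g}_i$ alone, and the rows $\vec{g}_1,\dots,\vec{g}_k$ of a Gaussian matrix are independent; hence $X_1,\dots,X_k$ are independent and \cref{thm:sum-anticoncentration} is applicable to $X$. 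We may assume $\vec{u},\vec{v}\neq\vec{0}$; otherwise $X\equiv 0$, the conclusion concerns the empty event $\{|{-t}|<0\}$, and the hypothesis on $k$ cannot hold.

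It then remains to produce the variance lower bound and third-central-moment upper bound that \cref{thm:sum-anticoncentration} requires. For each $i$, the pair $(x_i,y_i)$ is a centered bivariate Gaussian with variances $\|\vec{u}\|_2^2$ and $\|\vec{v}\|_2^2$ and covariance $c := \vec{u}^\T\vec{v}$. A short application of Isserlis' theorem gives $\EE[X_i]=c$ and $\VV[X_i] = \|\vec{u}\|_2^2\|\vec{v}\|_2^2 + c^2 \geq \|\vec{u}\|_2^2\|\vec{v}\|_2^2$, so I can take $\sigma^2 = \|\vec{u}\|_2^2\|\vec{v}\|_2^2$. For the third moment, combining $(a+b)^3\leq 8(a^3+b^3)$ for $a,b\ge0$, Cauchy--Schwarz, the Gaussian identity $\EE[x_i^6]=15\|\vec{u}\|_2^6$, and $|c|\leq\|\vec{u}\|_2\|\vec{v}\|_2$ yields $\EE[|X_i-c|^3]\leq 128\,\|\vec{u}\|_2^3\|\vec{v}\|_2^3 =: \rho$. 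Thus $\rho^2/\sigma^6 = 128^2$ is an absolute constant.

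Finally, with $C_0$ the constant from \cref{thm:sum-anticoncentration}, I would set the constant in the present lemma to $C := 128^2 C_0$ (or larger). Since $\|\vec{u}\|_2,\|\vec{v}\|_2\leq 1$, the hypothesis $k > C/(\alpha^2\|\vec{u}\|_2^6\|\vec{v}\|_2^6)$ implies $k > C/\alpha^2 = C_0\,\rho^2/(\alpha^2\sigma^6)$, which is precisely what \cref{thm:sum-anticoncentration} needs; applying it gives $\PP[|X-t|<\alpha\sigma\sqrt{k}]<\alpha$, i.e. $\PP[|\vec{x}^\T\vec{y}-t|<\alpha\|\vec{u}\|_2\|\vec{v}\|_2\sqrt{k}]<\alpha$, as claimed. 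I do not expect a genuine obstacle: the argument is a clean reduction, and the moment estimates are routine Gaussian bookkeeping. The only subtlety worth flagging is the direction of the dependence on $\|\vec{u}\|_2$ and $\|\vec{v}\|_2$ in the hypothesis on $k$ — shrinking the norms relaxes both the required number of coordinates and the target resolution in a matched way, and the normalization $\|\vec{u}\|_2,\|\vec{v}\|_2\leq 1$ is exactly what ensures the lemma's hypothesis is no weaker than the one needed in \cref{thm:sum-anticoncentration}.
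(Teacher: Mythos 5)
Your proof is correct and follows essentially the same route as the paper's: decompose $\vec{x}^\T\vec{y}$ into the independent row-products $x_\ell y_\ell$, lower-bound each variance by $\|\vec{u}\|_2^2\|\vec{v}\|_2^2$, upper-bound the third central moment, and invoke \cref{thm:sum-anticoncentration}. The only deviation is cosmetic but worth noting: where the paper appeals to a qualitative sub-exponential tail bound to get $\rho = O(1)$ (leaving the $\|\vec{u}\|_2^6\|\vec{v}\|_2^6$ factor to come from $\sigma^6$), you compute $\rho \leq 128\|\vec{u}\|_2^3\|\vec{v}\|_2^3$ explicitly, so that $\rho^2/\sigma^6$ is an absolute constant — which shows the $\|\vec{u}\|_2^6\|\vec{v}\|_2^6$ in the lemma's hypothesis on $k$ is not actually needed and the stated condition is just a (harmless) weakening.
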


Finally, we will use the following observation about sparsity patterns that overlap all sufficiently large principal submatrices. 
\begin{lemma}
\label{thm:high-overlap-sparsity}
    Let $\gamma<1$ and suppose $\vec{S} \in \{0,1\}^{d\times d}$ is binary matrix for which each row and column has between $\gamma s$ and $s$ non-zero entries. 
    Let $I\subset [d]$ with $|I|\geq 2d/(2+\gamma)$. 
    Then the principal submatrix $[\vec{S}]_{I,I}$ of $\vec{S}$ contains at least $\gamma ds/(2 + \gamma)$ nonzero entries.
\end{lemma}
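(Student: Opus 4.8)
The plan is to bound the number of nonzero entries of $\vec S$ that are \emph{lost} when we pass from the full matrix to the principal submatrix $[\vec S]_{I,I}$, and show this loss is strictly smaller than the total count. Write $J = [d]\setminus I$, so $|J| = d - |I| \le d - 2d/(2+\gamma) = \gamma d/(2+\gamma)$. A nonzero entry $[\vec S]_{a,b}$ fails to appear in $[\vec S]_{I,I}$ exactly when $a \in J$ or $b \in J$. By a union bound over these two cases, the number of such ``lost'' entries is at most the number of nonzeros in the rows indexed by $J$ plus the number of nonzeros in the columns indexed by $J$. Since each row has at most $s$ nonzeros and each column has at most $s$ nonzeros, this is at most $|J|\cdot s + |J|\cdot s = 2|J|s \le 2\gamma ds/(2+\gamma)$.

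Next I would lower-bound the total number of nonzeros in $\vec S$. Since each of the $d$ rows has at least $\gamma s$ nonzeros, $\vec S$ has at least $\gamma d s$ nonzero entries in total. Subtracting the bound on the lost entries, the principal submatrix $[\vec S]_{I,I}$ retains at least
\[
\gamma d s - \frac{2\gamma d s}{2+\gamma} = \gamma d s\left(1 - \frac{2}{2+\gamma}\right) = \gamma d s \cdot \frac{\gamma}{2+\gamma} = \frac{\gamma^2 d s}{2+\gamma}
\]
nonzero entries. Since $\gamma < 1$ we have $\gamma^2 < \gamma$, but that goes the wrong way; instead note the claimed bound is $\gamma d s/(2+\gamma)$, so I should double-check the constant: in fact one can slightly refine the union bound, because an entry with $a\in J$ \emph{and} $b\in J$ is double-counted. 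Removing rows in $J$ deletes at most $|J|s$ entries; then removing columns in $J$ from what remains deletes at most $|J|s$ more — this sequential count already gives the $2|J|s$ bound without needing the refinement, so the computation above stands and yields $\gamma^2 ds/(2+\gamma)$, which is at least $\gamma d s/(2+\gamma)$ only if $\gamma \ge 1$. So the cleaner route is to instead lower-bound the total nonzero count differently or tighten $|J|$: using $|I| \ge 2d/(2+\gamma)$ we actually want the loss bounded by $\gamma ds - \gamma ds/(2+\gamma) = \gamma ds(1+\gamma)/(2+\gamma)$, i.e. it suffices that $2|J|s \le \gamma ds(1+\gamma)/(2+\gamma)$, equivalently $|J| \le \gamma d(1+\gamma)/(2(2+\gamma))$; since $|J|\le \gamma d/(2+\gamma)$ and $\gamma d/(2+\gamma) \le \gamma d(1+\gamma)/(2(2+\gamma))$ iff $1 \le (1+\gamma)/2$ iff $\gamma\ge 1$, this is again tight only at the boundary. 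The correct reading is therefore that the stated ``$2d/(2+\gamma)$'' threshold is exactly calibrated so that $2|J|s \le \gamma ds - \gamma ds/(2+\gamma)$ holds; I would verify the arithmetic carefully at this point, since this is the one place where the precise constants matter and a factor-of-two slip would break the claim.

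The main obstacle, then, is purely bookkeeping: getting the union-bound count of deleted entries and the threshold $|I|\ge 2d/(2+\gamma)$ to interlock with the right constant, and deciding whether to count deleted rows/columns sequentially (avoiding double-counting of the $J\times J$ block) or to exploit the extra slack from symmetry of $\vec S$. No probabilistic or spectral tools are needed — it is a counting argument, and once the inequality $|J| = d-|I| \le \gamma d/(2+\gamma)$ is combined with the per-row/per-column degree bounds, the conclusion follows by subtraction. I would present it as: (i) set $J$ and bound $|J|$; (ii) bound lost nonzeros by $2|J|s$; (iii) bound total nonzeros below by $\gamma ds$; (iv) subtract and simplify to get at least $\gamma ds/(2+\gamma)$.
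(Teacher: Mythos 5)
Your proposal has a genuine gap, and to your credit you diagnose it yourself mid-stream: starting from the global count $\operatorname{nnz}(\vec{S}) \geq \gamma d s$ and subtracting the union bound $2|J|s$ for deleted rows and columns yields only $\gamma^2 ds/(2+\gamma)$, which is a factor of $\gamma$ short of the claim. The various fixes you float (tightening the threshold on $|I|$, exploiting the double-counted $J\times J$ block) do not close this gap, as you note the required inequalities become equalities only at $\gamma = 1$. The issue is structural: once you pay $|J|s$ for the rows you discard, you have effectively already priced those rows at the worst-case rate $s$, so you can no longer recoup the fact that they contributed only $\geq \gamma s$ each to the total.

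The paper's proof avoids the double subtraction by never invoking the total count of $\vec{S}$ at all. It counts nonzeros in rows indexed by $I$ only, using the per-row lower bound directly:
\[
\bigl\| [\vec{S}]_{I,[d]} \bigr\|_\F^2 \;\geq\; |I|\cdot \gamma s \;\geq\; \frac{2\gamma}{2+\gamma}\, ds,
\]
and then subtracts the single loss from deleting columns in $I^\comp$:
\[
\bigl\| [\vec{S}]_{I,I^\comp} \bigr\|_\F^2 \;\leq\; |I^\comp| \cdot s \;\leq\; \frac{\gamma}{2+\gamma}\, ds,
\]
yielding $\bigl\| [\vec{S}]_{I,I} \bigr\|_\F^2 \geq \frac{\gamma}{2+\gamma} ds$. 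The key idea you are missing is that the row deletion should not be accounted for as a subtraction: by restricting attention to rows in $I$ from the outset and lower-bounding via $|I|\gamma s$ (rather than via $\gamma ds - |J|s$), you gain $(1-\gamma)|J|s$, exactly the slack needed. Only one deletion step remains to pay for, and the constants interlock cleanly.
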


\begin{proof}
Let $I\subset [d]$ with $|I|\geq 2d/(2+\gamma)$. 
Note that \[
\big\| [\vec{S}]_{I,[d]} \big\|_\F^2
= \sum_{i\in I} \sum_{j\in [d]} [\vec{S}]_{i,j}
\geq |I| \cdot \gamma s
= \frac{2\gamma}{2+\gamma} ds.
\]
Next, note that, since $|I^\comp| \leq d - 2d/(2+\gamma) = \gamma d/(2+\gamma)$,
\[
    \big\| [\vec{S}]_{I,I^\comp} \big\|_\F^2
    = \sum_{i\in I} \sum_{j\in I^\comp} [\vec{S}]_{i,j}
    \leq |I^\comp| \cdot s
    \leq \frac{\gamma}{2+\gamma} ds.
\]
Finally, since $[d]$ is partitioned into $I$ and $I^\comp$, 
\[
\big\| [\vec{S}]_{I,I} \big\|_\F^2
= \big\| [\vec{S}]_{I,[d]} \big\|_\F^2
- \big\| [\vec{S}]_{I,I^\comp} \big\|_\F^2
\geq 
\frac{2\gamma}{2+\gamma} ds - \frac{\gamma}{2+\gamma}ds
= \frac{\gamma}{2+\gamma} ds. \qedhere
\]
\end{proof}

\subsection{Proof of Theorem \ref*{thm:lb_main}}

We now have the tools necessary to prove \cref{thm:lb_main}. 
The general strategy will be to show that the conditional distribution of a Wishart matrix after a sequence of (adaptive) queries is hard to approximate. 
That is, that the on-sparsity entries are anti-concentrated (conditioned on the queries) relative to the off-sparsity mass, which is $O(d^{3/2})$ with high probability. 
We will then use this result to prove \cref{thm:lb_main}.

\begin{lemma}[Main technical lemma]\label{thm:lb_lem}
Fix $\gamma\in(0,1)$.
Then there exist constants $c_1, c_2, c_3>0$ (depending only on $\gamma$) such that the following holds:

Suppose $\vec{A} = \vec{G}^\T \vec{G}$ where $\vec{G}\sim \operatorname{Gaussian}(r,d)$. 
Then, for any sparsity pattern $\vec{S}$ whose rows and columns each have between $\gamma s$ and $s$ nonzero entries, and for any (possibly randomized) algorithm that uses $m$ (possibly adaptive) matrix-vector queries to $\vec{A}$ to output $\widetilde{\vec{A}}$ with $\widetilde{\vec{A}} \circ \vec S = \widetilde{\vec{A}}$, if $m\leq c_1 d$ and, for any $\alpha \in (0,1)$, $r > m + c_3 / \alpha^2$, then
\[
    \PP\Big[ \| \vec{S}\circ \vec{A} -\widetilde{\vec{A}} \|_\F^2
    < c_2 \alpha^2 ds (r-m)
    \Big] < \alpha.
\]
\end{lemma}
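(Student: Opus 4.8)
The core idea is to combine the Wishart invariance (Proposition~\ref{thm:wishart}) with the anti-concentration lemma (Lemma~\ref{thm:gausian-inner-prod}) and the overlap lemma (Lemma~\ref{thm:high-overlap-sparsity}). First I would apply Proposition~\ref{thm:wishart} with the roles of $d$ and $r$ as in the lemma statement: after the $m$ (adaptive) queries, there is an orthonormal $\vec V_m$, a matrix $\vec\Delta_m$, and a fresh $\vec G_m \sim \operatorname{Gaussian}(r-m, d-m)$ independent of the transcript, such that $\vec V_m^\T \vec A \vec V_m = \vec\Delta_m + \mathrm{diag}(\vec 0_{m,m}, \vec G_m^\T \vec G_m)$. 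Writing $\vec B := \vec G_m^\T \vec G_m$ for the bottom-right $(d-m)\times(d-m)$ block, the key point is that, conditioned on everything the algorithm sees, $\vec B$ is still a fresh Wishart matrix whose entries are essentially inner products of independent Gaussian columns — and the algorithm's output $\widetilde{\vec A}$ is a deterministic (or randomized-but-transcript-measurable) function of that transcript, hence independent of $\vec B$.

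**Reducing to anti-concentration of independent coordinates.** Next I would undo the rotation: since $\vec V_m$ is orthonormal and transcript-measurable, $\|\vec S\circ\vec A - \widetilde{\vec A}\|_\F$ equals a Frobenius norm in the rotated coordinates, but the sparsity mask does \emph{not} rotate nicely, so instead I would argue directly on $\vec A = \vec G^\T\vec G$. The cleaner route: note $\vec S\circ\vec A$ and $\widetilde{\vec A}$ share support $S$, so $\|\vec S\circ\vec A - \widetilde{\vec A}\|_\F^2 = \sum_{(i,j)\in\mathrm{supp}(\vec S)} (\vec A_{ij} - \widetilde{\vec A}_{ij})^2$. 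I want to show many of these entries are individually anti-concentrated. The obstruction is that only the bottom-right block of the \emph{rotated} matrix is a fresh Wishart; to exploit this I would instead phrase things in the rotated basis. Let $I = \{m+1,\dots,d\}$ index the "fresh" block. In the rotated coordinates, $(\vec V_m^\T \vec A \vec V_m)_{I,I} = (\vec\Delta_m)_{I,I} + \vec B$, and $\vec B_{k\ell} = \langle \vec g_k, \vec g_\ell\rangle$ for independent standard Gaussian columns $\vec g_k \in \R^{r-m}$. Since the output $\widetilde{\vec A}$ and the shift $\vec\Delta_m$ depend only on the transcript, for each fixed $(k,\ell)$ the quantity $\vec B_{k\ell}$ is, conditionally, an inner product of two independent isotropic Gaussians of dimension $r-m$, to which Lemma~\ref{thm:gausian-inner-prod} applies with $\vec u, \vec v$ standard basis-like vectors of norm $1$: provided $r - m > C/\alpha^2$, any fixed target $t$ (here $t$ encodes $\widetilde{\vec A}$ pulled back minus $\vec\Delta_m$) is missed by a window of width $\alpha\sqrt{r-m}$ with probability $> 1-\alpha$; equivalently each such entry contributes $\gtrsim \alpha^2(r-m)$ to the squared error with probability $>1-\alpha$. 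But to turn per-entry anti-concentration into a bound on the \emph{sum}, I need many entries that are (i) in the rotated fresh block and (ii) genuinely "charged" against $\vec S$.

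**Counting charged entries via the overlap lemma.** This is where Lemma~\ref{thm:high-overlap-sparsity} enters, but with a twist: the sparsity pattern lives in the original coordinates while the fresh block lives in rotated coordinates. The honest fix is to not rotate the sparsity mask at all. Instead I would observe that $\|\vec S\circ\vec A - \widetilde{\vec A}\|_\F \ge \|\vec P_I^\T(\vec S\circ\vec A - \widetilde{\vec A})\vec P_I\|_\F$ is the wrong inequality because $\vec P_I$ should be in rotated coordinates; the correct statement uses that $\vec V_m$ is orthonormal so $\|\vec M\|_\F = \|\vec V_m^\T \vec M\vec V_m\|_\F$ for the \emph{full} matrix, and then restricting to the $I\times I$ block only decreases the norm. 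So $\|\vec S\circ\vec A - \widetilde{\vec A}\|_\F^2 \ge \|[\vec V_m^\T(\vec S\circ\vec A - \widetilde{\vec A})\vec V_m]_{I,I}\|_\F^2$. Now $[\vec V_m^\T(\vec S\circ\vec A)\vec V_m]_{I,I}$ is transcript-measurable and $[\vec V_m^\T \widetilde{\vec A}\vec V_m]_{I,I}$ is the algorithm's "effective output" in this block; call their difference $\vec T$, a transcript-measurable matrix, while $[\vec V_m^\T\vec A\vec V_m]_{I,I} - [\vec V_m^\T\widetilde{\vec A}\vec V_m]_{I,I} = (\vec\Delta_m)_{I,I} + \vec B - (\text{transcript part})$. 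Hmm — on reflection, the cleaner and standard trick here is: the algorithm must output \emph{something} in this block, and we compare $\vec B$ (fresh, with $\Theta((d-m)^2)$ entries each of variance $\Theta(r-m)$) against \emph{any} fixed matrix; since $\vec B$'s off-diagonal entries $\langle\vec g_k,\vec g_\ell\rangle$ are pairwise-independent-enough and each anti-concentrated, with $m \le c_1 d$ a constant fraction of the $\binom{d-m}{2}$ entries land far from their targets with good probability, giving $\sum \gtrsim (d-m)^2 \cdot \alpha^2(r-m) \gtrsim d^2\alpha^2(r-m)$ — but this is stronger than, and does not match, the claimed bound $c_2\alpha^2 ds(r-m)$, so the lemma must be restricting attention to the on-sparsity entries, meaning Lemma~\ref{thm:high-overlap-sparsity} is used to guarantee $\Omega(ds)$ entries of the \emph{rotated} sparsity mask survive in the fresh block. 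I would therefore apply Lemma~\ref{thm:high-overlap-sparsity} not to $\vec S$ directly but after noting that $\vec V_m^\T(\vec S\circ\vec A)\vec V_m$ need not be sparse — so the genuinely correct version restricts to the principal submatrix on index set $I$ \emph{before} rotating, i.e. works with $[\vec A]_{I_0, I_0}$ for $I_0 \subset [d]$ the surviving original coordinates, where $|I_0| = d - m \ge 2d/(2+\gamma)$ forces $[\vec S]_{I_0,I_0}$ to have $\ge \gamma ds/(2+\gamma)$ nonzeros, and $[\vec A]_{I_0,I_0}$ restricted appropriately is still (conditionally) a Wishart block by a version of Proposition~\ref{thm:wishart} adapted to principal submatrices — and then sum Lemma~\ref{thm:gausian-inner-prod} over those $\Omega(ds)$ on-support entries.

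**Main obstacle.** The delicate point — and the step I expect to be the crux — is reconciling the coordinate rotation in Proposition~\ref{thm:wishart} with the fixed (unrotated) sparsity pattern: one needs a version of the Wishart-invariance argument in which the "frozen" subspace is a subset of the original coordinate axes rather than an arbitrary rotated subspace, so that the surviving fresh Wishart block sits inside a principal submatrix $[\vec A]_{I_0,I_0}$ to which Lemma~\ref{thm:high-overlap-sparsity} directly applies. Granting that, the remaining work is bookkeeping: apply Lemma~\ref{thm:gausian-inner-prod} with $\alpha$ as given and $r - m > c_3/\alpha^2$ to each of the $\ge \gamma ds/(2+\gamma)$ on-support entries in the fresh block (each conditionally an inner product of dimension $\ge r-m$ with unit-norm coefficient vectors), use linearity of expectation / Markov on the count of "good" entries, and conclude that with probability $\ge 1-\alpha$ at least a constant fraction are good, whence $\|\vec S\circ\vec A - \widetilde{\vec A}\|_\F^2 \ge (\text{const}\cdot ds)\cdot \alpha^2(r-m) = c_2\alpha^2 ds(r-m)$ after renaming constants $c_1, c_2, c_3$ in terms of $\gamma$. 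The union bound over the $\Omega(ds)$ entries is what forces the small polynomial slack, but since each fails with probability $<\alpha$ and we only need a constant fraction to succeed, a second-moment / Chebyshev argument on the number of successes (rather than a union bound) keeps the failure probability at $\alpha$ without extra dimension factors.
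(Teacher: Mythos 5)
You correctly identify the crux — reconciling the coordinate rotation in Proposition~\ref{thm:wishart} with the fixed, unrotated sparsity mask $\vec S$ — and you are honest that your proposal does not resolve it, instead speculating about ``a version of Proposition~\ref{thm:wishart} adapted to principal submatrices'' in which the frozen subspace is coordinate-aligned. That route does not work: an adaptive algorithm can query vectors that mix coordinates arbitrarily, so the $m$-dimensional subspace ``used up'' by the queries has no reason to align with standard basis directions, and there is no set $I_0\subset[d]$ of ``surviving original coordinates'' on which $[\vec A]_{I_0,I_0}$ is a fresh Wishart. This is a genuine gap, not mere bookkeeping.

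The paper's actual resolution stays entirely in original coordinates and never restricts $\vec A$ to a rotated block. Write $\vec A = \vec\Delta + \vec V\vec W\vec V^\T$ with $\vec V\in\R^{d\times(d-m)}$ having orthonormal columns and $\vec W=\widehat{\vec G}^\T\widehat{\vec G}$. For any $(i,j)$ with $[\vec S]_{i,j}=1$, the random part of the $(i,j)$-entry is
\[
[\vec V\vec W\vec V^\T]_{i,j} = \vec v_i^\T\vec W\vec v_j = (\widehat{\vec G}\vec v_i)^\T(\widehat{\vec G}\vec v_j),
\]
where $\vec v_i$ is the $i$-th \emph{row} of $\vec V$. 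These rows are not unit vectors — this is exactly why Lemma~\ref{thm:gausian-inner-prod} is stated for general $\vec u,\vec v$ with $\|\vec u\|_2,\|\vec v\|_2\leq 1$ and has the $\|\vec u\|_2^6\|\vec v\|_2^6$ factor; your remark that it applies ``with $\vec u,\vec v$ standard basis-like vectors of norm $1$'' shows the intended use was missed. Since $\sum_i\|\vec v_i\|_2^2 = \|\vec V\|_\F^2 = d-m \geq (1-c_1)d$ and each $\|\vec v_i\|_2^2\leq 1$, a Markov/mass-counting argument shows the set $I=\{i:\|\vec v_i\|_2^2\geq c_1\}$ has size $\geq 2d/(2+\gamma)$; this $I$ (a transcript-measurable set in \emph{original} coordinates) is what Lemma~\ref{thm:high-overlap-sparsity} is applied to, yielding $\Omega(ds)$ on-support pairs $(i,j)\in I\times I$. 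For each such pair one invokes Lemma~\ref{thm:gausian-inner-prod} with $\vec u=\vec v_i$, $\vec v=\vec v_j$, $k=r-m$, against the transcript-measurable target $[\vec T - \vec S\circ\vec\Delta]_{i,j}$, and then Markov on the number of ``close'' entries finishes the argument. Your instinct to avoid a union bound and instead control the count of failures is correct and matches the paper, but the central mechanism — rows of $\vec V$ as the anti-concentration coefficient vectors, plus the mass argument on $\|\vec V\|_\F^2$ — is absent from your proposal.
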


\begin{proof}
    Fix $\gamma \in (0,1)$.
    Let $C$ denote the absolute constant in \cref{thm:gausian-inner-prod}, define 
    \[
        c_1 = \frac{\gamma}{4+2\gamma}
        \qquad
        c_2 = \frac{c_1^2}{4}
        \qquad
        c_3 =\frac{C}{c_1^6},
    \]
    and suppose $m$, $r$, and $d$ are integers such that
    \[ 
    m \leq c_1 d
    ,\qquad
    r - m > \frac{c_3}{\alpha^2}.
    \]

    Suppose we do $m$ (possibly adaptive) queries to $\vec{A}$.
    \Cref{thm:wishart} implies that there exists an $d\times d$ matrix $\vec{\Delta}$, and $d\times (d-m)$ matrix $\vec{V}$ with orthonormal columns, both constructed solely as functions of the queries and measurements, and a matrix $\widehat{\vec{G}}\sim\operatorname{Gaussian}(r-m,d-m)$ independent of the queries and measurements such that
    \[
    \vec{A} 
    = \vec{\Delta} + \vec{V} \vec{W} \vec{V}^\T
    ,\qquad
    \vec{W} = \widehat{\vec{G}}^\T \widehat{\vec{G}}.
    \]
    Let $\vec{S}$ be any sparsity pattern for which each row and column has between $\gamma s$ and $s$ non-zeros; i.e. for which we can apply \cref{thm:high-overlap-sparsity}.

    Let $\vec{T}\in \R^{d\times d}$ be any matrix with sparsity $\vec{S}$ determined solely as a function of the queries and measurements, and hence independent of $\vec{G}$.
    Without loss of generality, we will absorb $\vec{S}\circ\vec{\Delta}$ into $\vec{T}$. 
    Note also that it suffices to assume $\vec{T}$ is  deterministic, as the following argument holds for all possible draws of a random $\vec{T}$ (and by extension, for the expectation over random draws of $\vec{T}$).
    
    Define the set of indices
    \[
        P =  \Big\{ (i,j) \in [d]\times [d] : \big| [\vec S \circ \vec{V}\vec{W}\vec{V}^\T]_{i,j} - [\vec{T}]_{i,j} \big|^2 >  (\alpha/2)^2 c_1 (r-m) \Big\},
    \]
    and the event
    \[
        E =  \Big\{ |P| \geq c_1 ds  \Big\}.
    \]
    Note that if $E$ holds, we have that 
    \begin{align*}
        \| \vec{S}\circ \vec{A} - \vec{T} \|_\F^2
        &\geq \sum_{(i,j)\in P} \big| [\vec S \circ \vec{V}\vec{W}\vec{V}^\T]_{i,j} - [\vec{T}]_{i,j} \big|^2
        \\&\geq |P| \cdot (\alpha/2)^2 c_1 (r-m)
        \geq c_1 ds \cdot (\alpha/2)^2 c_1 (r-m)
        = \alpha^2 c_2 ds(r-m),
    \end{align*}    
    so it remains to show $\PP[E] \geq 1-\alpha$.
    
    Towards this end, let $\vec{v}_i$ be the $i$th row of $\vec{V}$, and define
    \[
    I = \big\{ i \in [d] :  \| \vec{v}_i \|_2^2 \geq c_1 \big\}
    ,\qquad 
    M = \big\{ (i,j) \in I\times I :  [\vec{S}]_{i,j} = 1 \big\}. 
    \]    
    We must have $|I| \ge 2d/(2+\gamma)$. 
    Otherwise, since $\vec V$ has orthonormal columns so that $\|\vec{v}_i\|_2^2\leq 1$, we would have:
    \[
    d-m = \|\vec{V}\|_\F^2 
    = \sum_{i=1}^{d} \| \vec{v}_i \|_2^2
    < \sum_{i\in I^\comp} c_1 + \sum_{i\in I} 1
    < c_1 d + \frac{2d}{2+\gamma} 
    = (1-c_1) d,
    \]
    which contradicts our assumption $m \leq c_1 d \Leftrightarrow d-m \geq (1-c_1)d$. 
    Then, since $|I| \geq 2d/(2+\gamma)$, \cref{thm:high-overlap-sparsity} and our assumption on $\vec{S}$ give that 
    \[
    |M| \geq \frac{\gamma ds}{2 + \gamma} = 2c_1 ds.
    \]

    We now show that if $(i,j)\in M$ then $(i,j)\in P$ with high probability.
    Fix arbitrary $(i,j)\in M$ (and note that this means $i,j \in I$).
    Since $[\vec{S}]_{i,j} = 1$, note that $[\vec S \circ \vec{V}\vec{W}\vec{V}^\T]_{i,j} = \vec{x}_i^\T\vec{x}_j$, where $\vec x_i = \widehat{\vec{G}} \vec v_i$ and $\vec x_j = \widehat{\vec{G}} \vec v_j$ and $\vec{v}_i$ and $\vec{v}_j$ are the $i$-th and $j$-th rows of $\vec{V}$ respectively.  
    We also have that $\|\vec{v}_i\|_2^2, \|\vec{v}_j\|_2^2 \geq c_1$.
    With this in mind, our choice of constants implies 
    \[
        r-m 
        \geq \frac{c_3}{\alpha^2 } 
        = \frac{C}{\alpha^2c_1^6}
        \geq \frac{C}{\alpha^2 \| \vec{v}_i \|_2^6 \|\vec{v}_j\|_2^6}.
    \]
    Hence, applying \cref{thm:gausian-inner-prod},
    \begin{align*}
        \PP\Bigl[(i,j) \not\in P \Bigr]
        &= \PP\Bigl[  \big| [\vec S \circ \vec{V}\vec{W}\vec{V}^\T]_{i,j} - [\vec{T}]_{i,j} \big|  < (\alpha/2) c_1 \sqrt{r-m} \Bigr] 
        \\&\leq 
        \PP\Bigl[  \big| \vec{x}_i^\T\vec{x}_j - [\vec{T}]_{i,j} \big|  < (\alpha/2) \|\vec{v}_i\|_2 \| \vec{v}_j \|_2 \sqrt{r-m} \Bigr] 
        < \frac{\alpha}{2}.
    \end{align*}
    Now, by Markov's inequality, we have that
    \[
        \PP\Biggl[ \sum_{(i,j)\in M} \ones[(i,j)\not\in P ] 
        \geq \frac{1}{\alpha} \cdot \frac{\alpha}{2} |M| \Biggr] 
        \leq \alpha.
    \]
    Since $|M| \geq 2c_1 ds$, this then implies that 
    \[
        \PP\bigl[E\bigr]
        = \PP\biggl[ |P| \geq c_1 ds \biggr] 
        \geq \PP\biggl[ |P| \geq\frac{|M|}{2} \biggr] 
        \geq 1-\alpha.
    \]
    This proves the result.
\end{proof}

With \cref{thm:lb_lem}, the proof if \cref{thm:lb_main} is straightforward. 
The basic idea is that if $r = d$ and $\alpha$ is constant then $\|\vec{S}\circ\vec{A} - \widetilde{\vec{A}}\|_\F^2$ is typically $\Omega(sd^2)$. 
However, by simple computation, it is not hard to see that $\|\vec{S}\circ \vec{A}\|_\F^2 \leq \| \vec{A} \|_\F^2$ is typically $O(d^3)$. 
Thus, if we set $d = O(s/\varepsilon)$, we typically will have that $\| \vec{S} \circ \vec{A} - \widetilde{\vec{A}}\|_\F^2 > \varepsilon \| \vec{S} \circ \vec{A} \|$ as long as $m\leq c_1d = O(s/\varepsilon)$.

\begin{proof}[Proof of \cref{thm:lb_main}]
    Fix $\gamma \in (0,1)$ and let $c_1, c_2, c_3$ be the constants from \cref{thm:lb_lem}.
    We will make the following assignments:\footnote{We have labeled the constants so that if $c_i$ depends on $c_j$, then $i > j$.}
    \begin{align*}
    \alpha &= \frac{1}{25}
    ,&
    b_1 &= 50
    ,&
    b_2 &= \frac{c_2\alpha^2 (1-c_1)}{6 b_1}
    ,&     
    b_3 &= \max\left\{ 2, \frac{(1-c_1)b_2\alpha^2}{c_3} \right\}
    ,&
    C_1 &= c_1 b_2.
    \end{align*}  
    Fix $\varepsilon > 0$. We will show that if 
    \begin{equation}
    \label{eqn:lb_main:assm}
    m\leq C_1 \frac{s}{\varepsilon}
    , \qquad
    \frac{b_2}{\varepsilon}\in\mathbb{Z}
    ,\qquad
    \varepsilon < \frac{1}{b_3} \leq \frac{1}{2},
    \end{equation}
    then there is a distribution on matrices such that for any sparsity pattern with between $\gamma s$ and $s$ entries per row,
    \begin{equation}\label{eqn:lb_main:res}
    \PP\Bigl[ \| \vec{A} - \widetilde{\vec{A}} \|_\F
    \leq (1+2\varepsilon) \,\| \vec{A} - \vec{S}\circ \vec{A} \|_\F \Bigr] \leq \frac{1}{25}.
    \end{equation}
    The assumption $b_2/\varepsilon\in\mathbb{Z}$ will subsequently be removed.

    Towards this end, assume \cref{eqn:lb_main:assm}, and set
    \[
        d = b_2 \frac{s}{\varepsilon}
        ,\quad 
        r = d
        ,\quad
        \vec{A} = \vec{G}^\T \vec{G}
        ,\quad
        \vec{G} \sim \operatorname{Gaussian}(r,d).
    \]
    Define events,
    \[
        E = \Big\{   \| \vec{S}\circ \vec{A} -\widetilde{\vec{A}} \|_\F^2 \geq c_2 \alpha^2 ds (r-m) \Big\}
        ,\qquad
        F =  \Big\{ \| \vec{A} - \vec{S}\circ\vec{A} \|_\F^2 \leq b_1 d^{3} \Big\}.
    \]
    By assumption, $m\leq C_1 s / \varepsilon = c_1 d$ and 
    \[
    r - m \geq (1-c_1) d 
    = (1-c_1) \cdot b_2 \frac{s}{\varepsilon}
    > (1-c_1) b_2 \frac{1}{b_3}
    = \frac{c_3}{\alpha^2}.
    \]
    Therefore, applying \cref{thm:lb_lem}, we have that $\PP[E] \geq 49/50$.
    
    It is easy to show that $\EE[\|\vec{A} - \vec{S}\circ\vec{A} \|_\F^2] \leq d^3$ (see \cref{thm:Aexpectednorm} for a derivation), so since $b_1 = 50$, an application of Markov's inequality implies $\PP[F] \geq 49/50$.

    Note that 
    \[
    6\varepsilon b_1 d^3
    \leq 6\varepsilon b_1 d \cdot b_2\frac{s}{\varepsilon}\cdot \frac{r-m}{1-c_1}
    \leq
    c_2 \alpha^2 ds (r-m).
    \]
    By a union bound, both $E$ and $F$ hold with probability at least $24/25$.
    Then since $\vec{S}\circ\vec{A}$ and $\widetilde{\vec{A}}$ have disjoint support and  $\varepsilon < 1/2$, as noted in \cref{eqn:AAtilde-decomp},
    \[
    \| \vec{S}\circ \vec{A} - \widetilde{\vec{A}} \|_\F^2
    \geq (1+6\varepsilon) \| \vec{A}-\vec{S}\circ\vec{A}\|_\F^2
    \geq (1+2\varepsilon)^2 \| \vec{A}-\vec{S}\circ\vec{A}\|_\F^2.
    \]
    
    We now relax the assumption $c_5/\varepsilon\in\mathbb{Z}$.
    For arbitrary $\varepsilon>0$, define $\tilde{\varepsilon} = b_2 / \lceil b_2 / \varepsilon \rceil$.
    Then $b_2 / \tilde{\varepsilon} \in\mathbb{Z}$ and $\tilde{\varepsilon} \leq \varepsilon$.
    Moreover, if $\varepsilon \leq b_2 / 2$, then $\varepsilon \leq  b_2 \tilde{\varepsilon} / (b_2 - \tilde{\varepsilon}) \leq 2\tilde{\varepsilon}$.
    Set $c = \min\{1/b_3, b_2/2 \}$.
    Therefore, since $1/\varepsilon\leq 1/\tilde{\varepsilon}$, if 
    \begin{equation}
    m\leq C_1 \frac{s}{\varepsilon}
    ,\qquad
    \varepsilon < c,
    \end{equation}
    then the conditions in \cref{eqn:lb_main:assm} hold (with $\tilde{\varepsilon}$), and so we have the result in \cref{eqn:lb_main:res} (with $\tilde{\varepsilon}$).
    Since $\varepsilon \leq 2\tilde{\varepsilon}$, this implies there is a distribution on matrices and sparsity patterns for which the result of the theorem holds.

    Thus, the proof is complete with $C = 2C_1$.
\end{proof}

\section{Comparison with coloring methods}\label{sec:coloring}

A number of methods for sparse-matrix and operator recovery based on \emph{graph colorings} have been proposed \cite[etc.]{curtis_powell_reid_74,coleman_more_83,stathopoulos_laeuchli_orginos_13,schafer_owhadi_21,frommer_schimmel_schweitzer_21}.\footnote{These methods are sometimes called probing methods in the matrix function trace estimation literature.}
To the best of our knowledge, such methods were first considered in the 1970s, and are based on the following observation:
Partition column indices $[d]$ into sets $\{C_1, \ldots, C_k\}$ such that the columns of $\vec{A}$ corresponding to any given $C_i$ have disjoint support.
Then we can recover all of the columns in a given set $C_i$ with a single matrix-vector product: a vector which is supported only on $C_i$.
The sets $C_i$ can be obtained by coloring a graph.
In particular, form a graph on $d$ vertices, where there is an edge between vertices $i$ and $j$ if and only if the $i$-th and $j$-th columns of $\vec{A}$ have overlapping support.
A $k$-coloring of this graph gives the partition $\{C_1, \ldots, C_k\}$.

For matrices which cannot be colored with a small number of colors, one might still use coloring methods to partition the columns of the matrix, and then recover the relevant entries within each partition using an algorithm which can handle noise (e.g. \cref{alg:main} or Hutchinson's diagonal estimator).
In \cref{sec:banded} we analyze this approach for banded matrices. 
The remainder of this section provides some extreme cases to illustrate some potential pros and cons of coloring-based methods in comparison to our proposed method.

\subsection{Analysis of coloring methods on banded matrices}\label{sec:banded}

We will describe how the estimator $\vec{d}_m$ defined in \cref{eqn:hutch_diag} can be combined with coloring methods to solve \cref{prob:recovery_full,prob:recovery_sparse}.
Note that if the entries of $\vec{r}_i$ are chosen as independent Rademacher random variables (i.e. each entry is independently $+1$ with probability $1/2$ and $-1$ with probability $1/2$), then \cref{eqn:hutch_diag} simplifies, as $\vec{r}_i\circ\vec{r}_i$ is always the all-ones vector.
It is then easy to show $\EE[\vec{d}_m] = \diag(\vec{A})$ and $\EE[\|\diag(\vec{A}) - \vec{d}_m\|_2^2] = \| \vec{A} - \vec{I}\circ\vec{A}\|_\F^2 / m$.

For any integer $b\geq 0$, let $\vec{S}\in \{0,1\}^{d\times d}$ be the sparsity pattern of banded matrices of bandwidth $s = 2b+1$; that is, $[\vec{S}]_{i,j} = \ones(|i-j|\leq b)$.
For convenience, we will assume $d = ks$, for some  integer $k \geq 1$.
This sparsity pattern yields a natural coloring-based partitioning of $[d]$ into the sets $C_i = \{i + js : j\in[k] \}$ for $i \in [s]$.

For each $i \in [s]$, define the $d\times k$ matrices
\[
\vec{A}^{(i)} = [\vec{A}]_{:,C_i}
,\qquad
\vec{S}^{(i)} = [\vec{S}]_{:,C_i}.
\]
Observe that if $\vec{A} = \vec{A}\circ\vec{S}$, then we could recover all of the entries in $\vec{A}^{(i)}$ by multiplying with the all-ones vector, since there would be exactly one nonzero in each row of $\vec{A}^{(i)}$.

Let $\vec{v}^{(i)} \in \{-1,+1\}^k$ have independent Rademacher entries
Define $\vec{c}^{(i)} = \vec{S}^{(i)} \vec{v}^{(i)}$ and consider the vector 
\[
\vec{y}^{(i)} = \vec{c}^{(i)} \circ (\vec{A}^{(i)} \vec{v}^{(i)}).
\]
Since the entries of $\vec{v}^{(i)}$ are independent, mean zero, and variance 1, a direct computation shows that 
\[
\EE\bigl[ \vec{y}^{(i)} \bigr]
= (\vec{S}^{(i)}\circ \vec{A}^{(i)}) \vec{1}
,\quad
\EE\bigl[ \| \vec{y}^{(i)} - (\vec{S}^{(i)}\circ \vec{A}^{(i)}) \vec{1} \|_2^2 \bigr]
= 
\| \vec{A}^{(i)} - \vec{S}^{(i)} \circ \vec{A}^{(i)} \|_\F^2.
\]
Matrix-vector products with $\vec{A}^{(i)}$ can be computed with a single product to $\vec{A}$.
Thus, using $s$ matrix-vector products, we obtain an unbiased estimator for $\vec{S}\circ \vec{A}$ with expected squared error $\|\vec{A} - \vec{S}\circ\vec{A}\|_\F^2$.
Averaging $t$ independent copies of this estimator will reduce the variance by a factor of $t$.
Hence, using $m \geq s/\varepsilon$ matrix-vector products, one obtains an algorithm with expected squared error bounded by $\varepsilon \|\vec{A} - \vec{S}\circ\vec{A}\|_\F^2$.
While this algorithm is well-known in the literature \cite[etc.]{stathopoulos_laeuchli_orginos_13,frommer_schimmel_schweitzer_21}, to the best of our knowledge, an analysis like the one described here has not been written down. 
As we discuss in the next section, this is marginally better than \cref{thm:ub_main}.

\subsection{Coloring algorithms can be better}\label{sec:coloring_better}

The previous example shows that if $\vec{S}$ is a banded matrix, the expected squared error of the coloring-based method  described above after $m$ matrix-vector products is  better by a factor of $(m-s-1)/m$ than \cref{alg:main}. 
If $s$ is large, $m$ is not much larger than $s$, and the off-sparsity mass is large, this may be relevant.
However, when $m$ is large relative to $s$ or if the off-sparsity mass is small, this difference is not so important.

Coloring based methods can also outperform \cref{alg:main} because the error of coloring methods decouples entirely between colors. 
Thus, to recover the entries within a given color, algorithms do not need to pay for large off-sparsity entries in a different color.
An example matrix for which this observation leads to an arbitrarily large improvement over \cref{alg:main} is shown in the left panel of \cref{fig:sparsity-vis}.
One can also use extra queries to reduce the variance in only colors with large off-sparsity mass.

Such an improvement is clearly tied to how much of the off-sparsity mass can be ignored by the coloring scheme.
If this mass is small, then coloring-based methods unnecessarily use extra queries for each color.
For instance, the middle panel of \cref{fig:sparsity-vis} shows an example where coloring would not be so beneficial.

\subsection{Coloring algorithms can be worse}
\label{sec:hard-coloring}

In the zero-error case, it is clear that an $s$-sparse matrix requires at least $s$ colors (and hence matvecs) to recover $\vec{A}$.
As noted in \cref{thm:exact_sparse}, \cref{alg:main} requires exactly $s$ matvecs in the zero-error case, matching this lower bound. However, coloring-based methods  can fail to match this lower bound, under-performing \cref{alg:main}.

In particular, for any $s,d\geq 1$, there are $s$-sparse matrices with $d$ columns which do not have a column partitioning into fewer than $d$ partitions; i.e. for which every pair of columns has intersecting support and hence coloring-based approach do no better than the trivial algorithm which reads the matrix column-by-column, using $d$ matvecs. 
A natural assumption, motivated by the banded case, is that the matrix $\vec{A}$ is $s$-doubly sparse. 
That is, there are at most $s$ nonzeros in any given row or column.
For an $s$-doubly sparse matrix, the maximum vertex degree of the graph described above is trivially bounded by $s^2$, so a greedy coloring will result in at most $s^2+1$ colors.
The following example shows there are $s$-doubly sparse matrices for which $\Omega(s^2)$ colors are required.
Thus, while coloring based methods would require $\Omega(s^2)$ matvec queries to recover such a matrix, \cref{alg:main} requires only $s$ queries.

\begin{figure}
    \centering
    \includegraphics[scale=.4]{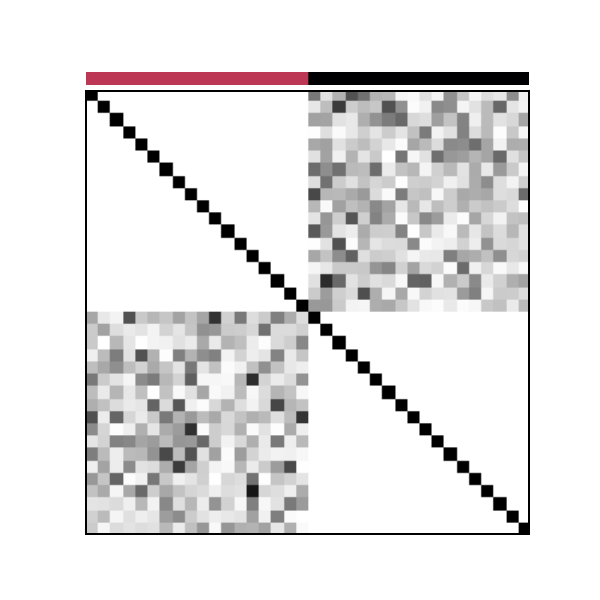}
    \hfil
    \includegraphics[scale=.4]{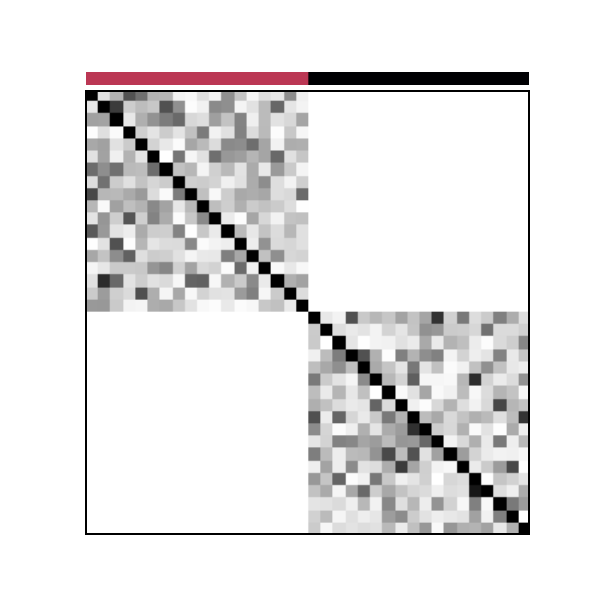}    
    \hfil
    \includegraphics[scale=.4]{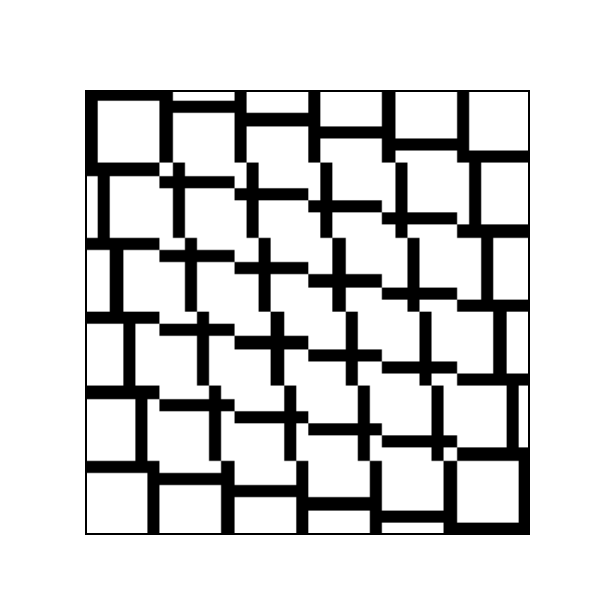}
    \caption{
    \emph{Left}: Visualization of a matrix describedin \cref{sec:coloring_better} for which \cref{alg:main} is not the best method for recovering the diagonal (intensity indicates magnitude of entries of $\vec{A}$).
    In particular, the diagonal of the matrix can be recovered using exactly 2 queries, while \cref{alg:main} will require many queries to overcome the large noise in the off-diagonal blocks. 
    \emph{Middle}: Visualization of a matrix for which using the same colorings as the matrix on the left panel will not help.
    \emph{Right}: Visualization of the hard sparsity pattern described in \cref{sec:hard-coloring} with $k=10$. 
    Here black pixels correspond to one and white pixels to zero.
    Note that while each row and column of the matrix has only $O(k)$ nonzeros, each pair of the $k^2$ columns has overlapping support.
    }
    \label{fig:sparsity-vis}
\end{figure}

In particular, for any integer $k\geq 1$, define the $k^2\times k^2$ matrix $\vec{A}$ by
\[
[\vec{A}]_{pk+i,qk+j}
= \begin{cases}
    1 & i=q \text{ or } j=p\\ 
    0 & \text{otherwise}
\end{cases}
,\qquad
p,q,i,j \in [k] .
\]
This sparsity pattern is represented in the left panel of \cref{fig:sparsity-vis}.
Any given row or column of this matrix has exactly $s = 2k-1$ nonzeros. 
However, the support of every column overlaps. 
Indeed, for columns $x = pk + i$ and $y=qk+j$ (with $i,j\in[k]$),
\[
[\vec{A}]_{ik+q,x} = [\vec{A}]_{jk+p,y} = 1
,\quad
[\vec{A}]_{jk+p,x} = [\vec{A}]_{ik+q,y} = 1.
\]
Therefore, exact coloring based approaches require $d = k^2 = O(s^2)$ colors; i.e. they do no better than the trivial upper bound of $d$ queries.
In contrast, \cref{alg:main} would require only $s$ queries.

\section{Numerical Experiments}\label{sec:numerical}

In this section we provide several numerical experiments which illustrate the performance of \cref{alg:main}.
These problems are modeled after similar problems from the literature.
Code to reproduce the figures can be found at 
\url{https://github.com/tchen-research/fixed_sparsity_matrix_approximation}.

\subsection{Model problem}

We consider the matrix $\vec{A} = \vec{M}^{-1}$, where $\vec{M} = \operatorname{tridiag}(-1,4,-1)$.
This class of matrices exhibits exponential decay away from the diagonal and was used in experiments in past work \cite{benzi_simoncini_15,frommer_schimmel_schweitzer_21}.

We take $\vec{A}$ to be $1000\times 1000$ and, for varying values of $b\geq 0$, we set $\vec{S}$ to be a symmetric banded matrix of maximum total bandwidth $2b+1$; i.e. $[\vec{S}]_{i,j} = \ones(|i-j|\leq b)$.
We then compute the approximation error $\|\vec{A} - \widetilde{\vec{A}}\|_\F$ and recovery error $\|\vec{S}\circ \vec{A} - \widetilde{\vec{A}}\|_\F$ for the output of \cref{alg:main} run using a varying number of matvec queries $m$.

The results are illustrated in \cref{fig:inverse_decay}.
Here the convergence of \cref{alg:main} is matched well by the upper bound in \cref{thm:ub_main} for the expected squared error (note that the plot shows the error, not the expected squared error).

\begin{figure}[h]
    \centering
    \includegraphics[scale=.5]{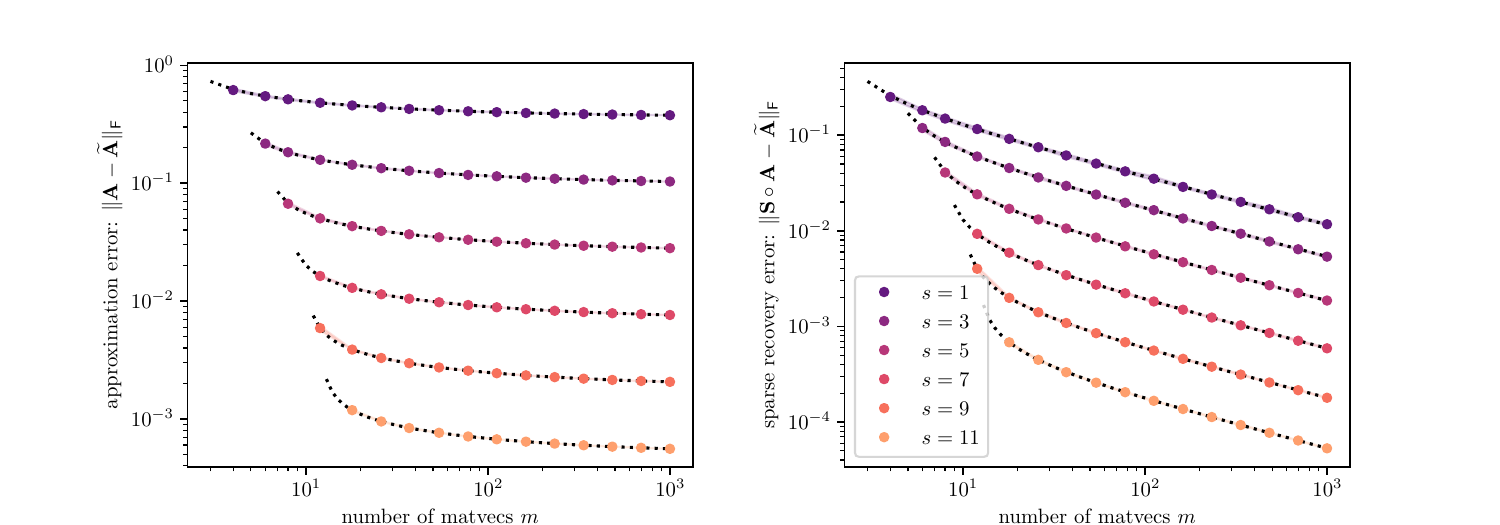}
    \caption{Approximation of model problem matrix $\vec{A} = \operatorname{tridiag}(-1,4,-1)^{-1}$ by a matrix of total bandwidth $s$ for varying values of $s$.
    The solid circles indicate the root mean squared error of \cref{alg:main} over 20 independent runs of the algorithm, and the shaded region indicates the 10\%-90\% range. 
    The dotted lines are the $\sqrt{s/(m-s-1)}\|\vec{A} -\vec{S}\circ\widetilde{\vec{A}}\|_\F$ (left) and $\sqrt{1+s/(m-s-1)}\|\vec{A} -\vec{S}\circ\widetilde{\vec{A}}\|_\F$ (right).
    }
    \label{fig:inverse_decay}
\end{figure}

\subsection{Trefethen Primes}

We let $\vec{A} = \vec{M}^{-1}$, where $\vec{M}$ be the $1000 \times 1000$ matrix whose entries are zero everywhere except for the primes $2, 3, 5, 7, \ldots, 7919$ along the main diagonal and the number 1 in all the positions $[\vec{B}]_{i,j}$ with $|i – j| \in \{1,2,4,8, \ldots,512\}$.\footnote{In problem 7 of the ``A Hundred-dollar, Hundred-digit Challenge'' in SIAM News, readers are asked to compute the (1,1) entry of a larger but analogously defined matrix $\vec{A}$ to 100 digits of accuracy \cite{trefethen_02}.}
An example (somewhat different from our example) involving this matrix was used in \cite{park_nakatsukasa_23}.

\begin{figure}[h]\centering
    \includegraphics[scale=.4]{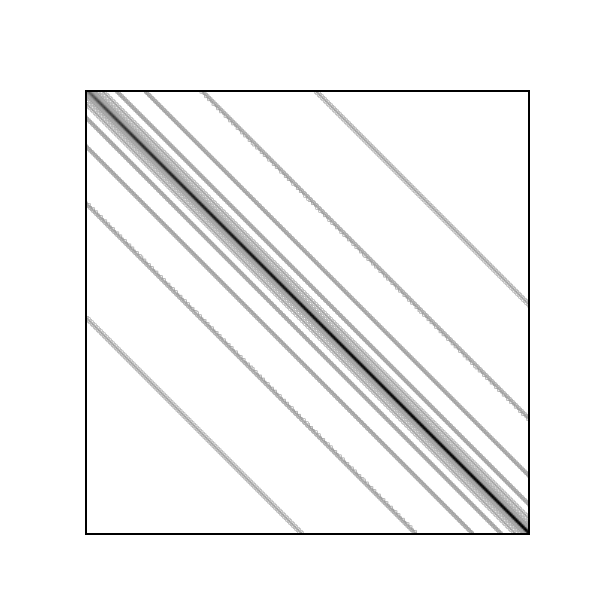}
    \hfil
    \includegraphics[scale=.4]{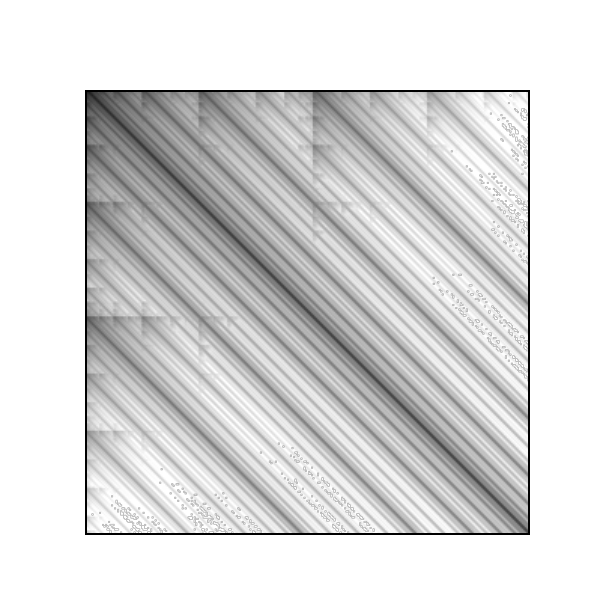}
    \hfil
    \includegraphics[scale=.4]{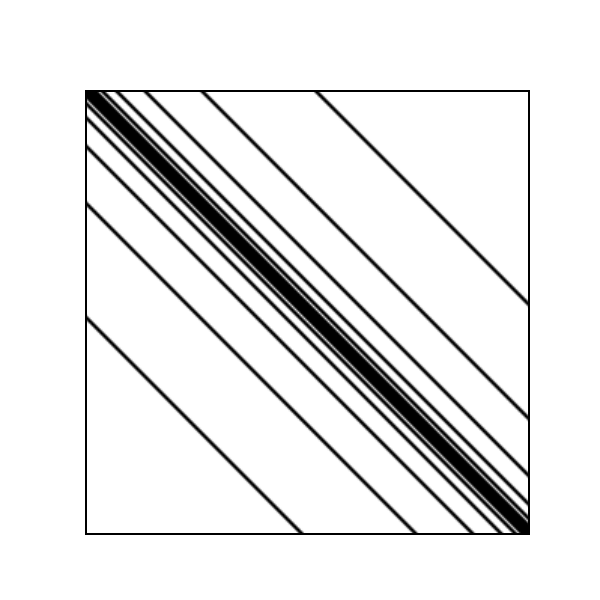}
    \caption{
    \emph{Left}: Log-scale of the nonzero entries of $\vec{M}$, which range in magnitude from 1 to 7919.
    \emph{Middle}: Log-scale of the nonzero entries of $\vec{A}$.
    \emph{Right}: Sample sparsity pattern $\vec{S}$ corresponding to $b=5$.
    }
    \label{fig:trefethen_sparsity-vis}
\end{figure}

For $b>0$, we defined a sparsity pattern $\vec{S}$ to be such that, for each $t\in \{1,2,4,8,\ldots, 512\}$, $[\vec{S}]_{i,j} = 1$ whenever $|i-j\pm t|\leq b$ and zero otherwise. 
In other words, the sparsity pattern consists of bandwidth $2b+1$ bands centered nonzero entries of $\vec{M} = \vec{A}^{-1}$.
This is illustrated in \cref{fig:trefethen_sparsity-vis}.

The results are shown in \cref{fig:trefethen_inv}.
Here, the convergence of \cref{alg:main} is somewhat better than the upper bound \cref{thm:ub_main}.
This is because many rows of the sparsity pattern have far fewer than $s$ entries.
From the proof of \cref{thm:ub_main}, it is clear how to obtain an exact characterization of the expected squared error.

\begin{figure}[h]
    \centering
    \includegraphics[scale=.5]{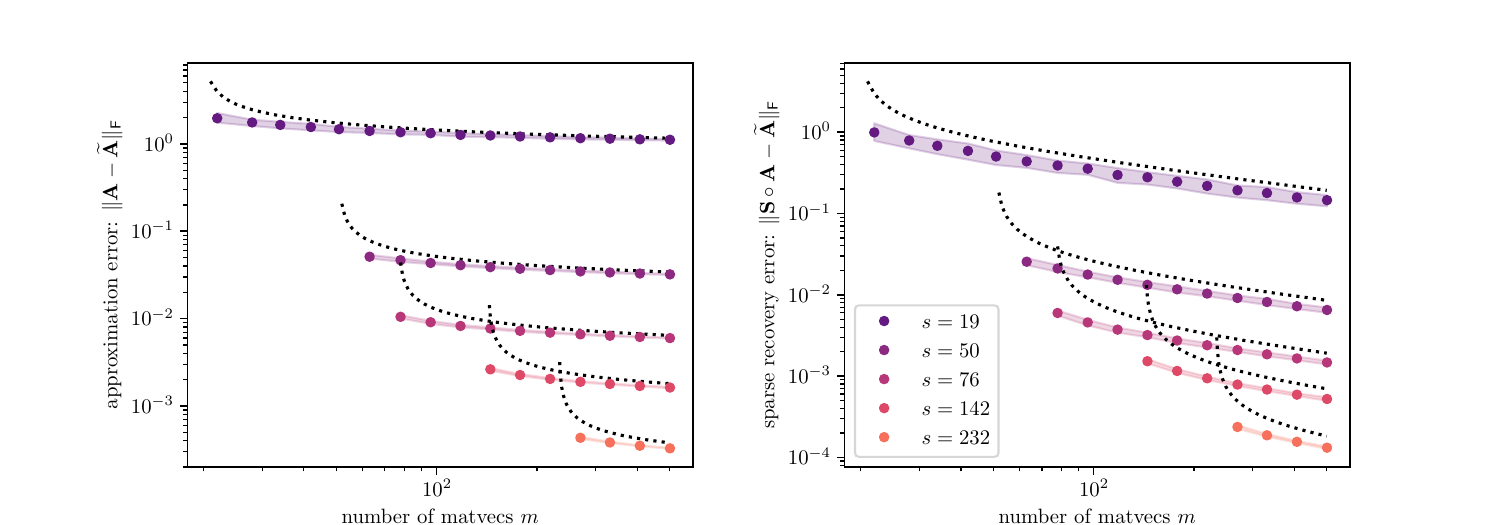}
    \caption{Approximation of ``Trefethen primes'' inverse matrix by a multi-banded matrix for varying values of $s$.
    The solid circles indicate the root mean squared error of \cref{alg:main} over 100 independent runs of the algorithm, and the shaded region indicates the 10\%-90\% range. 
    The dotted lines are the $\sqrt{s/(m-s-1)}\|\vec{A} -\vec{S}\circ\widetilde{\vec{A}}\|_\F$ (left) and $\sqrt{1+s/(m-s-1)}\|\vec{A} -\vec{S}\circ\widetilde{\vec{A}}\|_\F$ (right).
    }
    \label{fig:trefethen_inv}
\end{figure}

\section{Outlook}
\label{sec:outlook}

This work raises a number of interesting practical and theoretical questions. We now comment on three.

It is clear that there is potential for combining coloring methods with algorithms such as \cref{alg:main} in order to avoid paying for large off-sparsity entries in parts of a matrix while simultaneously maintaining the robustness to noise enjoyed by \cref{alg:main}.
One approach to combining these two paradigms is to use adaptive queries to identify portions of the matrix with large-mass, and then find a coloring based on this information.
We believe further study in this direction may yield algorithms which work better in many practical situations.
Of course, our lower bound \cref{thm:lb_main} shows that only constant factors can be improved for some families of problem instances.

The present paper focuses only on the Frobenius norm.
It would be valuable to understand the analogous problem in other norms such as the matrix 2-norm.
For other norms, $\vec{S}\circ\vec{A}$ is not necessarily the best approximation to $\vec{A}$ with sparsity $\vec{S}$.
For instance, if $\vec{A} = [1,1;1,1]$ and $\vec{S} = [1,0;0,0]$, then 
\[
\operatornamewithlimits{argmin}_{\vec{X}=\vec{S}\circ\vec{X}} \|\vec{A} - \vec{X}\|_2
= \begin{bmatrix}
2 & 0 \\
0 & 0 
\end{bmatrix}
\neq 
\begin{bmatrix}
1 & 0 \\
0 & 0 
\end{bmatrix}
= \vec{S}\circ\vec{A}.
\]
Loosely speaking, minimizing the operator-norm approximation error requires the rows of the error matrix to be small and unaligned, whereas the Frobenius norm problem only requires them to be small.

Finally, it is of broad interest to understand algorithms and lower bounds for richer structures of matrices, such as the sum of sparse and low-rank matrices and matrices with hierarchical low-rank structure. This paper is a starting point for investigating these problems, and we hope that future work will explore them in greater depth.

\section*{Acknowledgements}

Diana Halikias was supported by the Office of Naval Research (ONR) under grant N00014-23-1-2729. 
Cameron Musco was partially supported by NSF Grant 2046235.
Christopher Musco was partially supported by NSF Grant 2045590.

We thank Robert J. Webber for informing us of the operator-norm example in \cref{sec:outlook}.

\appendix

\section{Discussion on relative error recovery}
\label{sec:relative_discussion}

Instead of \cref{prob:recovery_sparse}, one may hope for a guarantee like
\begin{equation}\label{prob:rel}
\| \vec{S} \circ \vec{A} - \widetilde{\vec{A}} \|_\F
\leq \varepsilon \| \vec{S} \circ \vec{A} \|_\F.
\end{equation}    
Unfortunately, we cannot expect to be able to obtain such an approximation using fewer than $d$ queries in general; for $\varepsilon<1$, this would require differentiating matrices for which $\vec{S}\circ\vec{A}$ is zero from those for which it is arbitrarily small.

Even so, in many situations where one might hope to recover the sparse part of a matrix, $\| \vec{A} - \vec{S} \circ \vec{A} \|_\F$ can be bounded by a constant multiple of $\| \vec{S} \circ \vec{A} \|_\F$.
In such cases, \cref{thm:ub_main} implies that \cref{alg:main} returns an approximation satisfying \cref{prob:rel} using $O(s/\varepsilon^2)$ queries.
One instance where this can be expected is if $\vec{A} = f(\vec{M})$, where $\vec{M}$ has sparsity $\vec{S}$. 
It is known that matrix functions of sparse matrices have entries which decay exponentially away from the sparsity pattern \cite{benzi_razouk_08}.
Another instance in which this holds is if $\vec{S}$ contains the identity and $\vec{A}$ is diagonally dominant; i.e. if $|[\vec{A}]_{i,i}| \geq \sum_{j\neq i} |[\vec{A}]_{i,j}|$. 
Indeed, in this case
\[
\| \vec{A} - \vec{S} \circ \vec{A} \|_\F^2
\leq \| \vec{A} - \vec{I} \circ \vec{A} \|_\F^2
= \sum_{i=1}^{d} \sum_{\substack{j=1\\j\neq i}}^{d} [\vec{A}]_{i,j}^2
\leq \sum_{i=1}^{d} [\vec{A}]_{i,i}^2
= \| \vec{I} \circ \vec{A} \|_\F^2
\leq \| \vec{S}\circ \vec{A}\|_\F^2.
\]

Using \cref{thm:lb_lem}, we can derive lower-bounds. 
For instance, if in the proof of \cref{thm:lb_main} we replace $F$ by \[
    F =  \Big\{ \| \vec{I}\circ\vec{A} \|_\F^2 \leq b_1 d^{3} \Big\}
\]
and $b_1$ by $150$, then by \cref{thm:Aexpectednorm}, the proof of \cref{thm:lb_main} can be repeated and results in a bound $\Omega(1/\varepsilon^2)$ queries for relative error diagonal approximation of PSD matrices.

In fact, if we set $r$ as large constant multiple of $d$, then $\vec{A}$ becomes diagonally dominant with high probability, and the same lower bound still holds. 
In light of the above upper bound, \cref{alg:main} is optimal for relative error diagonal approximation on diagonally dominant matrices.

\section{Lower Bound Lemmas}
\label{sec:lower_bound_lemmas}

In this section, we provide the proofs of the key lemmas used the prove the lower bounds stated in \cref{sec:lower_bounds}.
These lemmas are all standard in the literature, but we include the proofs for the benefit of the reader, as they are, for the most part, self-contained and interesting.

\subsection{Adaptive queries to Wishart matrices}
\label{sec:appendix:wishart}

In this section we provide a proof of \cref{thm:wishart}, which is essentially \cite[Lemma 3.4]{braverman_hazan_simchowitz_woodworth_20}.
This is mostly included for completeness.
First, we consider what happens after a single non-adaptive query.

\begin{lemma}\label{thm:wishart_1step}
Suppose $\vec{G} \sim \operatorname{Gaussian}(d,r)$.
Let $\vec{x}$ be a unit-length query chosen independently of $\vec{G}$, and define $\vec{y} = \vec{G}^\T\vec{G} \vec{x}$.
Then, there is an $n\times n$ orthonormal matrix $\hat{\vec{X}} = [ \vec{x} ~ \vec{X}]$, constructed solely as a function of $\vec{x}$, and a matrix $\vec{H} \sim \operatorname{Gaussian}(d-1,r-1)$ independent of $\vec{x}$ and $\vec{y}$ such that such that 
\[
\hat{\vec{X}}^\T \vec{G}^\T\vec{G} \hat{\vec{X}} = 
\begin{bmatrix}
    \vec{x}^\T \vec{y} & \vec{y}^\T\vec{X} \\
    \vec{X}^\T \vec{y} & (\vec{x}^\T\vec{y})^{-2}\vec{X}^\T\vec{y}\vec{y}^\T\vec{X}
\end{bmatrix}
+
\begin{bmatrix}
    0 & \vec{0}_{1,d-1} \\ 
    \vec{0}_{d-1,1} & \vec{H}^\T\vec{H}
\end{bmatrix}.
\]
\end{lemma}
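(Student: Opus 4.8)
\textbf{Proof plan for \cref{thm:wishart_1step}.}
The plan is to condition on $\vec{x}$ (which is independent of $\vec{G}$) and understand the distribution of $\vec{G}$ through the lens of the orthonormal change of coordinates $\hat{\vec{X}}$. Write $\vec{G}\hat{\vec{X}} = [\vec{G}\vec{x} \mid \vec{G}\vec{X}]$; since $\hat{\vec{X}}$ is orthonormal and independent of $\vec{G}$, the matrix $\vec{G}\hat{\vec{X}}$ is again $\operatorname{Gaussian}(d,r)$ (rotational invariance of the standard Gaussian). So it suffices to prove the identity when $\hat{\vec{X}} = \vec{I}$, i.e.\ to analyze $\vec{G}^\T\vec{G}$ directly with $\vec{x} = \vec{e}_1$, $\vec{y} = \vec{G}^\T\vec{G}\vec{e}_1$, after which we rename $\vec{G}\hat{\vec{X}}$ back to $\vec{G}$.

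With $\vec{x}=\vec{e}_1$, partition $\vec{G} = [\vec{g}_1 \mid \vec{G}']$ where $\vec{g}_1 = \vec{G}\vec{e}_1$ is the first column (an $\R^r$ Gaussian vector) and $\vec{G}'$ collects the remaining $d-1$ columns, independent of $\vec{g}_1$. First I would compute the blocks of $\vec{G}^\T\vec{G}$: the $(1,1)$ entry is $\|\vec{g}_1\|_2^2 = \vec{x}^\T\vec{y}$, the $(1,2)$ block is $\vec{g}_1^\T\vec{G}' = \vec{y}^\T\vec{X}$ (with $\vec{X}$ now the last $d-1$ standard basis vectors, so this reads off entries of $\vec{y}$), and the $(2,2)$ block is $(\vec{G}')^\T\vec{G}'$. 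The key step is then the conditional analysis of $\vec{G}'$ given $\vec{g}_1$: decompose each column of $\vec{G}'$ into its component along $\vec{g}_1$ and its component in the orthogonal complement. The parallel components are determined by $\vec{y}$ — more precisely, $(\vec{G}')^\T \vec{g}_1 = \vec{X}^\T\vec{y}$, so the rank-one piece $\vec{X}^\T\vec{y}\,\vec{y}^\T\vec{X} / \|\vec{g}_1\|_2^2 = (\vec{x}^\T\vec{y})^{-2}\vec{X}^\T\vec{y}\vec{y}^\T\vec{X}$ coming from projecting $\vec{G}'$ onto $\operatorname{span}(\vec{g}_1)$ is exactly the first matrix in the claimed decomposition. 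The orthogonal components of the $d-1$ columns of $\vec{G}'$, expressed in any fixed orthonormal basis $\vec{Q}\in\R^{r\times(r-1)}$ of $\vec{g}_1^\perp$ chosen measurably from $\vec{g}_1$, form a matrix $\vec{H} := \vec{Q}^\T\vec{G}'$ which, by independence of $\vec{G}'$ from $\vec{g}_1$ and Gaussian rotational invariance, is $\operatorname{Gaussian}(d-1,r-1)$ and independent of $\vec{g}_1$ — hence independent of $\vec{x}$ and $\vec{y}$. Then $(\vec{G}')^\T\vec{G}' = (\vec{x}^\T\vec{y})^{-2}\vec{X}^\T\vec{y}\vec{y}^\T\vec{X} + \vec{H}^\T\vec{H}$, which assembles into the stated block identity; undoing the change of variables gives the general $\hat{\vec{X}}$.

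The main obstacle is the measurability and independence bookkeeping around the basis $\vec{Q}$ of $\vec{g}_1^\perp$: one must choose $\vec{Q}$ as a measurable function of $\vec{g}_1$ so that $\vec{H} = \vec{Q}^\T\vec{G}'$ is well-defined, and then argue that conditionally on $\vec{g}_1$ the matrix $\vec{H}$ is standard Gaussian (because $\vec{Q}^\T$ applied to the column-independent $\operatorname{Gaussian}(d-1,r)$ matrix $\vec{G}'$ yields $\operatorname{Gaussian}(d-1,r-1)$ regardless of the value of $\vec{g}_1$), whence $\vec{H}$ is in fact unconditionally Gaussian and independent of $\vec{g}_1$. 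The one degenerate case, $\vec{g}_1 = \vec{0}$, occurs with probability zero and can be ignored (or handled by an arbitrary fixed choice of $\vec{Q}$). Everything else is routine block matrix algebra.
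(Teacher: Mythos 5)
Your approach is essentially the same as the paper's: both project the ``remaining'' columns of $\vec{G}$ onto $\operatorname{span}(\vec{G}\vec{x})$ and its orthogonal complement, obtain the rank-one correction from the parallel part, and identify the orthogonal residual as a fresh smaller Gaussian. Your reduction to $\vec{x}=\vec{e}_1$ via rotational invariance is a tidy way to simplify the bookkeeping, but it is the same underlying argument.

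Two points deserve attention. First, the exponent on $\vec{x}^\T\vec{y}$ should be $-1$, not $-2$: since $\|\vec{g}_1\|_2^2 = \vec{x}^\T\vec{G}^\T\vec{G}\vec{x} = \vec{x}^\T\vec{y}$, the projection term is $\vec{X}^\T\vec{y}\vec{y}^\T\vec{X}/\|\vec{g}_1\|_2^2 = (\vec{x}^\T\vec{y})^{-1}\vec{X}^\T\vec{y}\vec{y}^\T\vec{X}$; the $-2$ in the lemma statement is a typo, and your own displayed computation ($\|\vec{g}_1\|_2^2$ in the denominator) is actually correct --- you just mislabeled the simplified form to match the typo. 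Second, the independence argument as stated has a gap: showing $\vec{H} = \vec{Q}^\T\vec{G}'$ is independent of $\vec{g}_1$ does not by itself give independence of $\vec{y}$, because $\vec{y}$ also depends on $\vec{G}'$ through $(\vec{G}')^\T\vec{g}_1$. To close this you must additionally observe that, conditionally on $\vec{g}_1$, the components $\vec{q}_0^\T\vec{G}'$ (with $\vec{q}_0 = \vec{g}_1/\|\vec{g}_1\|$) and $\vec{Q}^\T\vec{G}'$ are jointly Gaussian with zero cross-covariance and hence independent; since $\vec{y}$ is a deterministic function of $\vec{g}_1$ and $\vec{q}_0^\T\vec{G}'$, independence of $\vec{H}$ from the pair $(\vec{x},\vec{y})$ follows. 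You do set up the right orthogonal decomposition earlier in the proposal, so this is a one-line fix rather than a conceptual error, but the step as written is incomplete.
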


\begin{proof}
Solely based on $\vec{x}$, extend to an orthonormal matrix:
\[
\hat{\vec{X}} 
= \begin{bmatrix}
    \vec{x} & \vec{X}
\end{bmatrix}.
\]
For instance, append the identity to $\vec{x}$, delete the first column which is dependent on the previous columns, then orthonormalize sequentially using Gram--Schmidt.

Since $\vec{y} = \vec{G}^\T\vec{G}\vec{x}$,  
\[
\hat{\vec{X}}^\T \vec{G}^\T\vec{G} \hat{\vec{X}}
= 
\begin{bmatrix}
    \vec{x}^\T \vec{y} & \vec{y}^\T\vec{X} \\
    \vec{X}^\T \vec{y} & \vec{X}^\T\vec{G}^\T\vec{G}\vec{X}
\end{bmatrix}.
\]
The first row and column of this matrix depend only on $\vec{x}$ and $\vec{y}$.
We will now show $\vec{X}^\T\vec{G}^\T\vec{G}\vec{X}$ is the sum of a matrix depending on $\vec{x}$ and $\vec{y}$ and Gaussian matrix independent of $\vec{x}$ and $\vec{y}$.

Define $\vec{r} = \|\vec{G}\vec{x}\|^{-1}\vec{G}\vec{x}$, and note that $\|\vec{r}\| = 1$.
Solely based on $\vec{r}$, extend to an orthonormal matrix:
\[
\hat{\vec{R}} = \begin{bmatrix}
    \vec{r} & \vec{R}
\end{bmatrix}.
\]
Since $\hat{\vec{R}}$ is orthonormal, $\vec{r}\vec{r}^\T + \vec{R}\vec{R}^\T = \vec{I}$, and
\begin{align*}
\vec{X}^\T\vec{G}^\T\vec{G}\vec{X}
&= 
\vec{X}^\T\vec{G}^\T\vec{r}\vec{r}^\T\vec{G}\vec{X}
+ \vec{X}^\T\vec{G}^\T\vec{R}\vec{R}^\T\vec{G}\vec{X}
\\&=
\|\vec{G}\vec{x}\|^{-2} \vec{X}^\T\vec{G}^\T\vec{G}\vec{x}\vec{x}^\T\vec{G}^\T\vec{G}\vec{X}
+ \vec{X}^\T\vec{G}^\T\vec{R}\vec{R}^\T\vec{G}\vec{X}.
\end{align*}
Thus, using that $\|\vec{G}\vec{x}\|^2 = \vec{x}^\T\vec{G}^\T\vec{G}\vec{x} = \vec{x}^\T\vec{y}$, 
\[
\hat{\vec{X}}^\T \vec{G}^\T\vec{G}\hat{\vec{X}}  
= 
\begin{bmatrix}
    \vec{x}^\T \vec{y} & \vec{y}^\T\vec{X} \\
    \vec{X}^\T \vec{y} & (\vec{x}^\T\vec{y})^{-1}\vec{X}^\T\vec{y}\vec{y}^\T\vec{X}
\end{bmatrix}
+
\begin{bmatrix}
    0 & \vec{0}_{1,n-1} \\ 
    \vec{0}_{n-1,1} & \vec{X}^\T\vec{G}^\T\vec{R}\vec{R}^\T\vec{G}\vec{X}
\end{bmatrix}.
\]

It remains to show $\vec{R}^\T\vec{G}\vec{X}$ is a $(r-1)\times(d-1)$ Gaussian matrix independent of $\vec{x}$ and $\vec{y}$.

First, note that since ${\vec{X}}$ is chosen only based on $\vec{x}$ (which is independent of $\vec{G}$), 
$\vec{G}\hat{\vec{X}}$ consists of iid Gaussians. 
Thus, the columns of $\vec{G}{\vec{X}}$ are mutually independent of one another and $\vec{x}$, and hence $\vec{G}\vec{X}$ is mutually independent of $\vec{x}$ and $\vec{y}$.
Finally, since $\vec{R}$ depends only on $\vec{r} = (\vec{x}^\T\vec{y})^{-1/2}\vec{G}\vec{x}$, $\vec{R}$ is independent of $\vec{G}\vec{X}$.
Thus, $\vec{R}^\T\vec{G}\vec{X}$ has iid Gaussian entries independent of $\vec{x}$ and $\vec{y}$ (and hence any matrices constructed solely from $\vec{x}$ and $\vec{y}$).
\end{proof}

We will now prove the general statement. 

\begin{proof}[Proof of \cref{thm:wishart}]
We proceed by induction. 
Suppose that after $t$ queries, the result of the lemma holds.
Let $\vec{x}_{t+1}$ be a query chosen based solely on $\vec{x}_1, \ldots, \vec{x}_t$ and $\vec{y}_1, \ldots, \vec{y}_t$ and hence independent of $\vec{G}_t$.

Then by the inductive hypothesis,
\[
    \vec{G}^\T\vec{G}\vec{x}_{t+1} = 
    \vec{V}_t \vec{\Delta}_t\vec{V}_t^\T\vec{x}_{t+1}  +  \vec{V}_t^\T\begin{bmatrix}
        \vec{0}_{t,t} & \vec{0}_{t,d-t} \\
        \vec{0}_{d-t,t} & \vec{G}_t^\T\vec{G}_t
    \end{bmatrix}\vec{V}_t^\T\vec{x}_{t+1}.
\]
Let $\vec{x}$ denote the bottom $d-t$ entries of $\vec{V}_t^\T\vec{x}_{t+1}$, normalized to have length 1. 
By \cref{thm:wishart_1step},
querying $\vec{G}_t^\T \vec{G}_t$ results in the factorization
\[
\hat{\vec{X}}^\T \vec{G}_t^\T \vec{G}_t\hat{\vec{X}}= \vec{\Delta} + 
\begin{bmatrix}
    0 & \vec{0}_{1,d-t-1} \\ 
    \vec{0}_{d-t-1,1} &  \vec{H}^\T\vec{H}
\end{bmatrix},
\]
where $\vec{\Delta}$ and $\hat{\vec{X}}$ are constructed solely as functions of $\vec{x}$ and $\vec{G}_t^\T \vec{G}_t \vec{x}$ (and hence of $\vec{x}_1, \ldots, \vec{x}_{t+1}$ and $\vec{y}_1, \ldots, \vec{y}_{t+1}$), $\hat{\vec{X}}$ is orthonormal, and $\vec{H}\sim\operatorname{Gaussian}(d-t-1,r-t-1)$ is independent of $\vec{x}$ (and hence of $\vec{x}_1, \ldots, \vec{x}_{t+1}$ and $\vec{y}_1, \ldots, \vec{y}_{t+1}$).

Define $\vec{G}_{t+1} = \vec{H}$ and the matrices
\[
\widetilde{\vec{X}}=
\begin{bmatrix}
    1 & \vec{0}_{1,t-1} \\ 
    \vec{0}_{t-1,t} & \hat{\vec{X}}
\end{bmatrix},
\qquad
\vec{V}_{t+1} 
= \vec{V}_t \widetilde{\vec{X}}
,\quad
\vec{\Delta}_{t+1} = \widetilde{\vec{X}}^\T \vec{\Delta}_t \widetilde{\vec{X}}
+ \begin{bmatrix}
    \vec{0}_{t,t} & \vec{0}_{t,d-t} \\
    \vec{0}_{d-t,t} & \vec{\Delta}
\end{bmatrix}
.
\]
Clearly $\vec{V}_{t+1}$ is orthonormal and $\vec{V}_{t+1}$ and $\vec{\Delta}_{t+1}$ are constructed solely as functions of $\vec{x}_1, \ldots, \vec{x}_{t+1}$ and $\vec{y}_1, \ldots, \vec{y}_{t+1}$
We easily verify that
\[
\vec{V}_{t+1}^\T\vec{G}^\T\vec{G}\vec{V}_{t+1}
=
\widetilde{\vec{X}}^\T \vec{\Delta}_t \widetilde{\vec{X}}
+
\begin{bmatrix}
    \vec{0}_{t,t} & \vec{0}_{t,d-t} \\
    \vec{0}_{d-t,t} & \hat{\vec{X}}^\T \vec{G}_t^\T\vec{G}_t \hat{\vec{X}}
\end{bmatrix}
=
\vec{\Delta}_{t+1}
+
\begin{bmatrix}
    \vec{0}_{t+1,t+1} & \vec{0}_{t+1,d-(t+1)} \\
    \vec{0}_{d-(t+1),t+1} & \vec{G}_{t+1}^\T \vec{G}_{t+1}
\end{bmatrix}
\]

The result is proved as the base case $t=0$ is trivial.
\end{proof}

\subsection{Anti-concentration for sums}
\label{sec:anticoncentration-sums}

In this section we will prove \cref{thm:sum-anticoncentration,thm:gausian-inner-prod}.

We begin recalling the Berry--Esseen Theorem for non-identically distributed summands.

\begin{proposition}[{\protect Berry--Esseen; see e.g. \cite[\S 1.9]{serfling_80}}]\label{thm:berry_esseen}
    There exists a constant $C>0$ such that, if $X_1, \ldots, X_k$ are independent random variables with $\EE[X_i] = 0$, $\EE[X_i^2] = \sigma_i^2$, and $\EE[|X_i|^3] = \rho_i$, and if we define 
    \[
    Y = \frac{X_1 + \cdots + X_k}{(\sigma_1^2 + \cdots + \sigma_k^2)^{1/2}},
    \]
    then the CDF $F_Y$ of $Y$ is near to the CDF $\Phi$ of a standard Gaussian in that,
    \[
        \forall z: |F_Y(z) - \Phi(z)| \leq C\frac{\rho_1 + \cdots + \rho_k}{(\sigma_1^2 + \cdots + \sigma_k^2)^{3/2}}.
    \]
\end{proposition}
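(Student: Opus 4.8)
The plan is to prove the Berry--Esseen bound by the classical Fourier route: combine Esseen's smoothing inequality with a uniform estimate on the characteristic function of $Y$. Write $s^2 = \sigma_1^2 + \cdots + \sigma_k^2$ and $L = (\rho_1 + \cdots + \rho_k)/s^3$; by the power-mean inequality $\rho_j \geq \sigma_j^3$, and if $L$ exceeds a fixed absolute constant the claimed bound is trivial since $\sup_z |F_Y(z) - \Phi(z)| \leq 1$, so I may assume $L$ is small.

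\textbf{Step 1 (smoothing).} First I would establish Esseen's smoothing lemma: for any CDF $F$ and any $T > 0$,
\[
\sup_z |F(z) - \Phi(z)| \leq \frac{1}{\pi}\int_{-T}^{T} \frac{|\widehat{F}(t) - \widehat{\Phi}(t)|}{|t|}\,\d t + \frac{C_0}{T},
\]
where $C_0$ depends only on $\sup_z \Phi'(z) = 1/\sqrt{2\pi}$. This follows by convolving $F - \Phi$ against a kernel whose Fourier transform is supported in $[-T,T]$ (e.g.\ a Fej\'er kernel), using Fourier inversion to rewrite the smoothed difference in terms of $(\widehat{F} - \widehat{\Phi})/t$ on $[-T,T]$, and controlling the error introduced by the smoothing via boundedness of the density of $\Phi$.

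\textbf{Step 2 (characteristic function).} Next I would write $\widehat{F_Y}(t) = \prod_{j=1}^k \phi_j(t/s)$, where $\phi_j(u) = \EE[e^{\ii u X_j}]$. From $|e^{\ii x} - (1 + \ii x - x^2/2)| \leq \min(|x|^3/6,\, x^2)$ together with $\EE[X_j] = 0$ and $\EE[X_j^2] = \sigma_j^2$, one gets $\phi_j(u) = 1 - \tfrac12 \sigma_j^2 u^2 + \delta_j(u)$ with $|\delta_j(u)| \leq \tfrac16 \rho_j |u|^3$. On the range $|t| \leq T$ with $T$ chosen of order $1/L$, one checks $|\phi_j(t/s)| \leq \exp(-c\,\sigma_j^2 t^2/s^2)$ for a fixed $c>0$, and then, via the product inequality $|\prod a_j - \prod b_j| \leq \sum_j |a_j - b_j| \prod_{i\neq j}\max(|a_i|,|b_i|)$ applied with $b_j = e^{-\sigma_j^2 t^2/(2s^2)}$ and the scalar bound $|e^{-w} - 1 + w| \leq w^2/2$, one arrives at
\[
\bigl| \widehat{F_Y}(t) - e^{-t^2/2} \bigr| \leq C_1 L\,|t|^3 e^{-c' t^2}, \qquad |t| \leq T,
\]
for a fixed $c'>0$; the quartic terms $\sigma_j^4 t^4/s^4$ that appear are absorbed using $\sigma_j^4/s^4 \leq (\sigma_j/s)\,(\rho_j/s^3) \leq \rho_j/s^3$.

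\textbf{Step 3 (conclude).} Plugging this into the smoothing lemma with $T = c_0/L$, the integral is at most $\int_{\R} C_1 L\,t^2 e^{-c' t^2}\,\d t = C_2 L$, and the remainder is $C_0 L/c_0 = C_3 L$, giving $\sup_z |F_Y(z) - \Phi(z)| \leq C L = C(\rho_1+\cdots+\rho_k)/s^3$ as claimed. I expect the main obstacle to be the uniform control of the product $\prod_j \phi_j(t/s)$ over the entire range $|t| \leq T$: because the summands need not be identically distributed, an individual $\phi_j$ can be very close to $1$ when $\sigma_j$ is small, so the Taylor and logarithm estimates must be organized so that the aggregate bounds depend only on $\sum_j \sigma_j^2 = s^2$ and $\sum_j \rho_j = L s^3$, not on any single term --- this is exactly where $\rho_j \geq \sigma_j^3$ and $\sum_j \sigma_j^2/s^2 = 1$ do the work. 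An alternative I would keep in reserve is Stein's method (leave-one-out or exchangeable pairs), which yields the same rate with a somewhat worse constant and sidesteps Fourier analysis entirely.
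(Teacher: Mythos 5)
This proposition is not proved in the paper at all: it is the classical Berry--Esseen theorem for sums of independent, not necessarily identically distributed random variables, quoted with a citation to Serfling, so there is no in-paper argument to compare against. Your sketch is the standard Fourier-analytic proof that underlies such textbook references: Esseen's smoothing inequality plus a characteristic-function comparison on $|t|\leq T\asymp 1/L$, and as an outline it is sound and reaches the right bound. One caveat in Step 2 deserves emphasis, since it is exactly the point you flag as the main obstacle: the per-factor claim $|\phi_j(t/s)|\leq \exp(-c\,\sigma_j^2 t^2/s^2)$ for all $|t|\leq T$ is false in general in the non-iid setting --- e.g.\ a single two-point (lattice) summand of small variance $\sigma_j\ll s$ has $|\phi_j(t/s)|$ returning to $1$ at $\sigma_j t/s = 2\pi$, which can lie inside $|t|\leq c_0/L$ when the remaining summands are smooth with small third moments. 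The standard repair is to bound the product in aggregate rather than factor by factor, for instance via symmetrization: $|\phi_j(u)|^2 = \EE[\cos(u(X_j-X_j'))]\leq 1-\sigma_j^2u^2+\tfrac{4}{3}\rho_j|u|^3\leq \exp(-\sigma_j^2u^2+\tfrac43\rho_j|u|^3)$ for all $u$, whose product over $j$ gives $\exp(-t^2+\tfrac83 L|t|^3)\le \exp(-t^2/2)$ on $|t|\leq T$ with $c_0$ small; this is how Feller's and Petrov's treatments obtain the estimate $|\widehat{F_Y}(t)-e^{-t^2/2}|\leq C L|t|^3e^{-t^2/3}$ that your Step 3 then integrates. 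Similarly, the quartic terms are most cleanly absorbed using this aggregate damping (writing $|t|^4e^{-c't^2}\leq C|t|^3e^{-c''t^2}$) rather than the pointwise inequality $\sigma_j^4/s^4\leq \rho_j/s^3$ alone. With these standard adjustments, your proposal is essentially the canonical proof of the cited result; the Stein's-method alternative you mention would also suffice for the paper's purposes, since only the $O(L)$ rate, not the constant, is used downstream in \cref{thm:sum-anticoncentration}.
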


In addition, we note a simple anti-concentration bound for Gaussians.
\begin{lemma}\label{thm:gaussian_anticoncentration}
    Let $Z\sim\mathcal{N}(0,1)$. Then, for any $t\in \R$ and any $\alpha > 0$,
    \[
    \PP\Bigl[ |Z-t|  <  \alpha \Bigr] < \sqrt{\frac{2}{\pi}}\,\alpha.
    \]
\end{lemma}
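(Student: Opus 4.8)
The plan is to bound the probability directly by the length of the interval times the maximum value of the standard Gaussian density. Writing $\varphi(x) = \frac{1}{\sqrt{2\pi}} e^{-x^2/2}$, I would observe that $\PP[|Z-t| < \alpha] = \int_{t-\alpha}^{t+\alpha} \varphi(x)\,\d x$, an integral over an interval of length $2\alpha$.

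Next I would use the pointwise bound $\varphi(x) \leq \frac{1}{\sqrt{2\pi}}$, with equality only at $x = 0$. Integrating this bound over $(t-\alpha, t+\alpha)$ gives $\int_{t-\alpha}^{t+\alpha} \varphi(x)\,\d x \leq \frac{2\alpha}{\sqrt{2\pi}} = \sqrt{\tfrac{2}{\pi}}\,\alpha$. To get the strict inequality claimed in the statement, I would note that $\varphi(x) < \frac{1}{\sqrt{2\pi}}$ for every $x \neq 0$, so the set on which the integrand is strictly below the constant bound has full Lebesgue measure in $(t-\alpha,t+\alpha)$; hence the integral is strictly less than $\sqrt{\tfrac{2}{\pi}}\,\alpha$.

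There is essentially no obstacle here — the only mild point to be careful about is justifying the strictness rather than just the non-strict inequality, which follows because a continuous nonnegative function that is strictly dominated by a constant except on a set of measure zero has integral strictly below the trivial upper bound. I would present the argument in three short lines.

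\begin{proof}
Let $\varphi(x) = \tfrac{1}{\sqrt{2\pi}} e^{-x^2/2}$ denote the density of $Z$. Then
\[
\PP\bigl[ |Z-t| < \alpha \bigr]
= \int_{t-\alpha}^{t+\alpha} \varphi(x)\,\d x.
\]
For every $x\in\R$ we have $\varphi(x) \leq \tfrac{1}{\sqrt{2\pi}}$, with strict inequality for all $x\neq 0$. Since $\{x \neq 0\}$ has full Lebesgue measure in the interval $(t-\alpha,t+\alpha)$ and $\varphi$ is continuous, it follows that
\[
\int_{t-\alpha}^{t+\alpha} \varphi(x)\,\d x
< \int_{t-\alpha}^{t+\alpha} \frac{1}{\sqrt{2\pi}}\,\d x
= \frac{2\alpha}{\sqrt{2\pi}}
= \sqrt{\frac{2}{\pi}}\,\alpha. \qedhere
\]
\end{proof}
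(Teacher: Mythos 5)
Your proof takes exactly the same approach as the paper: bound the density by its maximum value $1/\sqrt{2\pi}$ and integrate over the interval of length $2\alpha$. In fact, you are slightly more careful than the paper, which writes $\leq$ in the chain of inequalities even though the lemma asserts a strict $<$; your observation that $\varphi(x) < 1/\sqrt{2\pi}$ off a measure-zero set, combined with continuity, is the correct way to close that small gap.
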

\begin{proof}
Note that the density for $Z$ is $f_Z(x) = (1/\sqrt{2\pi}) \exp(-x^2/2)$, and $f_Z(x) \leq 1/\sqrt{2\pi}$ for all $x$.
Hence, 
\[
\PP\bigl[ |Z-t| < \alpha \bigr]
= \int_{t-\alpha}^{t+\alpha} f_Z(x) \d{x}
\leq \frac{2\alpha}{\sqrt{2\pi}}
= \sqrt{\frac{2}{\pi}} \alpha.
\qedhere
\]
\end{proof}

\begin{proof}[Proof of \cref{thm:sum-anticoncentration}]
Without loss of generality, we can assume $\EE[X_i] = 0$ for all $i$ by absorbing the means into $t$.
By assumption we have that
\[
\hat{\sigma} = \sqrt{\sigma_1^2 + \cdots + \sigma_k^2} \geq \sigma \sqrt{k}
,\qquad 
\rho_1 + \cdots + \rho_k \leq k \rho.
\]
Let $Y = X / \hat{\sigma}$ with cumulative distribution $F_Y$ and let $t' = t / \hat{\sigma}$. 
Then, with $C$ denoting the constant from \cref{thm:berry_esseen}, we apply \cref{thm:berry_esseen}, \cref{thm:gaussian_anticoncentration}, and the bounds above to obtain
\begin{align*}
\PP\bigl[ |X-t| < \alpha \sigma \sqrt{k} \bigr]
& \leq \PP\bigl[ |X-t| < \alpha \hat{\sigma} \bigr]
\\&=\PP\bigl[ |Y-t'| < \alpha \bigr]
\\&= F_Y(t'+\alpha ) - F_Y(t'-\alpha)
\\& \leq \Phi(t'+\alpha) - \Phi(t'-\alpha) + \frac{2Ck\rho}{k^{3/2}\sigma^3}
\\&  \leq \sqrt{\frac{2}{\pi}} \alpha + \frac{2C\rho}{\sigma^3\sqrt{k}}.
\end{align*}
Since $(1-\sqrt{2/\pi}) > 0.2$, the result follows by the choice $k > 100 C^2\rho^2 / (\alpha^2 \sigma^6)$, and relabeling $C$.
\end{proof}

\begin{proof}[Proof of \cref{thm:gausian-inner-prod}]
Let $\vec {g}_\ell$ denote the $\ell$-th row of $\vec G$ and define $x_\ell = \vec{u}^\T\vec{g}_\ell$ and $y_\ell = \vec{v}^\T \vec{g}_\ell$.
In order to apply \cref{thm:sum-anticoncentration} to the sum
\begin{align*}
    \vec{x}^\T \vec{y} = \sum_{\ell=1}^k x_\ell y_\ell,
\end{align*}
 which has independent terms since the $\vec{g}_\ell$ are independent, we must obtain upper-and lower-bounds on the variance and an upper bound on the third centered absolute moment of each term.

We will first bound the variance.
By direct computation (see \cref{thm:gaussianqf} for a derivation),
\[
    \VV\bigl[x_\ell  y_\ell\bigr] 
    = \VV\bigl[\vec g_\ell^\T \vec{v}\vec{u}^\T \vec g_\ell \bigr]
    = \| \vec{v}\vec{u}^\T + \vec{u}\vec{v}^\T \|_\F^2/2
    \geq  \| \vec{u} \|_2^2 \|\vec{v}\|_2^2 .
\]
Here we have used that 
\[
\| \vec{v}\vec{u}^\T + \vec{u} \vec{v}^\T \|_\F^2
= 2 \|\vec{v}\vec{u}^\T \|_\F^2 + 2(\vec{v}^\T\vec{u})^2
\geq 2\|\vec{v}\|_2^2 \|\vec{u}\|_2^2.
\]
We now argue the third absolute moments are bounded. 
Note that  $x_\ell$ and $y_\ell$ are both normally distributed with mean zero and variance at most $1$ (since $\| \vec{u} \|_2 \le 1$ and $\| \vec{v} \|_2 \le 1$).
Then $x_{\ell}y_{\ell}$ is sub-exponential with constant width parameter \cite[Lemma 2.7.7]{vershynin_18}, and hence $\EE[ | x_\ell y_\ell - \EE[x_\ell y_\ell] |^3 ] \leq \rho$ for some $\rho$.

Hence, applying \cref{thm:sum-anticoncentration}, for any $\alpha > 0$, and provided $k > C \rho^2 / (\alpha^2 \| \vec{u} \|_2^6 \|\vec{v}\|_2^6)$, where $C$ is the constant from \cref{thm:sum-anticoncentration},
\begin{equation*}
    \PP\Bigl[  \big| \vec{x}^\T\vec{y} - t \big|  < \alpha\|\vec{u}\|_2 \| \vec{v} \|_2 \sqrt{k} \Bigr] 
    < \alpha.
\end{equation*}
Relabeling $C$ gives the result.
\end{proof}

\subsection{Other facts}

\begin{fact}\label{thm:gaussianqf}
For a matrix $\vec{A}\in\R^{d\times d}$, if $\vec{g}\sim\operatorname{Gaussian}(d,1)$, then $\VV[\vec{g}^\T\vec{A}\vec{g}] = \|\vec{A}+\vec{A}^\T\|_\F^2/2$.
\end{fact}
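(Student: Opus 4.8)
The plan is to reduce to a diagonal quadratic form using the orthogonal invariance of the standard Gaussian. First I would observe that a quadratic form depends only on the symmetric part of its matrix: writing $\vec{B} = (\vec{A}+\vec{A}^\T)/2$, we have $\vec{g}^\T\vec{A}\vec{g} = \vec{g}^\T\vec{B}\vec{g}$ for every $\vec{g}$, so it suffices to compute $\VV[\vec{g}^\T\vec{B}\vec{g}]$ and show it equals $2\|\vec{B}\|_\F^2$, since $\|\vec{B}\|_\F^2 = \|\vec{A}+\vec{A}^\T\|_\F^2/4$ and hence $2\|\vec{B}\|_\F^2 = \|\vec{A}+\vec{A}^\T\|_\F^2/2$.

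Next, since $\vec{B}$ is symmetric, I would diagonalize it as $\vec{B} = \vec{Q}\vec{\Lambda}\vec{Q}^\T$ with $\vec{Q}$ orthogonal and $\vec{\Lambda} = \diag(\lambda_1,\ldots,\lambda_d)$. Because $\vec{g}\sim\operatorname{Gaussian}(d,1)$ is rotationally invariant, $\vec{h} := \vec{Q}^\T\vec{g}$ is again a vector of $d$ iid standard normals, and $\vec{g}^\T\vec{B}\vec{g} = \sum_{i=1}^d \lambda_i h_i^2$. The terms $\lambda_i h_i^2$ are independent, so $\VV[\vec{g}^\T\vec{B}\vec{g}] = \sum_{i=1}^d \lambda_i^2\,\VV[h_i^2]$. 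Using the standard Gaussian moments $\EE[h_i^2]=1$ and $\EE[h_i^4]=3$ gives $\VV[h_i^2] = 2$, hence $\VV[\vec{g}^\T\vec{B}\vec{g}] = 2\sum_{i=1}^d\lambda_i^2 = 2\|\vec{\Lambda}\|_\F^2 = 2\|\vec{B}\|_\F^2$, which yields the claim after substituting $\|\vec{B}\|_\F^2 = \|\vec{A}+\vec{A}^\T\|_\F^2/4$.

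There is no real obstacle here; the only points needing a line of justification are the rotational invariance of the standard Gaussian (so the change of variables is legitimate) and the fourth-moment computation $\EE[h_i^4]=3$. An alternative route, if one prefers to avoid diagonalization, is to expand $\EE[(\vec{g}^\T\vec{A}\vec{g})^2] = \sum_{i,j,k,l} [\vec{A}]_{i,j}[\vec{A}]_{k,l}\,\EE[g_ig_jg_kg_l]$ and apply Wick's formula $\EE[g_ig_jg_kg_l] = \delta_{ij}\delta_{kl}+\delta_{ik}\delta_{jl}+\delta_{il}\delta_{jk}$, together with $\EE[\vec{g}^\T\vec{A}\vec{g}] = \tr(\vec{A})$, to obtain $\VV[\vec{g}^\T\vec{A}\vec{g}] = \|\vec{A}\|_\F^2 + \tr(\vec{A}^2)$; one then checks directly that $\|\vec{A}+\vec{A}^\T\|_\F^2/2 = \|\vec{A}\|_\F^2 + \tr(\vec{A}^2)$ by expanding the left-hand side and using $\tr((\vec{A}^\T)^2) = \tr(\vec{A}^2)$ and $\tr(\vec{A}\vec{A}^\T) = \|\vec{A}\|_\F^2$.
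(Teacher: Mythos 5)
Your argument is correct and follows essentially the same route as the paper's: reduce to the symmetric part, diagonalize, use the orthogonal invariance of the standard Gaussian, and sum the variances of the resulting independent scaled $\chi^2_1$ terms. You spell out the variance computation $\VV[h_i^2]=2$ a bit more explicitly and also sketch a Wick-formula alternative, but the core proof is identical to the paper's.
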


\begin{proof}
    Since $\vec{g}^\T \vec{A}\vec{g} = \vec{g}^\T (\vec{A}^\T + \vec{A}) \vec{g}/2$, without loss of generality we can assume $\vec{A}$ is symmetric.
    Let $\vec{A} = \vec{U}\vec{\Lambda}\vec{U}^\T$ be the eigendecomposition of $\vec{A}$, then $\vec{h} = \vec{U}^\T\vec{g}\sim \operatorname{Gaussian}(d,1)$ and $\vec{h}^\T\vec{\Lambda}\vec{h}$ is a linear-combination of independent Chi-squared random variables with one degree of freedom (and variance 2). The result follows since $\|\vec{A}\|_\F^2 = \|\vec{\Lambda}\|_\F^2$.
\end{proof}

\begin{fact}\label{thm:Aexpectednorm}
For $r,d\geq 1$, suppose $\vec{G}\sim\operatorname{Gaussian}(r,d)$. 
Then 
\[  
    \EE\big[ \| \vec{I} \circ \vec{G}^\T\vec{G}  \|_\F^2 \big]  =d(2r+r^2),
    \qquad
    \EE\big[ \| \vec{G}^\T\vec{G} - \vec{I} \circ \vec{G}^\T\vec{G}  \|_\F^2 \big] = (d^2-d)r.
\]
\end{fact}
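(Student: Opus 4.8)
The plan is to reduce everything to an entrywise computation. Write $\vec{W} = \vec{G}^\T\vec{G}$, so that $[\vec{W}]_{i,j} = \sum_{\ell=1}^{r} [\vec{G}]_{\ell,i}[\vec{G}]_{\ell,j}$. Since $\|\vec{I}\circ\vec{W}\|_\F^2 = \sum_{i=1}^{d}[\vec{W}]_{i,i}^2$ and $\|\vec{W}-\vec{I}\circ\vec{W}\|_\F^2 = \sum_{i\neq j}[\vec{W}]_{i,j}^2$, by linearity of expectation it suffices to evaluate $\EE([\vec{W}]_{i,i}^2)$ for each $i$ and $\EE([\vec{W}]_{i,j}^2)$ for each ordered pair $i\neq j$.

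For the diagonal entries, $[\vec{W}]_{i,i} = \sum_{\ell=1}^{r}[\vec{G}]_{\ell,i}^2$ is a sum of $r$ independent squares of standard normal variables, i.e., a chi-squared random variable with $r$ degrees of freedom. Hence $\EE([\vec{W}]_{i,i}) = r$ and $\VV([\vec{W}]_{i,i}) = 2r$, so that $\EE([\vec{W}]_{i,i}^2) = r^2 + 2r$. Summing over $i = 1,\ldots,d$ gives the first identity, $\EE(\|\vec{I}\circ\vec{G}^\T\vec{G}\|_\F^2) = d(2r+r^2)$.

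For the off-diagonal entries, fix $i\neq j$. The $r$ terms $[\vec{G}]_{\ell,i}[\vec{G}]_{\ell,j}$, $\ell = 1,\ldots,r$, are independent since they depend on disjoint entries of $\vec{G}$; each has mean zero, and each has variance $\EE([\vec{G}]_{\ell,i}^2)\,\EE([\vec{G}]_{\ell,j}^2) = 1$ because $[\vec{G}]_{\ell,i}$ and $[\vec{G}]_{\ell,j}$ are independent $\mathcal{N}(0,1)$. Therefore $\EE([\vec{W}]_{i,j}^2) = \VV([\vec{W}]_{i,j}) = r$, and since there are $d^2-d$ ordered pairs $(i,j)$ with $i\neq j$, summing yields the second identity.

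I do not anticipate any real obstacle; the only points needing a moment's care are the mutual independence of the $r$ summands that make up each off-diagonal entry and the standard fact that a $\chi^2_r$ variable has variance $2r$, both elementary. One could alternatively apply \cref{thm:gaussianqf} row-by-row with $\vec{A} = \vec{e}_i\vec{e}_j^\T$, but the direct calculation above is shorter.
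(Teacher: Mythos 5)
Your proof is correct and takes the same route as the paper's: reduce to entrywise second moments, identify the diagonal entries as $\chi^2_r$ variables with mean $r$ and variance $2r$, compute the off-diagonal variance as $r$, and sum by linearity. You simply spell out the off-diagonal variance computation more explicitly than the paper (which states it directly as the variance of an inner product of independent Gaussian vectors), but this is the same argument.
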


\begin{proof}
    Write $\vec{A} = \vec{G}^\T\vec{G}$.
    The diagonal entries of $\vec{A}$ are distributed as Chi-squared random variables with $r$ degrees of freedom. These entries have mean $r$ and variance $2r$. 
    Likewise, the off-diagonal entries of $\vec{A}$ are distributed as the inner product of two independent standard normal Gaussian vectors of length $r$. 
    These entries therefore have mean zero and variance $r$.
    Therefore, for $i\neq j$, 
    \[
    \EE\bigl [ [\vec{A}]_{i,i}^2 \bigr] 
    =\VV\bigl [ [\vec{A}]_{i,i} \bigr]  + \EE\bigl [ [\vec{A}]_{i,i} \bigr]^2
    = 2r+r^2
    ,\qquad
    \EE\bigl [ [\vec{A}]_{i,j}^2 \bigr] =
    \VV\bigl [ [\vec{A}]_{i,j} \bigr] 
    = r.
    \]
    The result follows by linearity of expectation.
\end{proof}

\section{High probability algorithm}\label{sec:high-prob}

The bound \cref{thm:ub_main} for \cref{alg:main} has an unfavorable dependence $O(1/\delta)$ on the failure probability $\delta$. 
We will now use a high-dimensional version of the ``median trick'' to improve the dependence on the failure probability to logarithmic.

\begin{algorithm}[ht]
\caption{Fixed-sparse-matrix recovery (boosted)}\label{alg:main-boosted}
\fontsize{10}{14}\selectfont
\begin{algorithmic}[1]
\Procedure{boosted-fixed-sparse-matrix recovery}{$\vec{A},\vec{S},m,r$}
\State Run \cref{alg:main} independently $r$ times to get $\widetilde{\vec{A}}_1, \ldots, \widetilde{\vec{A}}_r$
\State $\forall i,j$: define $d_{i,j} = \| \widetilde{\vec{A}}_i - \widetilde{\vec{A}}_j \|_\F$ 
\State $\forall i$: define $B_i$ as the $\lceil r/2\rceil$-th smallest $d_{i,j}$ \Comment{$\big|\{ j\in[r] : d_{i,j} \leq B_i \}\big| = \lceil r/2\rceil$}
\State Compute $i^* = \operatornamewithlimits{argmin}_{i} B_i$ \Comment{$\forall i: B_{i^*} \leq B_{i}$}
\EndProcedure
\State \Return $\widetilde{\vec{A}}_{i^*}$ 
\end{algorithmic}
\end{algorithm}

\begin{theorem}
    Consider any $\vec{A} \in\mathbb{R}^{n\times d}$ and any $\vec{S}\in\{0,1\}^{n\times d}$ with at most $s$ nonzero entries per row. 
    For any $\varepsilon>0$ and $\delta \in (0,1)$, if $m\geq s+2$ and additionally
    \[
    m\geq s \left(\frac{90}{\varepsilon} + 1 \right) + 1 ,
    \quad
    r \geq 10\log\left(\frac{1}{\delta}\right),
    \]
    then, using $m\cdot r$ matrix-vector queries, \cref{alg:main-boosted} returns a matrix $\widetilde{\vec{A}}$ satisfying:
    \[
    \PP\bigl[ \| \vec{A} - \widetilde{\vec{A}} \|_{\F} < (1+\varepsilon) \| \vec{A} - \vec{S}\circ \vec{A} \|_\F \bigr] 
    \geq 1 - \delta.
    \]
\end{theorem}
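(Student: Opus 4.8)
The plan is to layer a high-dimensional version of the ``median trick'' on top of the second-moment guarantee of \cref{thm:ub_main}. Write $\vec{\mu} = \vec{S}\circ\vec{A}$ for the target matrix, $E = \|\vec{A}-\vec{S}\circ\vec{A}\|_\F$ for the optimal error, and set $\rho = \sqrt{2\varepsilon/9}\cdot E$. The hypothesis $m \geq s(90/\varepsilon+1)+1$ rearranges to $m-s-1 \geq 90s/\varepsilon$, so \cref{thm:ub_main} gives, for each of the $r$ independent runs $\widetilde{\vec{A}}_1,\dots,\widetilde{\vec{A}}_r$ of \cref{alg:main}, that $\EE[\|\widetilde{\vec{A}}_i - \vec{\mu}\|_\F^2] \leq \tfrac{s}{m-s-1}E^2 \leq \tfrac{\varepsilon}{90}E^2$. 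Call run $i$ \emph{good} if $\|\widetilde{\vec{A}}_i - \vec{\mu}\|_\F \leq \rho$; by Markov's inequality each run is good with probability at least $1 - \tfrac{\varepsilon E^2/90}{\rho^2} = \tfrac{19}{20}$, and these events are independent across runs.

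First I would bound the number of good runs. The number of bad runs is a sum of $r$ independent Bernoulli indicators with mean at most $r/20$, so a multiplicative Chernoff bound shows it is at least $r/2$ with probability at most $(e/10)^{r/2} \leq e^{-r/2}$, which is at most $\delta$ once $r \geq 10\log(1/\delta)$ (with plenty of slack). Condition henceforth on the complementary event, on which strictly more than $r/2$ of the runs are good.

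Next I would run the geometric-median argument against the selection rule of \cref{alg:main-boosted}. For any good index $i$, every good index $j$ (including $j=i$) satisfies $d_{i,j} = \|\widetilde{\vec{A}}_i - \widetilde{\vec{A}}_j\|_\F \leq \|\widetilde{\vec{A}}_i - \vec{\mu}\|_\F + \|\vec{\mu} - \widetilde{\vec{A}}_j\|_\F \leq 2\rho$; since there are at least $\lceil r/2\rceil$ good indices, this forces $B_i \leq 2\rho$, hence $B_{i^*} \leq 2\rho$. By the definition of $B_{i^*}$, at least $\lceil r/2\rceil$ indices $j$ have $d_{i^*,j}\leq 2\rho$, and since more than $r/2$ indices are good, these two index sets intersect; choosing a good $j$ in the intersection gives $\|\widetilde{\vec{A}}_{i^*} - \vec{\mu}\|_\F \leq d_{i^*,j} + \|\widetilde{\vec{A}}_j - \vec{\mu}\|_\F \leq 3\rho$. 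Since $\widetilde{\vec{A}}_{i^*} = \vec{S}\circ\widetilde{\vec{A}}_{i^*}$, the Pythagorean identity \cref{eqn:AAtilde-decomp} then yields $\|\vec{A}-\widetilde{\vec{A}}_{i^*}\|_\F^2 = E^2 + \|\widetilde{\vec{A}}_{i^*} - \vec{\mu}\|_\F^2 \leq E^2 + 9\rho^2 = (1+2\varepsilon)E^2 \leq (1+\varepsilon)^2 E^2$, which is the claimed bound on an event of probability at least $1-\delta$.

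The Markov and Chernoff estimates are routine; the step I expect to be the main obstacle is the geometric-median argument, where one has to track the $\lceil r/2\rceil$ thresholds in \cref{alg:main-boosted} against the number of good runs so that both pigeonhole steps genuinely go through, and then confirm that the two triangle inequalities compose to exactly the factor $3$ baked into the choice of $\rho$. One should also verify that the Chernoff constant is compatible with the clean requirement $r \geq 10\log(1/\delta)$ (it is, by a wide margin), and, if the strict inequality is to be taken literally, observe that for $E>0$ Markov's inequality in fact gives $\|\widetilde{\vec{A}}_i - \vec{\mu}\|_\F < \rho$ on good runs, which propagates strictness through the chain above.
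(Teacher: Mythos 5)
Your proposal is correct and mirrors the paper's argument step for step: a Markov bound on each run via Theorem~\ref{thm:ub_main} with $\rho^2 = \tilde\varepsilon E^2$ for $\tilde\varepsilon=\tfrac{2}{9}\varepsilon$, a Chernoff bound on the count of good runs, the two-pigeonhole analysis of the $B_i$ selection rule with the composed factor of $3$, and the final Pythagorean step via \cref{eqn:AAtilde-decomp}. The only cosmetic difference is the Chernoff estimate: you use the standard multiplicative form $(e\mu/a)^a$, whereas the paper cites an Angluin--Valiant bound; both clear the $r \geq 10\log(1/\delta)$ threshold with room to spare.
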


\begin{proof}
Let $\tilde \varepsilon = \frac29 \varepsilon$.
Define the set 
\[
P = \big \{ i \in [r] : \|\vec{S}\circ\vec{A} - \widetilde{\vec{A}}_i\|_\F^2 \leq \tilde\varepsilon \| \vec{A} - \vec{S}\circ \vec{A} \|_\F^2 \big \}.
\]
In \cref{eqn:ub_main:SAA} of the proof of \cref{thm:ub_main_prob}, we show that
\[
\PP\bigl[\|\vec{S}\circ\vec{A} - \widetilde{\vec{A}}_i\|_\F^2 \geq \varepsilon \|\vec{A} - \vec{S}\circ \vec{A}\  |_\F^2 \bigr]
\leq s/\big((m-s-1)\tilde\varepsilon\big).
\]
Hence, if $m\geq s((20/\tilde\varepsilon)+1) + 1$, then $\PP[i\in P] \geq \frac{19}{20}$.
Define the event
\[
E = \{ |P| > \lfloor r/2 \rfloor \}.
\]
A standard result \cite[Prop 2.4 (a)]{angluin_valiant_79} asserts that with $q=19/20$, 
\[
\sum_{k=0}^{\lfloor r/2\rfloor} \binom{r}{k} q^k (1-q)^{r-k}
\leq \exp\left( - \frac{rq}{2}\left(1-\frac{1}{2q}\right)^2\right)
= \exp\left(- \frac{81r}{760} \right).
\]
In addition, $\delta \geq \exp(-r/10)$ by definition of $r$.
Therefore, $\PP[E] \geq 1- \delta$.

We will condition on $E$ for the remainder of the proof.
By the triangle inequality, for any indices $i,j\in P$,
\[
d_{i,j} = \| \widetilde{\vec{A}}_i - \widetilde{\vec{A}}_j \|_\F
\leq \| \widetilde{\vec{A}}_i - \vec{S}\circ\vec{A} \|_\F + \|\vec{S}\circ\vec{A} - \widetilde{\vec{A}}_j \|_\F
\leq 2\sqrt{\varepsilon} \| \vec{A} - \vec{S}\circ \vec{A} \|_\F.
\]
Since $|P| > \lfloor r/2 \rfloor$, then for each $i\in P$, there are at least $\lfloor r/2 \rfloor$ indices $j$ satisfying $d_{i,j} \leq 2\sqrt{\varepsilon} \| \vec{A} - \vec{S}\circ \vec{A} \|_\F$. Thus by definition of $B_i$, 
\[
B_i \leq 2 \sqrt{\tilde\varepsilon} \| \vec{A} - \vec{S}\circ \vec{A} \|_\F.
\]
By definition of $B_{i^*}$, there are at least $\lceil r/2 \rceil$ indices $j$ for which
\[ 
\| \widetilde{\vec{A}}_{i^*} - \widetilde{\vec{A}}_{j} \|_\F \leq 2 \sqrt{\tilde\varepsilon} \| \vec{A} - \vec{S}\circ \vec{A} \|_\F
\]
Simultaneously, $|P|>\lfloor r/2 \rfloor$. 
Since $|P| + \lceil r/2\rceil > \lfloor r/2 \rfloor + \lceil r/2\rceil = r$, the pigeonhole principle ensures there is at least one $j^*$ for which $j^*\in P$ and
\[
\| \widetilde{\vec{A}}_{i^*} - \widetilde{\vec{A}}_{j^*} \|_\F \leq 2 \sqrt{\tilde\varepsilon} \| \vec{A} - \vec{S}\circ \vec{A} \|_\F.
\]
Applying the triangle inequality, we find that 
\[
\| \vec{S}\circ \vec{A} - \widetilde{\vec{A}}_{i^*} \|_\F 
\leq \| \vec{S}\circ \vec{A} - \widetilde{\vec{A}}_{j^*} \|_\F + \| \widetilde{\vec{A}}_{i^*} - \widetilde{\vec{A}}_{j^*} \|_\F
\leq 3 \sqrt{\tilde\varepsilon} \| \vec{A} - \vec{S}\circ \vec{A} \|_\F.
\]
which implies $ \|\vec{S}\circ \vec{A} - \widetilde{\vec{A}}_{i^*} \|_\F^2 \leq 9 \tilde\varepsilon \| \vec{A} - \vec{S}\circ \vec{A} \|_\F^2.$
Adding $\| \vec{A} - \vec{S}\circ \vec{A} \|_\F^2$ to both sides (see \cref{eqn:AAtilde-decomp}) and taking square roots,
\[ \|\vec{A} - \widetilde{\vec{A}}_{i^*} \|_\F \leq \sqrt{1+9 \tilde\varepsilon} \| \vec{A} - \vec{S}\circ \vec{A} \|_\F = \sqrt{1+2\varepsilon} \| \vec{A} - \vec{S}\circ \vec{A} \|_\F \leq (1 + \varepsilon) \| \vec{A} - \vec{S}\circ \vec{A} \|_\F.\qedhere\]
\end{proof}

\printbibliography[]

\end{document}